\definecolor{lightgray}{gray}{0.9}
\definecolor{dnrbl}{rgb}{0,0,0.5}
\definecolor{dnrgr}{rgb}{0,0.5,0}
\definecolor{dnrre}{rgb}{0.5,0,0}
\theoremstyle{plain}
\newtheorem{thm}{Theorem}[section]
\newtheorem{prop}[thm]{Proposition}
\newtheorem{obs}[thm]{Obervation}
\newtheorem{lem}[thm]{Lemma}
\newtheorem{defin}[thm]{Definition}
\numberwithin{equation}{thm}
\newcommand{\bpm}{\begin{pmatrix}}
\newcommand{\epm}{\end{pmatrix}}
\begin{document}

\title[Unperturbed Schelling Segregation in 2D and 3D]{Unperturbed Schelling segregation in two or three dimensions}

\author{George Barmpalias}
\address{{\bf George Barmpalias:} 
(1) State Key Lab of Computer Science, 
Institute of Software,
Chinese Academy of Sciences,
Beijing 100190,
China and (2) School of Mathematics, Statistics and Operations Research,
Victoria University, Wellington, New Zealand}
\email{barmpalias@gmail.com}
\urladdr{\href{http://barmpalias.net}{http://barmpalias.net}}

\author{Richard Elwes}
\address{{\bf Richard Elwes}, School of Mathematics,
University of Leeds, LS2 9JT Leeds, U.K.}
\email{r.elwes@gmail.com}
\urladdr{\href{http://richardelwes.co.uk}{http://richardelwes.co.uk}}

\author{Andy Lewis-Pye}
\address{{\bf Andy Lewis-Pye},  Mathematics, Columbia House, 
London School of Economics, WC2A 2AE, London, U.K.}
 \email{andy@aemlewis.co.uk}
 \urladdr{\href{http://aemlewis.co.uk/}{http://aemlewis.co.uk}}

 \date{\today}
\thanks{Authors are listed alphabetically. 
Barmpalias was supported by the 
1000 Talents Program for Young Scholars from the Chinese Government,
and the Chinese Academy of Sciences (CAS) President's International 
Fellowship Initiative No. 2010Y2GB03.
Additional support was received by
the CAS and the Institute of Software of the CAS.
Partial support was also received from a Marsden grant of New Zealand 
and the China Basic Research Program (973) grant No.~2014CB340302. 
Andy Lewis-Pye (previously  
Andrew Lewis) was supported by a Royal Society University Research Fellowship.
The authors thank Sandro Azaele for some helpful comments on an earlier draft of this paper.}

\begin{abstract}
Schelling's models of segregation, first described in 1969 \cite{TS1} are among the best known models of self-organising behaviour. Their original purpose was to identify mechanisms of urban racial segregation. But his models form part of a family which arises in statistical mechanics, neural networks, social science, and beyond, where populations of agents interact on networks.  Despite extensive study, \emph{unperturbed} Schelling models have largely resisted rigorous analysis, prior results generally focusing on variants in which noise is introduced into the dynamics, the resulting system being amenable to standard techniques from statistical mechanics or stochastic evolutionary game theory \cite{HY}. A series of recent papers \cite{BK,BEL1,BEL2}, has seen the first rigorous analyses of 1-dimensional unperturbed Schelling models, in an asymptotic framework largely unknown in statistical mechanics. Here we provide the first such analysis of 2- and 3-dimensional unperturbed models, establishing most of the phase diagram, and answering a challenge from \cite{BK}.  
\end{abstract}
\keywords{Schelling Segregation \and Algorithmic Game Theory \and Complex Systems \and Non-linear Dynamics \and
Ising model \and Spin Glass} 
\maketitle
\setcounter{tocdepth}{1}

\section{Introduction} 
Schelling's spatial proximity models of segregation \cite{TS1} provided strikingly simple examples the emergence of order from randomness. Subsequently, and particularly following Young \cite{HY}, a range of variations of his models have been studied by mathematicians, statistical physicists, computer scientists, social scientists, and others, making Schelling segregation one of the best known theoretical examples of self-organising behaviour (indeed, this was cited by the committee upon awarding Schelling the Nobel memorial prize for Economics in 2005).

Schelling models have the following general set-up: a population of individuals of two types are initially randomly distributed on a grid (or other graph). Each individual considers a certain region around it to be its \emph{neighbourhood}, and it has an \emph{intolerance} ($\tau$), expressing the proportion of its neighbours it requires to be of its own type in order to be happy.

Unhappy individuals then rearrange themselves in order to become happy. This may happen through a variety of possible dynamics. In Schelling's own work on 2-dimensional segregation, at each time step a selected unhappy node would move to a vacant position where it would be happy. However, much subsequent work on Schelling segregation has followed the breakthrough analysis of Young \cite{HY} who studied a 1-dimensional model which dispensed with vacancies, where agents instead rearranged themselves via pairwise swaps (at each time-step a randomly selected pair of unhappy agents of opposite types swap positions).

The lack of vacancies is arguably a closer approximation to life in modern cities without large numbers of uninhabited properties, and has proved itself more amenable to mathetatical analysis. Numerous variants of Young's model have subsequently been investigated, notably in the work of Zhang \cite{JZ1,JZ2,JZ3}, as well as the series of papers of which the current work is a part \cite{BK,BEL1,BEL2}.

Young's pairwise-swapping dynamic is known to statistical physicists as the \emph{Kawasaki dynamic}. An even simpler choice, adopted in \cite{BEL2} and the current paper, is the \emph{Glauber dynamic}, whereby at each time-step a single randomly selected unhappy agent switches type. Although simple, this is by no means unintuitive, the idea is that an agent who becomes unhappy moves out of the city, and is replaced by an agent of the opposite type. Thus our model is an open system, and we posit limitless pools of both types beyond the system. The Glauber dynamic also, we argue, brings the model closest to others in the same statistical-physical family, such as the Ising and Hopfield models (see \ref{subsection:science} below).

Ever since Schelling's original simulations with zinc and copper coins on a checkerboard back in 1969, the robustness of the phenomenon of \emph{segregation} has been recognised: as the process unfolds, large clusters emerge, consisting of individuals of only one type. This same phenomenon has appeared in diverse variants of the model, strikingly including models in which agents have an active preference for integration \cite{JZ2,PV}. Seen from a game theoretic perspective, this provides an example of a recurrent theme in Schelling's research -- especially as elaborated upon in \cite{TS2} --  that individuals acting according to their interests at the local level can produce global results which may be unexpected and  undesired by all. Understanding this phenomenon has been the goal of much subsequent research. 

Besides changing the dynamic, Young in \cite{HY}, made another highly influential innovation by introducing noise to the model. Much subsequent research has focussed on such \emph{perturbed} variants, which can be thought of as systems of temperature $T>0$, in which agents have a small but non-zero probability of acting against their own interests. Young used techniques from evolutionary game theory -- an analysis in terms of stochastically stable states -- to analyse such a model. These ideas were then substantially developed in the work of Zhang \cite{JZ1,JZ2,JZ3}. While the language used may differ from that of those of statistical physicists, the basic analysis is essentially equivalent: Zhang establishes a Boltzmann distribution for the set of configurations, and then his stochastically stable states correspond to ground states.

\subsection{{\bf Our contribution}}

The difficulty in analysing the \emph{unperturbed} (or temperature $T=0$) variants of the model stems from the large number of absorbing states for the underlying Markov process. Nevertheless, a breakthrough came in \cite{BK}, where Brandt, Immorlica, Kamath and Kleinberg used an  analysis of  locally defined stable configurations, combined with results of Wormald \cite{NW}, to provide the first rigorous analysis of an unperturbed 1-dimensional Schelling model, for the case $\tau=0.5$ under Kawasaki dynamics. In \cite{BEL1} the authors gave a more general analysis of the same model  for $\tau\in [0,1]$. In \cite{BEL2}, the authors then analysed variants of the model under Glauber dynamics (as well as variations thereupon), with an additional innovation: the two types of agent may have unequal intolerances $\tau_\alpha, \tau_\beta \in [0,1]$.  We extend the analysis of this phenomenon in the current paper. This modification of the model has justification in social research. See for example \cite{SSBK}, where it is found that black US citizens are happier in integrated neighbourhoods than their white compatriots. It has often been argued (including by the authors in \cite{BEL2} and Schelling himself in for example \cite{TS2}) that the simplicity of Schelling models make them relevant in areas far beyond the topic of racial segregation. Another application might be that of product adoption: which of two competing products customers choose. The peer-effect aspect of this phenomenon can be addressed with a Schelling model. In such a context, the asymmetry of $\tau_\alpha, \tau_\beta$ reflects the idea of `difficulty of engagement', which is argued in \cite{By} to offer insights into the obstacles that innovative ecological products face in terms of market penetration.

Throughout this series of papers, although the proofs vary widely in their details, the analysis hinges on the use of a spread of models across which the neighbourhood radius ($w$) size grows large (but remains small relative to the size of the whole system, $n$). This allows exact results to be obtained asymptotically, by careful probabilistic analysis of various structures occurring in the initial configuration. (It might be objected, at this point, that neighbourhoods of arbitrarily large size are not in the spirit of models of racial segregation. However, this should be seen as a mathematical device allowing the outcome to be predicted with arbitrarily high precision. In simulations of all models in this family, relatively small values of $w$ are generally sufficient for a clear outcome. See Figures \ref{fig:evolvingproc}, \ref{fig:evolvingproc2}, \ref{fig:evolvscenario} for examples.)

The current paper uses the same approach to extend these results to the two and three dimensional models. A mathematical account of unperturbed Schelling segregation in higher dimensions has been seen as a challenging problem for some time. As Brandt, Immorlica, Kamath and Kleinberg say in \cite{BK}:
 
\begin{quote}
{\footnotesize Finally, and most ambitiously, there is the open problem of rigorously analyzing the Schelling model in other graph structures including two-dimensional grids. Simulations of the Schelling model in two dimensions reveal beautiful and intricate patterns that are not well understood analytically. Perturbations of the model have been successfully analyzed using stochastic stability analysis [\dots] but the non-perturbed model has not been rigorously analyzed. Two-dimensional lattice models are almost always much more challenging than one-dimensional ones, and we suspect that to be the case with Schelling's segregation model. But it is a challenge worth undertaking: if one is to use the Schelling model to gain insight into the phenomenon of residential segregation, it is vital to understand its behavior on two-dimensional grids since they reflect the structure of so many residential neighborhoods in reality.}
\end{quote}   
 \begin{figure}
 \centering
 \includegraphics[width=1.3in]{./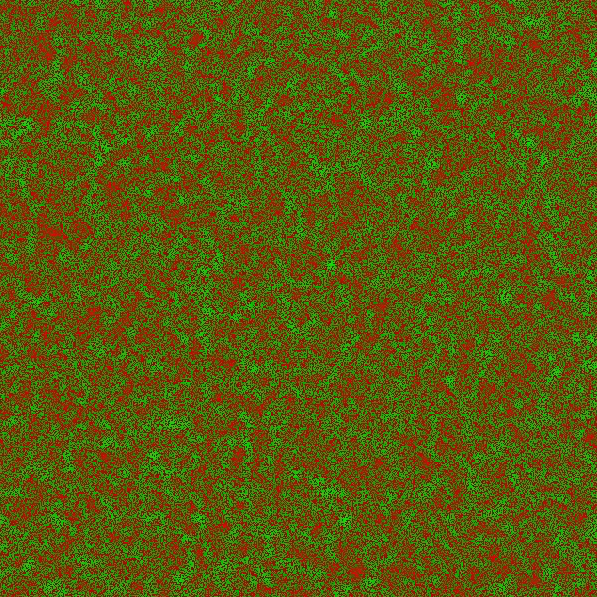}
 \includegraphics[width=1.3in]{./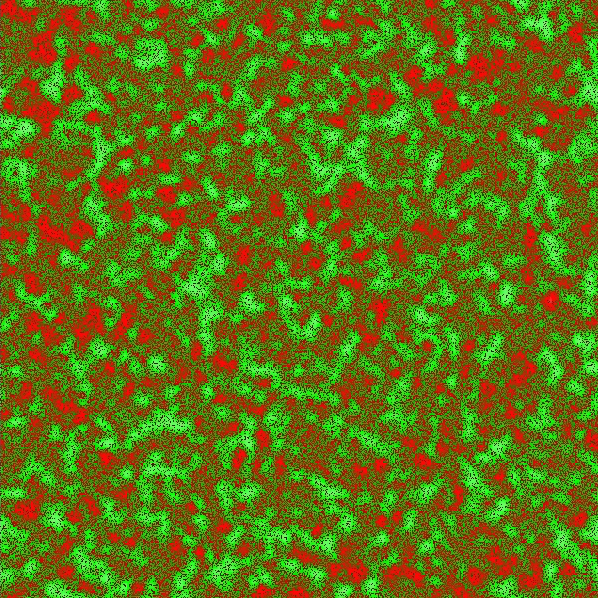}
  \includegraphics[width=1.3in]{./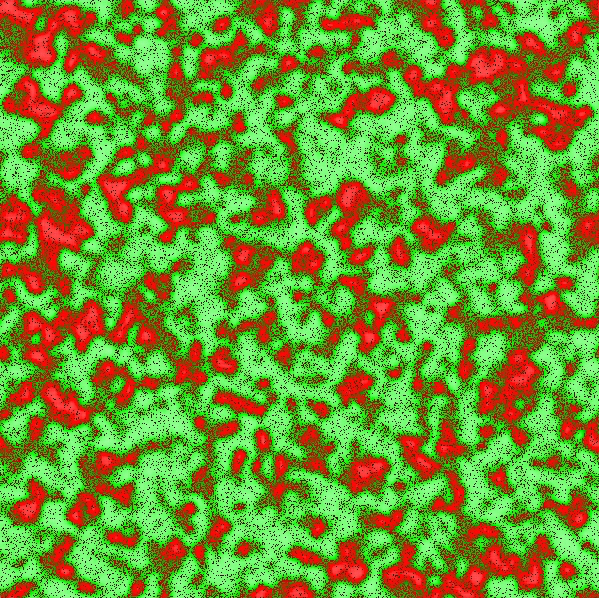}
   \includegraphics[width=1.3in]{./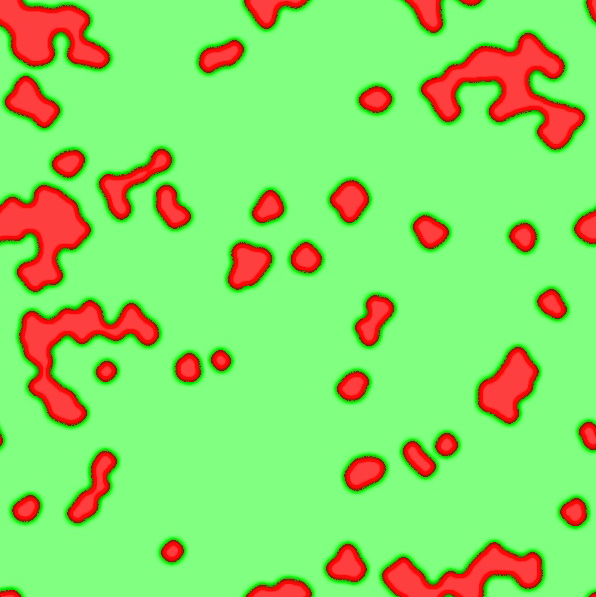}
\caption{The evolving process: $n=600,w=5,\tau_{\alpha}=0.44,\tau_{\beta}=0.42$.}
\label{fig:evolvingproc}
\end{figure}

Thus the aim of this paper is to answer this challenge, providing the first rigorous analysis of the two and three
dimensional unperturbed models of Schelling segregation. Our work  gives an almost comprehensive picture of the phase
diagram of such a model under Glauber dynamics with distinct intolerances $\tau_\alpha$ and $\tau_\beta$, revealing interesting phase transitions around certain thresholds, which are solutions to derivable equations and which we numerically approximate. Some grey areas around these thresholds remain, where the behaviour of the model is not proven rigorously, but these correspond to relatively small intervals.

Let us briefly compare the results of the current paper (namely Theorems \ref{main2d} and \ref{main3d} below) with those of \cite{BEL2} where the authors considered a 1-dimensional version of the same model. This can be summed up by comparing the two illustrations of Figure \ref{fig:phase2D} (representing the 2- and 3-dimensional models) with the final illustration within Figure 4 of \cite{BEL2} representing the 1-dimensional model. The overall picture is very similar; the differences are in the precise locations of the thresholds between staticity almost everywhere and takeover almost everywhere (around 0.3531 in the 1-dimensional case, 0.3652 in the 2-dimensional case, and 0.3897 in the 3-dimensional case), as well as the fact that in the 2- and 3- dimensional model we continue to have small grey areas around those thresholds where the outcome has not yet been rigorously established. This will be explained further in Section \ref{subsection:mainthms} below.

Following the online release of an earlier version of the current paper, Immorlica, Kleinberg, Lucier, and Zadomighaddam released an unpublished manuscript\footnote{N.\ Immorlica, R.\ Kleinberg, B.\ Lucier, M.\ Zadomighaddam Exponential Segregation in a Two-Dimensional Schelling Model with Tolerant Individuals, preprint, arXiv: 1511.02537.} which analyses a closely related, but interestingly distinct, two dimensional Schelling system. They show, under a continuous time Poisson clock dynamic, and with the hypothesis that agents of both types share the same intolerance which is close to, but less than $0.5$ (i.e. $\tau_\alpha = \tau_\beta = \frac{1 - \varepsilon}{2}$), that the size of the segregated regions which emerge will, with probability approaching 1, be of exponential size in the neighbourhood radius ($e^{\Theta(w^2)}$).

\subsection{{\bf The Schelling model across science}} \label{subsection:science}
While Schelling's ultimate concern was to understand some of the mechanisms underlying racial  segregation in the United States, such a model may also be applied to many contexts in which the `individuals' represent particles or agents of a sort which might more plausibly behave according to simple rules which govern their behaviour at the local level. In fact these model are rightly seen as fitting within a larger family  of discrete time models arising in diverse fields. Before we give a formal description of our model, we review two other important examples.

\subsubsection{Ising Models}
Many authors have pointed out direct links to the Ising model, used to analyse phase transitions in the context of statistical mechanics (see  \cite{SS,DM,PW,GVN,GO}). As this observation suggests, the dynamics can fruitfully be analysed by assigning an energy level to any given configuration, typically corresponding to some measure of the mixing of types. Typically, the Ising model is considered at temperature $T>0$, meaning that while transitions are more much likely to occur from configurations with higher energy to those with lower energy, they may occasionally also occur in the opposite direction. As discussed above, this corresponds to a perturbed Schelling model, and eases the analysis.

In models with an unperturbed dynamics (i.e. with $T=0$), which are the focus of the current paper, transitions which increase the energy never occur. Thus our results fit into the statistical physical theory of \emph{rapid cooling}.
 \begin{figure}
 \centering
 \includegraphics[width=1.3in]{./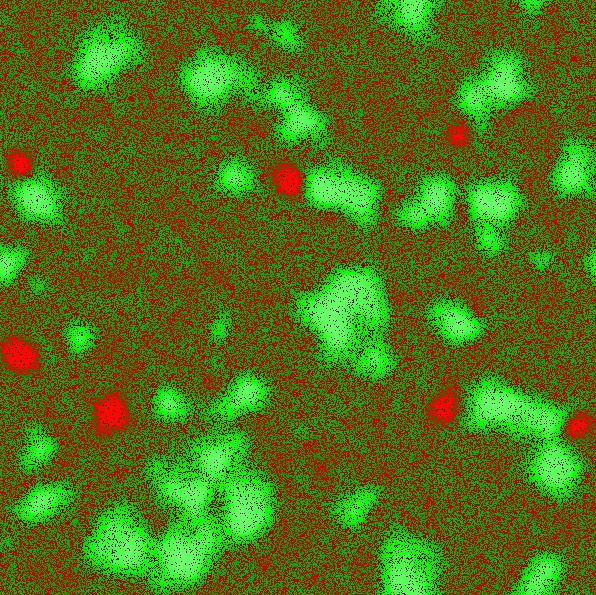}
 \includegraphics[width=1.3in]{./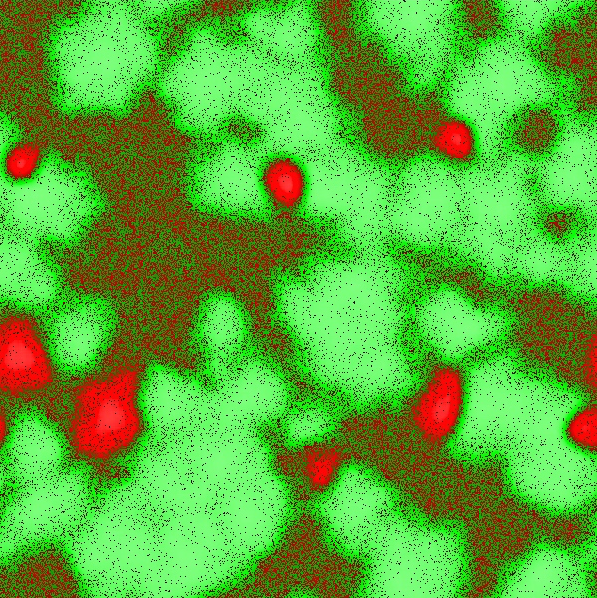}
  \includegraphics[width=1.3in]{./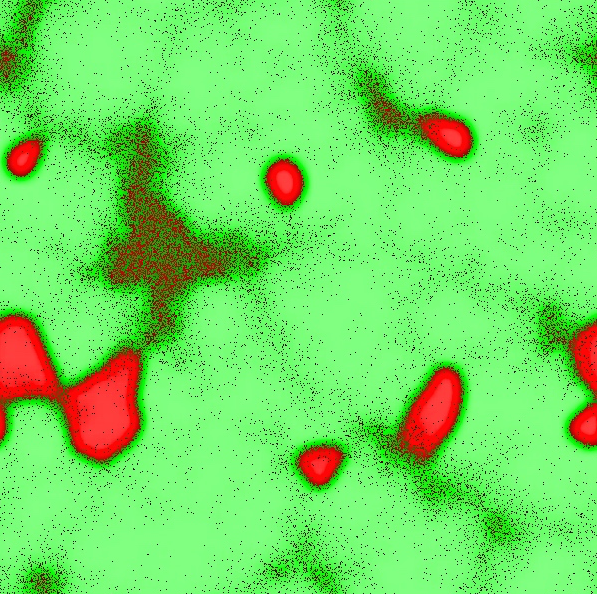}
   \includegraphics[width=1.3in]{./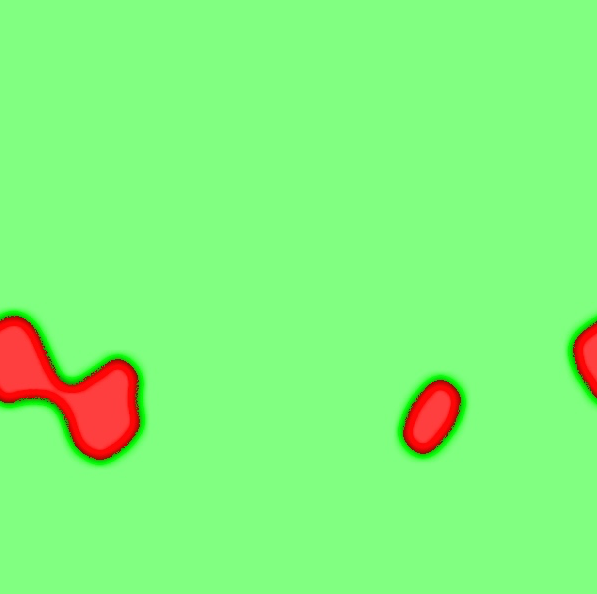}
\caption{The evolving process: $n=600,w=10,\tau_{\alpha}=0.44,\tau_{\beta}=0.42$.}
\label{fig:evolvingproc2}
\end{figure}

There is another important remark to make. In the current paper (as in \cite{BEL2}), we consider models in which individuals of different type may have unequal intolerances, $\tau_\alpha$ and $\tau_\beta$. As discussed in Section 1.3. of \cite{BEL2}, this equates to an Ising model with a non-standard external field whose strength differs for spins in state +1 spins compared to those in state -1. Assuming the usual notation for an Ising model, if we attempt to construct a Hamiltonian function in the standard way: 
$$H = - \sum_{||i-j||_{\infty} \leq w} \sigma_i \sigma_j + \sum_i \sigma_i K(\sigma_i)$$
we find that in general $|K(+1)| \neq |K(-1)|$, meaning that $H$ does not act as an energy function, since it may increase as well as decrease (even in the unperturbed case).

Thus, our results below can be understood in relation to an Ising model under rapid cooling, with range of interaction $w$ and a non-standard external field as described.

\subsubsection{Neural Networks} Hopfield networks provide an example of a model from the same family, in which the unperturbed dynamics are of particular interest. Introduced by Hopfield \cite{JH} in 1982 these are recurrent artificial neural networks, which have been much studied as a form of associative memory. Here, in fact, the unperturbed dynamics would seem to be an essential aspect of  the intended functionality---corresponding to any initial state the absorbing state reached within a finite number of steps of the dynamical process is the `memory' which the net is considered as associating with that input. Many authors have analysed the connections to spin-glass models \cite{AGS}. A major difference with the Schelling model though, is that the standard form of Hopfield network has no geometry: each threshold node takes inputs from all others. From a biological perspective, however, it is of interest to study models of sparse connectivity \cite{CAS,CN}.  The results of this paper can easily be adapted so as to correspond to a form of `local' Hopfield net, in which nodes arranged on a grid are only locally connected.

\subsubsection{Cascading Phenomena} Our own avenue into these questions came via connections to the study of cascading phenomena on networks as studied by Barab\'{a}si, Kleinberg and many others, a good introduction to which can be found in \cite{JK}.  The dynamics of the Schelling process are either identical or almost identical to versions of the Threshold Model used to model the flow of information, technology, behaviour, viruses, opinions etc.,  on large real world networks. Depending on context, both the perturbed and unperturbed dynamics may be of interest here. The principal difference with the Schelling model is typically underlying network, instead of a regular lattice, a random graph of some form (e.g. Watts-Strogatz graph \cite{WS}, or Barab\'{a}si-Albert preferential attachment \cite{BA}), which better reflects the clustering coefficient and degree distribution of the real world network in question. 

Given the results of this paper, an immediate question is whether the techniques developed here can be applied to understand emergent phenomena on such random structures.   Along these lines, Henry, Pra\l at and Zhang have described a simple but elegant model of network clustering \cite{HPZ}, inspired by Schelling segregation. Although an interesting system, it does not display the kind of involved threshold behaviour that one might expect; thus there may be more to be said here.

\begin{table}\begin{center}
\begingroup\setlength{\fboxsep}{7pt}
\colorbox{lightgray}{%
  \begin{tabular}{lcl}
 \hline\hline\\[1ex]
 \textrm{\footnotesize{\bf Static almost everywhere}}  &  & \parbox{6.8cm}{\textrm{\footnotesize With high probability a random node does not change its type throughout the process.}}  
 \\[1ex]\\[1ex]
\textrm{\footnotesize{\bf $\alpha$ (or $\beta$)  takeover almost everywhere}}  &  & \parbox{6.8cm}{\textrm{\footnotesize A random node has
high probability of being of type $\alpha$ ($\beta$ respectively)  in the final configuration.}}  
\\[1ex]\\[1ex]
\textrm{\footnotesize{\bf  $\alpha$  (or $\beta$) takeover totally}}  &  & \parbox{6.8cm}{\textrm{\footnotesize With high probability all 
nodes are of type $\alpha$ ($\beta$ respectively) in the final configuration.}}  \\[1ex]\\[1ex]
 \hline\hline\end{tabular}}\endgroup
  \vspace{0.4cm}
\caption{The three (or five) behaviours of the model.}
\label{ta:propofa}\end{center}
\end{table}

\subsection{{\bf The model}} \label{model def} We describe the two dimensional model first, and it is then simple to extend to three dimensions. For the two dimensional model we consider an $n\times n$ `grid' of nodes. To avoid boundary issues, we work on the flat torus $\mathbb{T}=[0,n-1)\times [0,n-1)$, i.e.\ $\mathbb{R}^2/\equiv$ where  $(x,y)\equiv (x+n,y) \equiv (x,y+n)$. Thus, in the context of discussing Euclidean coordinates, arithmetical operations are always performed modulo $n$. We let $\mathbb{R}_n$ and $\mathbb{N}_n$ denote the reals and the natural numbers modulo $n$ respectively.
Formally, then, `nodes' are  elements of $\mathbb{N}_n^2$. In the initial configuration (at stage $0$) each node is assigned one of two types, either $\alpha$ or $\beta$. In this initial configuration the types of nodes are independent and identically distributed, with each node having probability $0.5$ of being type $\alpha$.

For each node we also consider a certain \emph{neighbourhood}, with size specified by a parameter $w$: the neighbourhood of the node $\boldsymbol{u}$, denoted $\mathcal{N}(\boldsymbol{u})$  is the set of nodes $\boldsymbol{v}$ such that $||\boldsymbol{u}-\boldsymbol{v}||_{\infty} \leq w$. So far, then, we have specified two parameters for the model, $n$ and $w$. The two remaining parameters are the \emph{intolerance} levels $\tau_{\alpha},\tau_{\beta}\in [0,1]$. At any given stage we say that a node $\boldsymbol{u}$ of type $\alpha$ is \emph{happy} if the proportion of the nodes in $\mathcal{N}(\boldsymbol{u})$ which are of type $\alpha$ is at least $\tau_{\alpha}$ (note that $\boldsymbol{u}$ is included in its own neighbourhood). Similarly a node $\boldsymbol{u}$ of type $\beta$ is happy if the proportion of the nodes in $\mathcal{N}(\boldsymbol{u})$ which are of type $\beta$ is at least $\tau_{\beta}$. We say that a node is  \emph{hopeful} if it is not happy, but would be happy if it changed type (the types of all other nodes remaining unchanged).  
The dynamical process then unfolds as follows. At each stage $s+1$, we consider the set of hopeful nodes at the end of stage $s$. Taking each in turn we change the type, providing it is still hopeful given earlier changes during stage $s+1$. The process terminates when there are no remaining hopeful nodes.

 In the description of the dynamic process above, we have stated that during each stage $s+1$, each node which was hopeful at the end of stage $s$ is taken \emph{in turn}, the type then being changed if it still remains hopeful given earlier changes during stage $s+1$. For the sake of definiteness, we may consider the nodes to be lexicographically ordered, but this choice is essentially arbitrary, and it will be clear from what follows, that all results apply for any ordering of the nodes (since the particular form of the ordering is not used in the proof of any lemma). In fact the model will behave in an almost identical fashion if,  instead of taking the hopeful nodes in order at each stage, we instead pick hopeful nodes uniformly at random. This fact is easy verified by simulation, and our expectation is that it would not be difficult to use techniques applied in \cite{BEL2} to extend the results presented here to such a dynamics. The choice to consider nodes in order at each stage, simply means that certain aspects of the dynamics become easier to analyse. Roughly, this is  because we are immediately guaranteed that the process will unfold at the same rate at different locations on the grid,  rather than having to apply a probabilistic analysis to show that formal statements to this effect will be close to true most of the time. It \emph{is} important for the analysis, however, that the set of hopeful nodes considered during stage $s+1$, is fixed at the end of stage $s$ --  thereby avoiding `waves' of changes traversing large distances in a single stage.

It is easy to define an appropriate Lyapunov function, establishing that the process must eventually terminate. For example, one may consider the sum over all nodes $\boldsymbol{u}$,  of the number nodes of the same type as $\boldsymbol{u}$ in $\mathcal{N}(\boldsymbol{u})$.  
The three dimensional model is defined identically, except that nodes are now elements of  $\mathbb{N}_n^3$. Once again,  $\mathcal{N}(\boldsymbol{u})$  is the set of nodes $\boldsymbol{v}$ such that $||\boldsymbol{u}-\boldsymbol{v}||_{\infty} \leq w$, but obviously the $\ell_{\infty}$ metric is now understood  in three dimensions.

Note that the numbers of nodes of each type do not remain stable throughout the process. Although the initial expectation is 50\% of each type, during the process it is conceivable that one type may increase or decrease its population. In fact, intuitively one would expect that the type with the least intolerance level has an advantage over the other type, as it
is less susceptible to changes. As we are going to see in the next section, this is largely true but there are several caveats. For example if both intolerance levels are very low or very high, then one would expect the system to remain largely stable. In the following section we make precise statements about what can be rigorously proved, establishing  most of the phase diagram.
\begin{figure}
 \centering
\includegraphics[width=1.7in]{./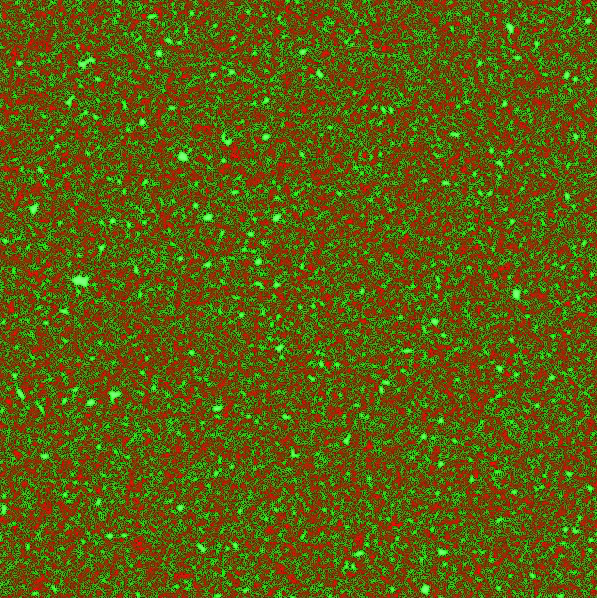} 
\includegraphics[width=1.7in]{./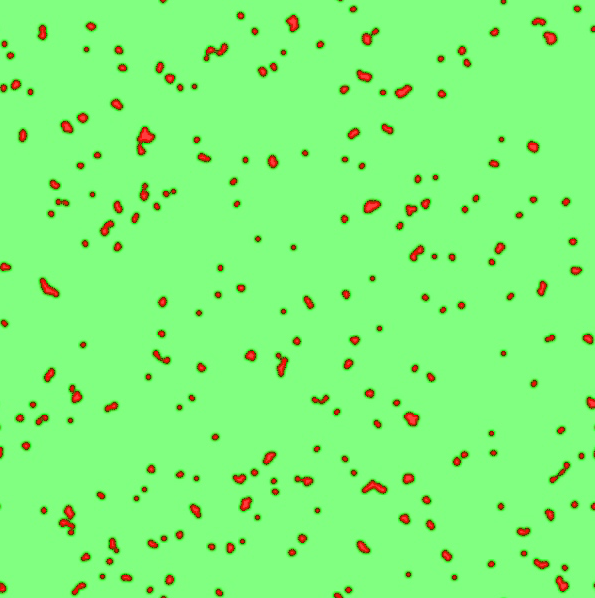}
\includegraphics[width=1.7in]{./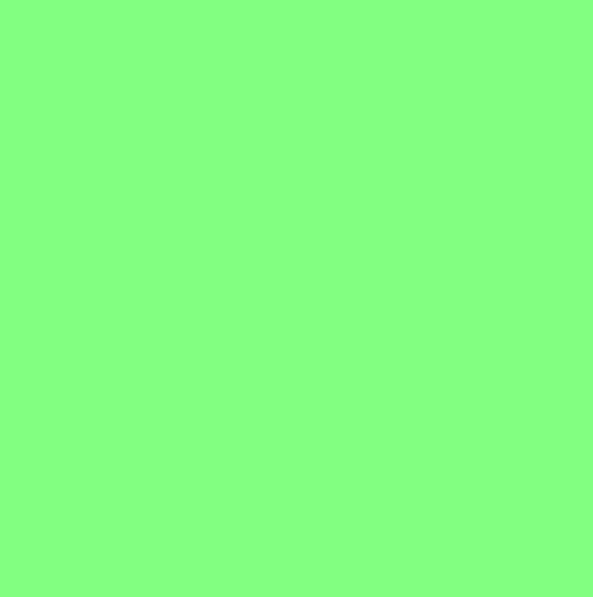}
\caption{Simulations showing three behaviors of the model, with $\alpha$ and $\beta$ nodes depicted as red and green respectively. From left to right we have ``static almost everywhere'',  ``$\beta$ takeover almost everywhere'' and ``$\beta$ takeover totally''. These are the final states of the model with $n=600$, $w=2$ and $(\tau_{\alpha}, \tau_{\beta})$ equal to (0.249,0.1), (0.4,0.3) and (0.6, 0.4) respectively.}
\label{fig:evolvscenario}
\end{figure}

\subsubsection{Behaviors of the model}
Our results are asymptotic in nature, and we will use the shorthand ``for $0 \ll w \ll n $'' to mean ``for all sufficiently large $w$, and all $n$ sufficiently large compared to $w$''. By a \emph{scenario} we mean the class of all instances of the model with fixed values of  $\tau_{\alpha}$, and $\tau_{\beta}$, but $w$ and $n$ varying. We will identify a scenario with its signature pair $(\tau_\alpha,\tau_\beta)$.

The phase diagram of the model will be expressed according to five behavior types, indicating the degree of growth of one population over the other that can occur by the time the model reaches its final state. Our results and analysis justify this natural choice of behavioural classification. Table \ref{ta:propofa}
shows the names we give to the different behaviors of the model, along with their corresponding descriptions. 
The precise definitions are given in Definition \ref{de:fivedifsc}.

\begin{defin}[Five different behaviors]\label{de:fivedifsc}
We list the three (or five, taking into account the two types) main ways in which a scenario can behave: 
\begin{enumerate}[(i)]
\item  A  scenario is \emph{static almost everywhere} if the following holds for every $\epsilon>0$: for $0\ll w \ll n$ a node $\boldsymbol{u}$ chosen uniformly at random has a probability $>1-\epsilon$ of having its type unchanged throughout the process.
\item  Type $\alpha$ ($\beta$) \emph{takes over almost everywhere} if the following holds for every $\epsilon>0$: for $0\ll w \ll n$ a node $\boldsymbol{u}$ chosen uniformly at random has a probability $>1-\epsilon$ of being of type $\alpha$ ($\beta$) in the final configuration.
\item Type  $\alpha$ ($\beta$) \emph{takes over totally} if the following holds for every $\epsilon>0$: for $0\ll w\ll n$ the probability that all nodes are of type $\alpha$ ($\beta$) in the final configuration exceeds $1-\epsilon$.
\end{enumerate}
\end{defin}

Figure \ref{fig:evolvscenario} shows the final states of the model for three different choices of signature 
$(\tau_{\alpha},\tau_{\beta})$. The first state indicates a ``static almost everywhere'' behavior, with only occasional
areas where nodes have switched to type $\beta$. The second state indicates an 
``almost everywhere $\beta$ takeover'', with the occasional small areas where nodes have succeeded in keeping the
$\alpha$-type. Finally the third state is a total $\beta$ takeover.
The signatures for the corresponding behaviors illustrate Theorem \ref{main2d} of the next section.

In general, when $\tau_{\alpha}<\tau_{\beta}$, we might expect the $\alpha$ type to be more persistent
in the process. In order to formally establish results expressing such expectations, we often need a stronger
notion of inequality, expressing that  (for example) $\tau_{\alpha}$ is {\em sufficiently less than} $\tau_{\beta}$.
In the case of the 2-dimensional model we denote this relation by $\tau_{\alpha} \lhd \tau_{\beta}$ (the
formal definition is given in Definition \ref{definition:suf2Dine}). 
In the case of the 3-dimensional model sufficient inequality is a slightly stronger condition, and is denoted by
$\tau_{\alpha} \leftslice \tau_{\beta}$ (the formal definition is given in Definition \ref{de:suff3D}). Note that 
`sufficiently less' has different meanings for  the 2-dimensional and 3-dimensional models. This will not
cause confusion as the two models are dealt with separately, and we use different symbols for the two notions of 
inequality. The second item of Figure \ref{fig:exfigaun} illustrates the relation between sufficient inequality for 
the  2-dimensional and 3-dimensional models with the usual inequality.

\subsubsection{Main theorems} \label{subsection:mainthms}
Our main result for the two dimensional model is the following classification of behaviors of the model,
according to the signature $(\tau_{\alpha}, \tau_{\beta})$.
The different cases correspond to the positions of the two intolerance levels with respect to each other
and other values like 0.5, 0.25 and a certain constant $\kappa \approxeq 0.365227$ that will be formally 
defined (as the unique solution of a certain equation) shortly.
Recall that $\lhd$ indicates the `sufficiently less' relation for the two dimensional model 
(see Definition \ref{definition:suf2Dine}). 
Given the symmetric roles of $\alpha$ and $\beta$, it is clear that the following theorem (and all those which follow)  also holds with the roles of $\alpha$ and $\beta$ reversed. From now on we shall not explicitly mention this fact, and will leave it to the reader to fill in these symmetries.  
\begin{figure}
 \centering
\includegraphics[width=2.9in]{./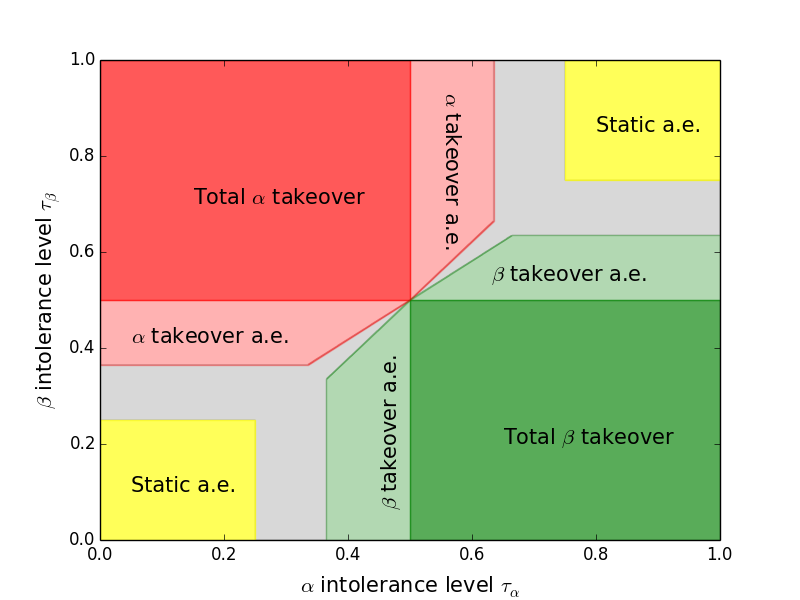} 
\includegraphics[width=2.9in]{./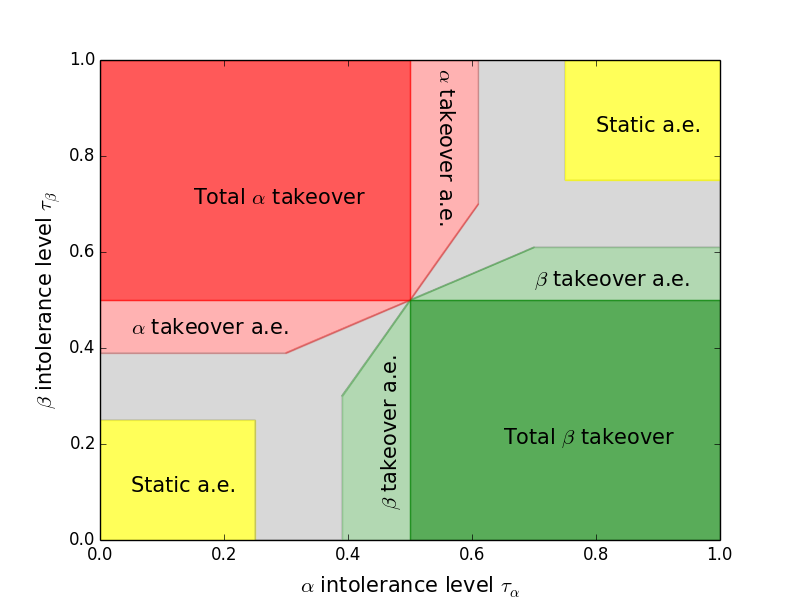} 
\caption{The phase diagrams of the two and the three dimensional models respectively. Note that in the latter we have a larger grey region, corresponding to behaviour which we have not been able to rigorously establish.}
\label{fig:phase2D}
\end{figure}

\begin{thm}[Behavior of the two dimensional model]  \label{main2d}
The behavior of the two dimensional model is dictated by the signature $(\tau_{\alpha}, \tau_{\beta})$ as follows:
\begin{enumerate}[\hspace{0.5cm}(a)]
\item  If $\tau_\alpha, \tau_\beta <0.25$ then the scenario is static almost everywhere. 
\item   If $\kappa <\tau_{\alpha}<0.5$ and $\tau_{\beta} \lhd \tau_{\alpha}$ then 
 $\beta$ takes over almost everywhere. 
\item For $\tau_{\beta}<0.5 <\tau_{\alpha}$,  $\beta$ takes over totally.
\item If $1-\kappa >\tau_{\alpha}>0.5$ and $\tau_{\beta} \vartriangleright \tau_{\alpha}$ 
then  $\alpha$ takes over almost everywhere. 
\item If $\tau_{\alpha},\tau_{\beta}>0.75$ then the scenario is static almost everywhere.
 \end{enumerate}
where $\kappa$ is the unique solution in $[0,1]$ to  
$(1-2\kappa)^{1-2\kappa} = 2^{2(1-\kappa)}\kappa^\kappa (1-\kappa)^{3(1-\kappa)}$ 
(numerically $\kappa \approxeq 0.365227$). 
 \end{thm}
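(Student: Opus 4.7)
The five cases separate naturally into three distinct arguments, and (d) is obtained from (b) by the evident symmetry $\alpha\leftrightarrow\beta$. The recurring tool is Chernoff-type concentration: for each fixed $\boldsymbol{u}$, the proportion of type-$\alpha$ initial neighbours in $\mathcal{N}(\boldsymbol{u})$ is within $\delta$ of $1/2$ except on an event of probability at most $\exp(-c\delta^2 w^2)$. Coupled with union bounds over polynomially-many nodes, this tightly controls the initial composition of large patches of the torus.

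The static cases (a) and (e) reduce to showing that the dynamics essentially never start. In (a), since $\tau_\alpha,\tau_\beta<0.25$, concentration places the proportion of own-type neighbours of every node in $[0.4,0.6]$ on an exponentially-likely event; applied across the $O(R^2 w^2)$ nodes in a ball of radius $Rw$ around a chosen $\boldsymbol{u}$, this shows that no node in that ball is ever made hopeful by the few rare exponentially-unlikely fluctuations, so $\boldsymbol{u}$ is undisturbed. In (e), with $\tau_\alpha,\tau_\beta>0.75$, almost every node is initially unhappy, but flipping type replaces an $\approx 0.5$-proportion by another $\approx 0.5<0.75$, so no node is hopeful and the process halts at the initial state. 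For (c), with $\tau_\beta<0.5<\tau_\alpha$, the same concentration shows essentially every $\alpha$-node is initially hopeful (unhappy since $\approx 0.5<\tau_\alpha$, but would be happy as $\beta$ since $\approx 0.5>\tau_\beta$); a short induction on the stage index, using that $\beta$-proportions only grow once all moves go $\alpha\to\beta$, eliminates every remaining $\alpha$ and yields total $\beta$-takeover.

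The substance lies in case (b): $\kappa<\tau_\alpha<0.5$ and $\tau_\beta\lhd\tau_\alpha$. Following the framework of \cite{BK,BEL1,BEL2}, the plan is to identify in the initial i.i.d.\ configuration locally-defined \emph{stable pockets} -- patterns of one type whose constituent nodes remain happy no matter what happens outside, so that the pattern is never overturned. Two complementary facts need to be established: (i)~stable $\beta$-pockets lie within bounded distance of every node with probability tending to $1$, and because $\tau_\beta\lhd\tau_\alpha$ they \emph{nucleate} outward, converting surrounding $\alpha$-nodes in successive waves; (ii)~stable $\alpha$-pockets exist with density exponentially small in $w^2$, so that a uniformly random node lies outside every $\alpha$-pocket with probability $\to 1$. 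The threshold $\kappa$ emerges in (ii): a viable $\alpha$-pocket requires its $\alpha$-density to exceed some critical level $p(\tau_\alpha)$ inside a prescribed geometric template, and estimating this probability via Stirling's formula produces an exponent combining an entropic term counting template shapes with a Bernoulli rate-function term $D(\,\cdot\,\|\,1/2)$ for the elevated density. Setting this exponent to zero and solving for the critical $\tau_\alpha$ yields precisely the transcendental equation $(1-2\kappa)^{1-2\kappa}=2^{2(1-\kappa)}\kappa^\kappa(1-\kappa)^{3(1-\kappa)}$ defining $\kappa$.

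The main obstacle is promoting these local statements into a global conclusion about a random node's final type, since locality of stability does not directly control what happens \emph{between} pockets. I would handle this by a multi-scale block argument on the torus: partition it into cells of side $\Theta(w)$, classify each as a $\beta$-nucleating cell, a bad cell (containing a stable $\alpha$-pocket or an anomalously $\alpha$-heavy region), or neutral; then show that bad cells form a vanishing-density set too sparse to support long $\alpha$-connected regions, while nucleating cells percolate and drive monotone outward conversion of neighbouring cells. Combined with the stability of the $\beta$-pockets, this proves that with probability $1-o(1)$ a random node is absorbed into the final $\beta$-region. Throughout, the gap $\tau_\beta\lhd\tau_\alpha$ is used twice: to guarantee that a newly-converted boundary $\beta$-node is immediately happy (making nucleation robust without assuming favourable later fluctuations), and to ensure a strict sign in the rate-function estimate underlying (ii). Case (d) then follows by the manifest type-symmetry.
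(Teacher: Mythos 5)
Your sketches of (a), (c) and (e) are in the right spirit but glide past the one conceptual ingredient the paper needs for both endpoints: the fact that a suitably chosen disc around the random node can be verified to be a \emph{stable structure} (every node in the disc has enough own-type nodes \emph{within the disc itself}) or, for (e), an \emph{intractable structure}. A concentration bound over the nodes in a ball tells you nobody there is initially hopeful, but it does not by itself forbid a wave of changes propagating inward from far away across many stages; the stable-structure argument is precisely the device that cuts that off, and it also deals with the geometrically awkward boundary nodes that have only about half their neighbourhood inside the ball (which is why the threshold in (a) is $0.25$ and not something closer to $0.5$). Similarly, in (c) your claim that "all moves go $\alpha\to\beta$" is false: for $\tau_\beta<0.5<\tau_\alpha$ a $\beta$-node with an $\alpha$-fraction well above $\max\{\tau_\alpha,1-\tau_\beta\}$ is simultaneously unhappy and hopeful, so $\beta\to\alpha$ flips do occur. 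The paper avoids this by exhibiting, with high probability, an all-$\beta$ firewall somewhere in the initial configuration and showing it grows deterministically to swallow the whole torus; that is what delivers the \emph{totally} in the conclusion.

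The substantive gap is in (b), and it is the nucleation mechanism. You propose that "stable $\beta$-pockets lie within bounded distance of every node with probability tending to $1$" and that these pockets "nucleate outward, converting surrounding $\alpha$-nodes in successive waves." This is not correct. A stable $\beta$-pocket in the initial i.i.d.\ configuration typically has $\beta$-density about $1/2$, and the $\alpha$-nodes on its boundary then see roughly a $1/2$ own-type fraction; since $\tau_\alpha<1/2$ they are happy and nothing grows. What actually nucleates in the paper is not a pre-existing $\beta$-stable pocket but a \emph{rare} $\alpha$-poor region (formalised as an $\mathtt{ruh}^\alpha$ event on an extended neighbourhood): the $\alpha$-nodes there are unhappy, they flip, this makes more $\alpha$-nodes unhappy, and the cascade produces an \emph{all}-$\beta$ firewall of macroscopic radius $r_\ast w$. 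Only an all-$\beta$ firewall forces its boundary $\alpha$-nodes to be unhappy (since they see barely above a $1/2$ fraction of their neighbourhood outside the disc). The threshold $\kappa$ then does not come from setting a single rate function to zero, as you suggest; it comes from \emph{comparing} two binomial tail probabilities — the probability of a triggering $\mathtt{ruh}^\alpha$ region against the probability of a blocking half-neighbourhood that is $\tau_\alpha$-dense in $\alpha$ — and demanding the former dominate by a factor $\zeta^w$. That ratio, via Stirling, is precisely what yields $(1-2\kappa)^{1-2\kappa}=2^{2(1-\kappa)}\kappa^\kappa(1-\kappa)^{3(1-\kappa)}$. Likewise, $\tau_\beta\lhd\tau_\alpha$ is used to make the triggering event dominate the probability of any \emph{unhappy $\beta$-node} appearing in the window, not to keep newly created boundary $\beta$-nodes happy (for that $\tau_\beta<0.5$ already suffices). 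Finally, once the right trigger event and blocking events are in hand, no multi-scale percolation is required: you simply choose a window around $\boldsymbol{u}_0$ of area tuned so that a trigger is almost surely present, while the far more unlikely blocking configurations are almost surely absent in a larger ($\Theta(w)$-inflated) window, and then the single firewall grows unimpeded until it covers $\boldsymbol{u}_0$. Your percolation picture adds machinery the problem does not need, and would in any case not go through without first correcting the nucleation step.

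A smaller point: (d) is not the "$\alpha\leftrightarrow\beta$" image of (b); that reversal produces a statement with $\kappa<\tau_\beta<0.5$. What (d) needs is the $\tau\mapsto1-\tau$ reflection, which corresponds to replacing stability by intractability and unhappiness-by-shortage with unhappiness-by-excess; the paper spells out the accompanying replacement events ($\underline{\mathtt{pn}},\underline{\mathtt{ruh}}$, etc.) because the symmetry, while simple, is not the trivial type-swap.
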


So if $\tau_{\alpha}$ and $\tau_{\beta}$ are both sufficiently small then most nodes remain unchanged throughout the process. If at least one of $\tau_{\alpha}$ and $\tau_{\beta}$ is reasonably large while remaining in the interval $(0,0.5)$,  however, then the situation changes dramatically. Here `reasonably large' means above the threshold $\kappa$.  While it may not be immediately obvious where the equation defining $\kappa$ comes from, it will be derived later by comparing the probabilities of certain structures in the initial configuration.

 Figures \ref{fig:evolvingproc} and \ref{fig:evolvingproc2} display the evolving process for simulations corresponding to the scenario of clause (b) of Theorem \ref{main2d}.\footnote{The C++ code for these simulations is available at http://barmpalias.net/schelcode.shtml.}  Here nodes of type $\alpha$ are depicted red, while those of type $\beta$ are green. The shade of red or green corresponds to the number of same-type nodes within $\mathcal{N}(\boldsymbol{u})$: the brighter the shade the more nodes there are of the same type as $\boldsymbol{u}$ in $\mathcal{N}(\boldsymbol{u})$. Since   $\tau_{\beta} \lhd \tau_{\alpha}$, most nodes are of type $\beta$ in the final configuration, with this proportion tending to 1 as $w\rightarrow \infty$. 
 
Using almost identical proofs, we can get slightly weaker results for the three dimensional model. 

\begin{thm}[Behavior of the three dimensional model]  \label{main3d}
The behavior of the three dimensional model is dictated by the signature $(\tau_{\alpha}, \tau_{\beta})$ as follows:
\begin{enumerate}[\hspace{0.5cm}(a)]
\item If $\tau_\alpha, \tau_\beta <0.25$ then the scenario is static almost everywhere. 
\item  If $\kappa_{\ast} <\tau_{\alpha}<0.5$ and $\tau_{\beta} \leftslice \tau_{\alpha}$ 
then  $\beta$ takes over almost everywhere.
\item For $\tau_{\beta}<0.5 <\tau_{\alpha}$,  then $\beta$ takes over totally.
\item If $1-\kappa_{\ast} >\tau_{\alpha}>0.5$ and $\tau_{\beta} \rightslice \tau_{\alpha}$ 
then  $\alpha$ takes over almost everywhere. 
\item   If $\tau_{\alpha},\tau_{\beta}>0.75$ then the scenario is static almost everywhere. 
\end{enumerate}
where $\kappa_{\ast}$ is the unique solution in $[0,1]$ to
$(1-2\kappa_{\ast})^{4(1-2\kappa_{\ast})} = 
2^{23-8\kappa_{\ast}}\kappa_{\ast}^{19\kappa_{\ast}} (1-\kappa_{\ast})^{27(1-\kappa_{\ast})}$
(numerically, $\kappa_{\ast} \approxeq 0.3897216$).
\end{thm}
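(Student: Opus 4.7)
My plan is to adapt the proof of Theorem \ref{main2d} to the three dimensional setting; the authors indicate that the two proofs are ``almost identical'', and I would re-use the whole strategy, modifying only the combinatorial counts to reflect the 3D geometry. Since clause (d) follows from (b) by the $\alpha \leftrightarrow \beta$ symmetry, I focus on (a), (b), (c) and (e).

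Clauses (a) and (e) are both ``nothing ever happens'' statements. In (a), since $\tau_\alpha, \tau_\beta < \tfrac14 < \tfrac12$, a Chernoff bound gives that for any single node the probability that its own-type fraction in its neighbourhood falls below $\tau$ is at most $\exp(-c w^3)$; a union bound over the $n^3$ nodes then yields that with high probability every node is happy at stage $0$, so the process terminates immediately. In (e), the symmetric Chernoff estimate shows that with $\tau_\alpha, \tau_\beta > \tfrac34$ no node can be happy in \emph{either} type, hence no node is hopeful and again the process is frozen at stage $0$. For (c), when $\tau_\beta < \tfrac12 < \tau_\alpha$, a Chernoff argument shows that with high probability every $\alpha$ node is unhappy (same-type fraction $\sim \tfrac12 < \tau_\alpha$) and hopeful (flipping gives $\beta$-fraction $\sim \tfrac12 > \tau_\beta$), whereas every $\beta$ node is happy; during the first stage the hopeful $\alpha$ nodes are processed in order, and since each $\alpha \to \beta$ flip only increases $\beta$-counts in overlapping neighbourhoods, the hopefulness of the remaining $\alpha$ nodes is preserved throughout the stage, producing a totally-$\beta$ configuration by the end of stage $1$.

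The real content is clause (b). Following the 2D scheme, I would introduce a notion of \emph{stable $\alpha$-enclave}: a bounded local configuration whose $\alpha$-density is so high that an induction on stages shows each interior node stays of type $\alpha$ at every stage, no matter what happens outside. Enclaves are built from nested cubes of side $O(w)$, and since $\tau_\alpha < \tfrac12$, the worst-case enemy configuration outside still leaves the interior above the $\tau_\alpha$ threshold. Complementarily, a cascade argument shows that outside such enclaves every $\alpha$ node eventually flips: seed-unhappy $\alpha$ nodes appear with positive density once $\tau_\alpha > \tfrac14$, and when a seed flips, its $\alpha$-neighbours become more unhappy, triggering an avalanche. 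The sufficient inequality $\tau_\beta \leftslice \tau_\alpha$ from Definition \ref{de:suff3D} provides exactly the slack needed to keep flipped nodes happy as the $\beta$-region spreads, so the cascade does not stall.

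The critical constant $\kappa_*$ is then the value at which the exponential rate governing the probability that a random node lies in a stable enclave changes sign. Applying Stirling's formula to the relevant multinomial probabilities should produce the equation of the theorem, with the factor $27 = 3^3$ in the exponent arising as the volume of a $3 \times 3 \times 3$ super-cube of neighbourhoods (the 3D analogue of the $3 \times 3$ arrangement that gives $3(1-\kappa)$ in 2D), and the remaining numerical exponents being read off from the volumes of the particular $\alpha$- and $\beta$-subregions of the extremal enclave. The step I expect to be the main obstacle is identifying this extremal enclave geometry: in three dimensions one must rule out enclaves more subtle than the concentric-cube construction that suffices in 2D, and show that any alternative structure has a strictly worse exponential rate, so that the concentric-cube case really realises $\kappa_*$. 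This extra subtlety is presumably also what forces the slightly larger grey region around the threshold that is visible in the 3D phase diagram of Figure \ref{fig:phase2D}.
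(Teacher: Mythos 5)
Your plan correctly identifies the high-level architecture (reuse the 2D proof, let (d) follow from (b) by symmetry, and recognise that (b) carries the real weight), but several of your individual arguments are wrong, in a way that matters.

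The central error, which affects (a), (c) and (e), is the union bound.  The asymptotic regime is $0 \ll w \ll n$, meaning $n$ may be taken \emph{arbitrarily} large compared to $w$: a Chernoff estimate of $\exp(-cw^3)$ per node beaten against $n^3$ nodes gives $n^3 \exp(-cw^3)$, which does \emph{not} tend to $0$ in this regime (take $n > \exp(w^3)$).  In fact the model has a positive density of initially unhappy nodes for any fixed signature, as the simulations in Figure~\ref{fig:evolvscenario} illustrate (the ``static almost everywhere'' panel still shows green clusters that have flipped).  Consequently your claim in (a)/(e) that the process is frozen at stage $0$ is false, and your claim in (c) that every $\alpha$ node is unhappy at stage $0$ is likewise false.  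This is precisely why the behavioural classes are \emph{almost everywhere} statements, not everywhere statements.  The paper's proofs of (a) and (e) are local: they show that a uniformly random node lies, with high probability, inside a disc/ball-shaped region that is simultaneously $\alpha$-stable and $\beta$-stable (for (a)), or $\alpha$-intractable and $\beta$-intractable (for (e)); such a node can never flip, whatever happens elsewhere.  No global statement about all $n^3$ nodes is made or needed.  For (c), the mechanism is opposite to what you describe: the abundance of nodes (not their rarity) guarantees that, once $n$ is large, a $\beta$-firewall of radius $r_\ast w$ exists \emph{somewhere} in the initial configuration with probability $\to 1$; because $\tau_\beta < 0.5 < \tau_\alpha$, such a firewall, wherever it is, must spread indefinitely until every node is of type $\beta$.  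The proof does not assert that segregation is complete after one stage.

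On clause (b), your ``stable enclave versus cascade'' picture is roughly the right heuristic, but your identification of $\kappa_\ast$ is not the one used in the paper.  In the paper, $\kappa_\ast$ does not arise as the threshold for the existence of stable $\alpha$-enclaves; it arises from a probability comparison in Lemma~\ref{zetathree}, namely the requirement that $\mathbf{P}(\mathtt{euh}^{\alpha}_\tau(\boldsymbol{u}))$ beat $\mathbf{P}(\mathtt{pn}^{\alpha}_{\gamma,\tau_\alpha}(\boldsymbol{u}))$ by an exponential factor $\zeta^w$, so that a suitable region $\mathcal{Q}$ can be chosen which contains a seed satisfying $\mathtt{euh}$ yet contains no node satisfying $\mathtt{pn}$, guaranteeing an unobstructed firewall spread.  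The exponents in $(1-2\kappa_\ast)^{4(1-2\kappa_\ast)} = 2^{23 - 8\kappa_\ast}\kappa_\ast^{19\kappa_\ast}(1-\kappa_\ast)^{27(1-\kappa_\ast)}$ come from volume ratios: the extended neighbourhood (side $\sim 3w$) has $\sim 27w^3$ nodes against $\sim 8w^3$ in $\mathcal{N}(\boldsymbol{u})$, and applying Stirling to $b(27w^3,\lceil\tau\cdot 27w^3\rceil)/b(\cdot,\cdot)$ produces $19 = 27 - 8$ and the other constants.  There is no ``extremal enclave geometry'' optimisation to be carried out; the paper deliberately takes a lazier definition of the extended neighbourhood in 3D (a full cube rather than a one-sided rectangle), conceding a larger $\kappa_\ast$ in exchange for a dramatically simpler analogue of Lemma~\ref{firewall}, in which an initial unhappy cube is converted to $\beta$ in one stage and then grows as a sphere.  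Your speculation that the larger grey region is forced by the need to rule out subtle enclaves is therefore incorrect; it is a deliberate trade-off for simplicity of the firewall-growth argument.
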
 

Note that our results for the three dimensional model are the same as for the two dimensional model 
except that the thresholds are slightly different (making the gaps in the phase diagram slightly larger).

\subsection*{{\bf Sufficient inequality for the 2D and 3D models}}
In this section we formally define the notions of sufficient inequality (denoted by $\lhd$ and $\leftslice$)
for the two dimensional and the three dimensional models respectively.
The particulars of the definition result from  comparing the probabilities of certain structures 
in the initial configuration, and the motivation behind the definition will become clear in Section \ref{section:two}. 

\begin{defin} [Sufficient inequality for 2D] \label{definition:suf2Dine}
We let $g(x,k)=x^{kx}(1-x)^{k(1-x)}$. For $\tau_0,\tau_1 \in (0,0.5)$ we say that $\tau_0$ is 
sufficiently less than $\tau_1$, denoted $\tau_0 \lhd \tau_1$,  if  $g(\tau_0,2)>2g(\tau_1,3)$. 
For $\tau_0,\tau_1 \in (0.5,1)$ we say that $\tau_0$ is sufficiently greater 
than $\tau_1$, denoted $\tau_0 \vartriangleright \tau_1$,  if  $1-\tau_0$ is sufficiently less than $1-\tau_1$. 
\end{defin}

\begin{figure}
 \centering
\includegraphics[width=3in]{./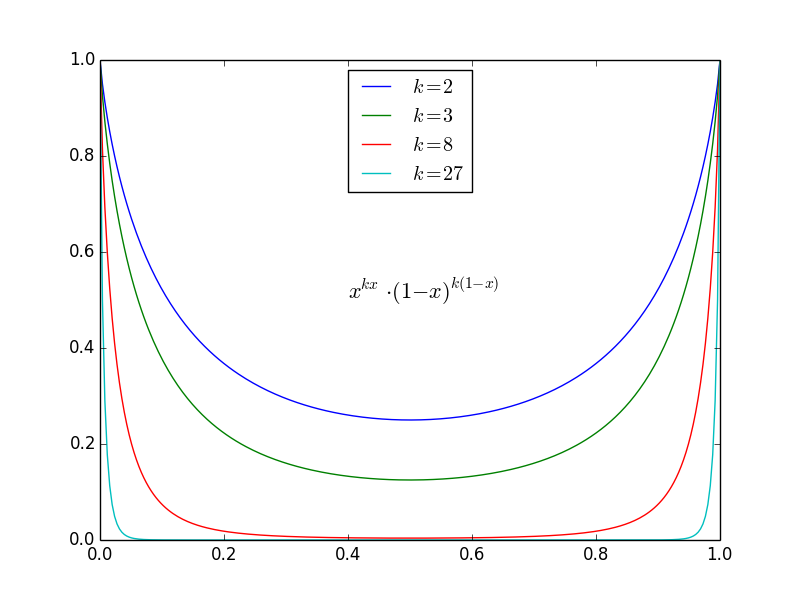} 
\includegraphics[width=3in]{./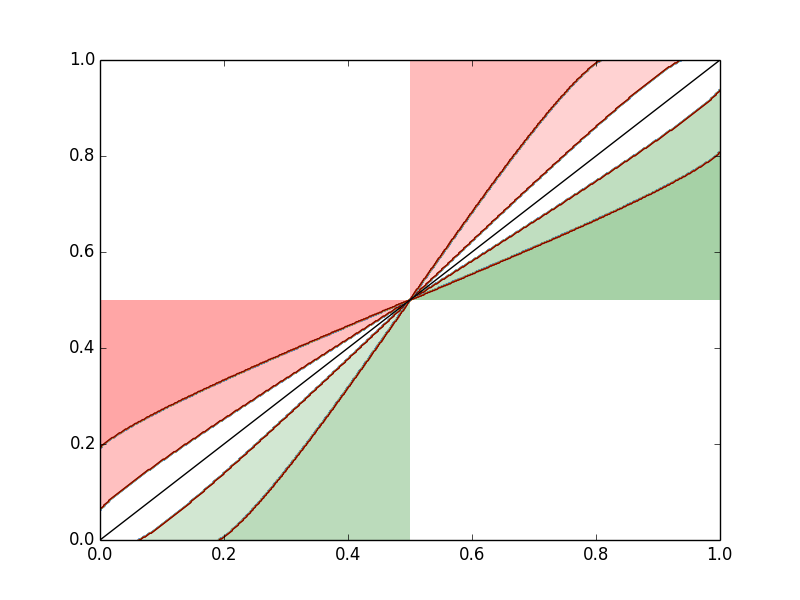} 
\caption{The function $g$ of Definitions \ref{definition:suf2Dine} and \ref{de:suff3D}, for the values of $k$ that we use.
In the second figure the horizontal axis is $\tau_\alpha$ and the perpendicular axis is $\tau_\beta$. Then the 
three regions show the relations $\tau_\beta \leftslice\tau_\alpha$, $\tau_\beta \lhd \tau_\alpha$,
 $\tau_\beta <\tau_\alpha$ in $[0,0.5]$ as well as
 $\tau_\beta \rightslice\tau_\alpha$, $\tau_\beta \rhd \tau_\alpha$,
 $\tau_\beta >\tau_\alpha$ in $[0.5, 1]$.}
\label{fig:exfigaun}
\end{figure}

Note that the function $g$ is decreasing in $[0, 0.5]$ and increasing in $[0.5,1]$ (see the first item of Figure
\ref{fig:exfigaun}). Hence $\lhd$ in $[0, 0.5]$ is a strengthening of $<$ and $\rhd$ in $[0.5,1]$ is a strengthening of $>$.
The requirement that  $\tau_{\beta} \lhd \tau_{\alpha}$ is not overly strong, as the second item of Figure \ref{fig:exfigaun}   shows. Table \ref{ta:prodsfa} is a chart that shows for various values of $\tau_\alpha$ in $[0,0.5]$ the amount by which
$\tau_\beta$ needs to be smaller in order for $\tau_\beta\lhd \tau_\alpha$ to hold (and similarly for $\leftslice$ of Definition \ref{de:suff3D}).

\begin{defin}[Sufficient inequality for 3D] \label{de:suff3D}
Again let $g(x,k)=x^{kx}(1-x)^{k(1-x)}$. For $\tau_0,\tau_1 \in (0,0.5)$ we say that $\tau_0$ is sufficiently less than
$\tau_1$, denoted $\tau_0 \leftslice\tau_1$,  if  $g(\tau_0,8)>2^{19}g(\tau_1,27)$. 
For $\tau_0,\tau_1 \in (0.5,1)$ we say that $\tau_0$ is sufficiently greater than $\tau_1$, 
denoted $\tau_0 \rightslice \tau_1$,  if  $1-\tau_0$ is sufficiently less than $1-\tau_1$.   
\end{defin}

The relation $\leftslice$ is significantly stronger 
than $\lhd$, as is made clear by  Figure \ref{fig:exfigaun} and Table \ref{ta:prodsfa}. 

\subsection{{\bf Further notation and terminology}} 
The variables $\boldsymbol{u}$ and $\boldsymbol{v}$ are used to range over nodes, while $r, x,y,z, \tau$  range over $\mathbb{R}$. We let  $\boldsymbol{x}$ and $\boldsymbol{y}$ range over either $\mathbb{R}^2$ or $\mathbb{R}^3$ depending on context, while   $a,b,c,d,e,i,j,$ $k,n,m,w$ range over $\mathbb{N}$. Often we shall use $x$ and $y$ (and $z$ when working in three dimensions) to specify the coordinates of a node $\boldsymbol{u}=(x,y)$ -- so while $x,y$ and $z$ range over $\mathbb{R}$, often they will be elements of $\mathbb{N}$. 

 In the context of discussing a given scenario $(\tau_{\alpha},\tau_{\beta})$, we say that an event $X$ occurs \emph{in the limit}, if for every $\epsilon>0$, $X$ occurs with probability $>1-\epsilon$ for $0\ll w \ll n$, i.e.\ $X$ holds with probability $>1-\epsilon$ whenever $w$ is sufficiently large and $n$ is sufficiently large compared to $w$, and where it is to be understood that how large one has to take $w$ and how large $n$ must be in comparison to $w$ may depend upon $\epsilon, \tau_{\alpha}$ and $\tau_{\beta}$. 
When working in two dimensions we let $C_{\boldsymbol{u},r}$ be the circle of radius $r$ centred at $\boldsymbol{u}$, and we let $C^{\dagger}_{\boldsymbol{u},r}$ be the disc of radius $r$ centred at $\boldsymbol{u}$, i.e.\ the set of all $\boldsymbol{x}\in \mathbb{T}$ with $|\boldsymbol{u}-\boldsymbol{x}|\leq r$ (where $|\boldsymbol{u}-\boldsymbol{x}|$ denotes the Euclidean distance between $\boldsymbol{u}$ and $\boldsymbol{x}$). Then we extend this notation to $A\subseteq \mathbb{T}$ as follows. If $A\subset C^{\dagger}_{\boldsymbol{u},r}$ then we define $A^{\dagger}$ \emph{with respect to} $C^{\dagger}_{\boldsymbol{u},r}$ to be the intersection of all convex subsets of $C^{\dagger}_{\boldsymbol{u},r}$ containing $A$. If $r<0.25n$ and $A\subset C^{\dagger}_{\boldsymbol{u},r}$, then we shall call $A$ \emph{local}. If $A$ is local then  $A^{\dagger}$ is the same with respect to all $C^{\dagger}_{\boldsymbol{u},r}$ such that $r<0.25n$ and  $A\subset C^{\dagger}_{\boldsymbol{u},r}$, and we shall suppress mention of $C^{\dagger}_{\boldsymbol{u},r}$ in this case, referring simply to $A^{\dagger}$.  In the context of discussing $A$ which is local (such as $\mathcal{N}(\boldsymbol{u})$ for a node $\boldsymbol{u}$) and two nodes $\boldsymbol{u}=(x,y)$ and $\boldsymbol{v}=(x',y')$ in $A$, we say that $x\leq x'$ if there exists $r<0.5n$ with $x+r =x'$ (and similarly for $y$ and $y'$).  By an $m$-square we mean a subset of $\mathbb{N}_n^2$ of the form $[x,x+m) \times [y,y+m)$ for some $x,y\in \mathbb{N}$. By an $a\times b$ rectangle we mean a subset of $\mathbb{N}_n^2$ of the form $[x,x+a) \times [y,y+b)$ for some $x,y,a,b \in \mathbb{N}$.

\subsection{{\bf Overview of the proof}} \label{overview} At the start of Section \ref{section:two} we shall prove Theorem \ref{main2d} (a), which along with Theorem \ref{main2d} (e) is the easiest case and amounts to little more than some easy observations. The bulk of the work will then be in the remainder of Section \ref{section:two}, in which we prove  Theorem \ref{main2d} (b) and Theorem \ref{main2d} (c). With these in place, Theorem \ref{main2d} (d)  and  Theorem \ref{main2d} (e)  will follow from certain symmetry considerations, as described in Section \ref{upper half}.  In Section \ref{sec five} we describe how to modify the proofs for the two dimensional model, to give all of the corresponding theorems for the three dimensional model.  Throughout, certain proofs which are of a technical nature, and which the reader might profitably skip on a first reading, are deferred to Section \ref{defer}. In the remainder of this subsection, we outline some of the basic ideas behind the proof of Theorem \ref{main2d}. 

The key to the proof lies in considering certain local structures, which might exist in the initial configuration or else might develop during the dynamic process. Our first central notion is that of a \emph{stable} structure: 

\begin{defin}[Stable structures]
We say that a set of nodes $A$ is an $\alpha$-\emph{stable structure} if, for every $\boldsymbol{u}\in A$, there are at least $\tau_{\alpha}(2w+1)^2$ many $\alpha$ nodes in $\mathcal{N}(\boldsymbol{u}) \cap A$.  We define $\beta$-stable structures analogously.
\end{defin} 
The point of an $\alpha$-stable structure is this (as may be seen by induction on stages):  no node of type $\alpha$ which belongs to an $\alpha$-stable structure can ever change type. We shall also be interested in certain conditions on $\mathcal{N}(\boldsymbol{u})$ and variants of this neighbourhood:

\begin{itemize}
\item (Partial neighbourhoods.) Suppose that $\ell$ is a straight line passing through $\mathcal{N}^{\dagger}(\boldsymbol{u})$. Let $A_1$ be all those points in $\mathcal{N}^{\dagger}(\boldsymbol{u})$ on or above $\ell$, and let $A_2$ be all those points in $\mathcal{N}^{\dagger}(\boldsymbol{u})$ on or below $\ell$. If $\mu(A_i^{\dagger})=\gamma(2w+1)^2$ ($\mu$ denotes Lebesgue measure), then we call $A_i$ a $\gamma$-partial neighbourhood of $\boldsymbol{u}$, with defining line $\ell$.   
If $\boldsymbol{u}$ is an $\alpha$ node,  we let $\mathtt{pn}^{\alpha}_{\gamma,\tau}(\boldsymbol{u})$ be the event that there exists some $\gamma$-partial  neighbourhood of $\boldsymbol{u}$ which  contains at least $\tau(2w+1)^2$ many $\alpha$ nodes in the initial configuration (and similarly for $\beta$). We also let $\mathtt{pn}^{\alpha}_{\gamma,\tau}(\boldsymbol{u})[s]$ be the corresponding event for the end of stage $s$, rather than the initial configuration.\footnote{One should think of  $\mathtt{pn}$ as p-artial n-eighbourhood, $\mathtt{uh}$ as u-nh-appy, and $\mathtt{ruh}$ as r-ight-extended neighbourhood u-nh-appy.}

\begin{table}\begin{center}
\begingroup\setlength{\fboxsep}{7pt}
\colorbox{lightgray}{%
  \begin{tabular}{ccc}
 \hline\hline
 $\tau_{\alpha}$  & $\tau_{\beta} \lhd \tau_{\alpha}$ &  $\tau_{\beta} \leftslice \tau_{\alpha}$   
 \\[1ex]\hline
{\footnotesize  0.390000} & {\footnotesize 0.024443}& {\footnotesize 0.090076} \\
{\footnotesize  0.400000} & {\footnotesize 0.022266} & {\footnotesize 0.082213}\\
{\footnotesize 0.410000} & {\footnotesize 0.020076}& {\footnotesize 0.074255} \\
{\footnotesize 0.420000} & {\footnotesize 0.017874} & {\footnotesize 0.066211}\\[1ex]
 \hline\hline\end{tabular}
 \hspace{0.5cm}
  \begin{tabular}{ccc}
 \hline\hline
 $\tau_{\alpha}$  & $\tau_{\beta} \lhd \tau_{\alpha}$ &  $\tau_{\beta} \leftslice \tau_{\alpha}$   
 \\[1ex]\hline
{\footnotesize 0.430000} & {\footnotesize 0.015661} & {\footnotesize 0.058092}\\
{\footnotesize 0.450000} & {\footnotesize 0.011212} & {\footnotesize 0.041673}\\
{\footnotesize 0.470000} & {\footnotesize 0.006737}& {\footnotesize 0.025074} \\
{\footnotesize 0.490000} & {\footnotesize 0.002247}& {\footnotesize 0.008370} \\[1ex]
 \hline\hline\end{tabular}}
 \endgroup
  \vspace{0.4cm}
\caption{This chart shows, for various values of $\tau_{\alpha}$ in $[0,0.5]$, the minimum amount by which
$\tau_\beta$ needs to be less than $\tau_\alpha$ in order to have $\tau_{\beta} \lhd \tau_{\alpha}$ or
  $\tau_{\beta} \leftslice \tau_{\alpha}$ respectively.}
\label{ta:prodsfa}\end{center}
\end{table}

\item (Extended neighbourhoods.) We say that $\mathtt{uh}^{\alpha}_{\tau}(\boldsymbol{u})$ holds if there are strictly less than $\tau(2w+1)^2$ many $\alpha$ nodes in $\mathcal{N}(\boldsymbol{u})$ in the initial configuration (and again we let  $\mathtt{uh}^{\alpha}_{\tau}(\boldsymbol{u})[s]$ denote the corresponding event for the end of stage $s$). If  $\boldsymbol{u}=(x,y)$ then the \emph{right extended neighbourhood} of $\boldsymbol{u}$, denoted $\mathcal{N}^{\diamond}(\boldsymbol{u})$, is the set of nodes $(x',y')$ such that $x-w \leq x' \leq x+2w$ and $|y-y'|\leq w$. We say that  $\mathtt{ruh}^{\alpha}_{\tau}(\boldsymbol{u})$  holds if there are strictly less than $\tau(2w+1)(3w+1)$ many $\alpha$ nodes in $\mathcal{N}^{\diamond}(\boldsymbol{u})$ in the initial configuration.
\end{itemize}

So a $\gamma$-partial neighbourhood of $\boldsymbol{u}$ is a subset of $\mathcal{N}^{\dagger}(\boldsymbol{u})$ (of a particularly simple form) which has measure $\gamma(2w+1)^2$. If $\mathtt{pn}^{\alpha}_{\gamma,\tau_{\alpha}}(\boldsymbol{u})$ holds, then it is not only the case that $\boldsymbol{u}$ is happy -- it has some $\gamma$-partial neighbourhood which already contains enough $\alpha$ nodes for $\boldsymbol{u}$ to be happy. The right extended neighbourhood  $\mathcal{N}^{\diamond}(\boldsymbol{u})$ is a (particularly simple) superset of  $\mathcal{N}(\boldsymbol{u})$. While it is not strictly true that for $\tau<0.5$, $\mathtt{ruh}^{\alpha}_{\tau}(\boldsymbol{u})$ implies  $\mathtt{uh}^{\alpha}_{\tau}(\boldsymbol{u})$, it is not difficult to see that the former condition is less likely than the latter. 

While the notion of a stable structure might seem essentially passive, we can strengthen the notion to give a form of stable structure which will tend to grow over stages: 

\begin{defin}[Firewalls] \label{rast} 
If all nodes in $C^{\dagger}_{\boldsymbol{u},rw}$ are of type $\beta$ (at some stage) then this set of nodes is called a $\beta$-firewall of radius $rw$ centred at $\boldsymbol{u}$ (at that stage). We say $\boldsymbol{v}$ is on the \emph{outer boundary} of   $C^{\dagger}_{\boldsymbol{u},rw}$ if  $\boldsymbol{v}\notin C^{\dagger}_{\boldsymbol{u},rw}$ but has an immediate neighbour belonging to this set, i.e.\ there exists $\boldsymbol{v}'$ with $||\boldsymbol{v}-\boldsymbol{v}'||_{\infty} =1$ and $\boldsymbol{v}'\in C^{\dagger}_{\boldsymbol{u},rw}$. 
\end{defin}

Now suppose that $\tau_{\alpha},\tau_{\beta} <0.5 $, that $\gamma >0.5$ is fixed,  and that we choose some $r_{\ast}\in \mathbb{N}^+$ sufficiently large that, for $0\ll w \ll n$:
\begin{enumerate}[] 
\item $(\dagger_a)$ Any node $\boldsymbol{v}\in C^{\dagger}_{\boldsymbol{u},r_\ast w}$ satisfies the condition that $|\mathcal{N}(\boldsymbol{v}) \cap C^{\dagger}_{\boldsymbol{u},r_{\ast}w}| \geq \tau_{\beta}(2w+1)^2$.
\item $(\dagger_b)$ Any node $\boldsymbol{v}$ on the outer boundary of any $C^{\dagger}_{\boldsymbol{u},r_\ast w}$ has all of $\mathcal{N}(\boldsymbol{v})-C^{\dagger}_{\boldsymbol{u},r_\ast w}$ contained in some $\gamma$-partial neighbourhood of $\boldsymbol{v}$. 
\end{enumerate} 

We may now observe that if $r\geq r_{\ast}$ then for $0\ll w \ll n$: a) any $\beta$-firewall of radius $rw$ is a $\beta$-stable structure, and b)  any $\alpha$ node $\boldsymbol{v}$ on the outer boundary of a $\beta$-firewall of radius $rw$ at stage $s$ will be hopeful,  so long as  $\mathtt{pn}^{\alpha}_{\gamma,\tau_{\alpha}}(\boldsymbol{v})[s]$ does not hold.  For b), our choice of $r_\ast$ implies that $\boldsymbol{v}$  will be unhappy, and then since $\tau_{\alpha},\tau_{\beta}<0.5$ unhappiness automatically implies being hopeful. The point is this:

\begin{quote} 
A $\beta$-firewall of sufficient radius will grow over stages, so long as no node $\boldsymbol{v}$ on the outer boundary satisfies $\mathtt{pn}^{\alpha}_{\gamma,\tau_{\alpha}}(\boldsymbol{v})[s]$. 
\end{quote}

With these ideas in place, we can now sketch our approach to the  proof of Theorem \ref{main2d}. First of all suppose $\tau_{\alpha},\tau_{\beta}<0.25$ and consider Theorem \ref{main2d} (a). The basic idea is to show that for $0\ll w \ll n$,  a node $\boldsymbol{u}_0$ chosen uniformly at random very probably belongs to a structure which is both $\alpha$-stable and $\beta$-stable in the initial configuration. From our previous observations, it then follows that no nodes within this structure can ever change type. In fact a very simple choice of structure suffices -- if we fix $r$ which is sufficiently large then we can show that, for $0 \ll w \ll n$, $C^{\dagger}_{\boldsymbol{u}_0,rw}$ will very probably be such a stable structure. This follows because all nodes within the disc have (at least) close to half of their neighbourhood within the disc, and since (via an application of the weak law of large numbers) we can expect nodes within the disc to be quite evenly distributed. 

Now suppose that $\tau_{\alpha}>0.25$, $\tau_{\beta}<\tau_{\alpha}<0.5$ and consider Theorem \ref{main2d} (b). In this case it still holds that unhappy nodes can be expected to be rare in the initial configuration, and since $\tau_{\beta}<\tau_{\alpha}$ we have that unhappy $\beta$ nodes are less likely than unhappy $\alpha$ nodes. Taking $w$ large we have that unhappy $\beta$ nodes are \emph{much} less likely than unhappy $\alpha$ nodes. A naive approach might then proceed as follows. Choosing $\boldsymbol{u}_0$ uniformly at random, for $0\ll w \ll n$ we should be able to choose a large region around $\boldsymbol{u}_0$, $\mathcal{Q}$ say, which can be expected to contain unhappy $\alpha$ nodes but no unhappy $\beta$ nodes in the initial configuration. 
What then can be expected to occur in the vicinity of an unhappy $\alpha$ node $\boldsymbol{u}$ in the early stages? Well if  $\boldsymbol{u}$  changes type then this may cause other $\alpha$ nodes in $\mathcal{N}(\boldsymbol{u})$ to become unhappy. If these then change type then this may cause \emph{further} $\alpha$ nodes to become unhappy, and so on. In this manner a cascade of changes to type $\beta$ may emanate out from the (rare) initially unhappy $\alpha$ nodes, bringing about the formation of large $\beta$-firewalls. We might then look to establish that $\boldsymbol{u}_0$ very probably belongs to such a $\beta$-firewall  in the final configuration (and so must ultimately be of type $\beta$). There are two immediate difficulties with this initial plan, however. 

\begin{enumerate} 
\item First of all, it is not actually clear whether or not unhappy $\alpha$ nodes are likely to give rise to the formation of large $\beta$-firewalls. It is for \emph{this} reason that we consider the events $\mathtt{ruh}^\alpha_{\tau_{\alpha}}(\boldsymbol{u})$ and the condition $\tau_{\beta}\lhd \tau_{\alpha}$. The fact that $\tau_{\beta}\lhd \tau_{\alpha}$ is precisely what we need in order to be able to  choose  $\mathcal{Q}$ so that in the limit  there will be  
 $\alpha$ nodes $\boldsymbol{u}$ in $\mathcal{Q}$ for which $\mathtt{ruh}^\alpha_{\tau_{\alpha}}(\boldsymbol{u})$ holds in the initial configuraiton-- a less probable condition than $\mathtt{uh}^\alpha_{\tau_{\alpha}}(\boldsymbol{u})$ -- while maintaining the absence of unhappy $\beta$ nodes in $\mathcal{Q}$. When $\mathtt{ruh}^\alpha_{\tau_{\alpha}}(\boldsymbol{u})$ holds we \emph{are} able to show that a $\beta$-firewall of sufficient radius around $\boldsymbol{u}$ very probably results. 
 
 \item The second problem is that we aren't  guaranteed that such a $\beta$-firewall will spread until $\alpha$-firewalls interfere with the process --  all we observed above was that  if $r\geq r_{\ast}$ (with $r_{\ast}$ chosen appropriately) then for $0\ll w \ll n$, any $\alpha$ node $\boldsymbol{v}$ on the outer boundary of a $\beta$-firewall of radius $rw$ at stage $s$ will be hopeful,  \emph{so long as}  $\mathtt{pn}^{\alpha}_{\gamma,\tau_{\alpha}}(\boldsymbol{v})[s]$ does not hold. This is where $\kappa$ comes into play.  From the fact that $\tau_{\alpha}>\kappa$, we are able to choose an appropriate $\gamma>0.5$ and show that one can choose $\mathcal{Q}$ so as to ensure the absence of nodes in $\mathcal{Q}$ for which $\mathtt{pn}^{\alpha}_{\gamma,\tau_{\alpha}}(\boldsymbol{v})$ holds (in the initial configuration). We can then establish that our large $\beta$-firewall, formed within $\mathcal{Q}$ in the early stages of the process, will spread until $\boldsymbol{u}_0$ is contained within it. 
 \end{enumerate}
 
 Proving Theorem \ref{main2d}  parts (c), (d) and (e) then only involves simple modifications of the proofs for parts (a) and (b). 

\section{The proof of clauses (a), (b) and (c) of Theorem \ref{main2d}} \label{section:two}
\subsubsection*{The proof of Theorem \ref{main2d} (a)}  
As mentioned previously, proving Theorem \ref{main2d} (a) requires only some simple observations.   The basic idea is to show that, under the hypothesis of the theorem,  a node chosen uniformly at random will very probably belong to both $\alpha$-stable and $\beta$-stable structures in the initial configuration, and can therefore never change type.  

Throughout this subsection we assume that $\tau_\alpha,\tau_\beta<\frac{1}{4}$  are fixed and we work for varying $w$ and $n$ such that $0\ll w\ll n$. Choose $\tau_1,\tau_2,\tau_3$ such that  $\frac{1}{4}>\tau_1>\tau_2> \tau_3 >  \mbox{max} \{ \tau_{\alpha},\tau_{\beta} \}$.
We consider a fixed node $\boldsymbol{u}_0$, which is chosen uniformly at random.   We first observe that all nodes in  a large disc centred at $\boldsymbol{u}_0$ (with radius some multiple of $w$), will have (at least) close to half of their neighbourhood in the disc. Then the aim is to show that for large $w$ we can expect the nodes of each type to be distributed very evenly in the disc, meaning that nodes in the disc  can be expected to have (at least) close to $0.25(2w+1)^2$ many $\alpha$ nodes within the intersection of the disc and their neighbourhood (and similarly for $\beta$). In the limit, then, this disc will be both an $\alpha$-stable and a $\beta$-stable structure.

\begin{lem}[Neighbourhood/disc intersections] \label{disc1}  Suppose $\tau'<0.5$. There exists $r$ such that, for $0\ll w \ll n$, all nodes $\boldsymbol{u}$ in $C^{\dagger}_{\boldsymbol{u}_0,rw}$ satisfy  $|\mathcal{N}(\boldsymbol{u}) \cap C^{\dagger}_{\boldsymbol{u}_0,rw}| >\tau'(2w+1)^2$. 
\end{lem}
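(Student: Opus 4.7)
The plan is to treat this as a purely geometric statement: the inequality makes no reference to the random types and involves no probability at all, so the task reduces to choosing $r$ large enough that, for every node $\boldsymbol{u}\in C^{\dagger}_{\boldsymbol{u}_0,rw}$, the $(2w+1)\times(2w+1)$ neighbourhood square $\mathcal{N}(\boldsymbol{u})$ meets $C^{\dagger}_{\boldsymbol{u}_0,rw}$ in more than a $\tau'$ fraction of its cells. The worst case is clearly when $\boldsymbol{u}$ sits on (or very near) the boundary circle: if $|\boldsymbol{u}-\boldsymbol{u}_0|\leq rw-\sqrt{2}\,w$ then $\mathcal{N}(\boldsymbol{u})\subseteq C^{\dagger}_{\boldsymbol{u}_0,rw}$ and the bound is trivial.

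For a boundary node $\boldsymbol{u}$, I would set up local coordinates with $\boldsymbol{u}$ at the origin and $\boldsymbol{u}_0$ at $(0,-rw)$. The upper boundary of the disc is then $y = -rw + \sqrt{(rw)^2 - x^2}$, which by a Taylor expansion sits below the tangent line $y=0$ by $O(x^2/(rw))$, hence by at most $O(w/r)$ throughout $|x|\leq w$. Integrating against the square $[-w,w]^2$ gives
\[
\mu\bigl(\mathcal{N}^{\dagger}(\boldsymbol{u})\cap C^{\dagger}_{\boldsymbol{u}_0,rw}\bigr)\;\geq\;\tfrac{1}{2}(2w+1)^2 - O\!\left(\tfrac{w^2}{r}\right).
\]
For any other $\boldsymbol{u}\in C^{\dagger}_{\boldsymbol{u}_0,rw}$ the intersection is at least this large by convexity of the disc, so the estimate is uniform in $\boldsymbol{u}$.

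The last step converts area into lattice-point count. Because the boundary of $\mathcal{N}^{\dagger}(\boldsymbol{u})\cap C^{\dagger}_{\boldsymbol{u}_0,rw}$ consists of straight edges of the neighbourhood square together with a single convex circular arc, a standard Gauss-circle-style estimate gives an error of $O(\mathrm{perimeter}) = O(w)$ between count and area --- negligible compared with $w^2$. Thus, fixing $r$ large enough that $\tfrac{1}{2} - c/r > \tau'$ for the appropriate constant $c$, and then taking $w$ large enough that the $O(w)$ lattice error is absorbed into the remaining slack, yields the desired bound uniformly over $\boldsymbol{u}\in C^{\dagger}_{\boldsymbol{u}_0,rw}$. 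The assumption $w\ll n$ enters only to guarantee $rw<n/2$, so that Euclidean distance on the torus coincides with the ordinary one in the region of interest.

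The only mildly subtle point --- really quite routine --- is that the lattice-count error must be uniform in $\boldsymbol{u}$, since the shape of $\mathcal{N}(\boldsymbol{u})\cap C^{\dagger}_{\boldsymbol{u}_0,rw}$ changes with $\boldsymbol{u}$. But the bounded curvature of the circular arc together with the Lipschitz boundary of the square make this a standard application.
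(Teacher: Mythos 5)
Your proof is correct and follows the same high-level strategy as the paper's: reduce to a purely geometric area estimate, then pass from Lebesgue measure to a count of lattice points. Where you differ is in the second step. You invoke a Gauss-circle-style bound, with an $O(\mathrm{perimeter})=O(w)$ error that must be absorbed by the slack $\tfrac12-c/r-\tau'$ once $w$ is large. The paper instead takes the convex hull $A^\dagger$ of the nodes in $\mathcal{N}(\boldsymbol{u})\cap C^\dagger_{\boldsymbol{u}_0,rw}$ --- a lattice polygon --- and applies Pick's theorem, which gives $|A| = i+b = \mu(A^\dagger) + \tfrac{b}{2}+1 > \mu(A^\dagger)$, i.e.\ a one-sided comparison between lattice count and area with no error term at all; the only thing left to check is that $\mu(A^\dagger)>\tau'(2w+1)^2$, which is exactly the area estimate you supply via the Taylor expansion of the arc (the paper asserts this step without detail, so your calculation is the more explicit of the two on that point). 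Both routes work; Pick's theorem is slightly cleaner because it removes any need to control the discretisation error uniformly. One small thing to tighten in your write-up: the claim that a non-boundary $\boldsymbol{u}$ has at least as large an intersection ``by convexity of the disc'' is a bit quick. It is true --- for any $\boldsymbol{u}$ with $|\boldsymbol{u}-\boldsymbol{u}_0|=d\le rw$, the half-plane through $\boldsymbol{u}$ perpendicular to the radius and shifted outward by $rw-d$ minus an $O(w/r)$ curvature correction lies inside the disc, which recovers the same $\tfrac12(2w+1)^2-O(w^2/r)$ lower bound uniformly in $d$ --- but you should spell out a one-line version of this rather than appealing to convexity alone, since the intersection area is not obviously monotone in $d$ as stated.
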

\begin{proof}
Given $\boldsymbol{u}\in C^{\dagger}_{\boldsymbol{u}_0,rw}$, let $A$ be the set of all nodes in $\mathcal{N}(\boldsymbol{u}) \cap C^{\dagger}_{\boldsymbol{u}_0,rw}$. Then the boundary of $A^{\dagger}$ is a lattice polygon.
By Pick's theorem $\mu(A^{\dagger})=i+\frac{b}{2}-1,$ where $\mu$ denotes Lebesgue measure, $i$ is the number of nodes in the interior of $A^{\dagger}$ and $b$ is the number of nodes on the boundary. For sufficiently large $r$ and for $0 \ll w \ll n$, $\mu(A^{\dagger})>\tau' (2w+1)^2$, so the result follows directly from Pick's theorem. \end{proof}  

For the remainder of this paragraph, fix $r$ as guaranteed by Lemma \ref{disc1} when $\tau'=2\tau_1$. Let $\boldsymbol{u}_0=(x,y)$ and for $\boldsymbol{u}\in C^{\dagger}_{\boldsymbol{u}_0,rw}$ let $\mathcal{N}_1(\boldsymbol{u})=\mathcal{N}(\boldsymbol{u}) \cap C^{\dagger}_{\boldsymbol{u}_0,rw}$.
Now, to establish the even distribution of nodes of each type within $C^{\dagger}_{\boldsymbol{u}_0,rw}$, we divide this region up into a fixed number (i.e.\ independent of $w$ and $n$) of \emph{small neighbourhoods}.  So, for some  $k\in \mathbb{N}^+$, let $w'= \lceil w/k \rceil$, and  consider small neighbourhoods of the form $[x +aw', x+(a+1)w') \times [y +bw', y+(b+1)w')$, where $a,b\in \mathbb{Z}$ and $n\gg aw', n\gg bw'$.  Let $\Pi$ be the set of these small neighbourhoods which lie entirely within $C^{\dagger}_{\boldsymbol{u}_0,rw}$. Given $\boldsymbol{u}\in C^{\dagger}_{\boldsymbol{u}_0,rw}$, let $\mathcal{N}_2(\boldsymbol{u})$ be the set of nodes in $\mathcal{N}_1(\boldsymbol{u})$ which belong to small neighbourhoods in $\Pi$ entirely contained in $\mathcal{N}^{\dagger}(\boldsymbol{u})$. 
For sufficiently large $k$, and for $0\ll w\ll n$, we have that $|\mathcal{N}_2(\boldsymbol{u})|>2\tau_2(2w+1)^2$ for all $\boldsymbol{u}\in C^{\dagger}_{\boldsymbol{u}_0,rw}$. So fix $k$ satisfying this condition.  Finally, choose $\epsilon>0$ such that $(1-\epsilon) \tau_2> \tau_3$. Then, since $k$ is fixed, it follows by the weak law of large numbers that in the limit, the proportion of the nodes in each small neighbourhood in $\Pi$ which are of type $\alpha$ is greater than $0.5(1-\epsilon)$ (and similarly for $\beta$). It then follows that, in the limit, all nodes $u\in   C^{\dagger}_{\boldsymbol{u}_0,rw}$ satisfy the condition that the number of $\alpha$ nodes in $\mathcal{N}_2(\boldsymbol{u})$ is greater than $\tau_3(2w+1)^2$, and so is greater than $\tau_{\alpha}(2w+1)^2$ (and similarly for $\beta$). Thus $C^{\dagger}_{\boldsymbol{u}_0,rw}$ will be both an $\alpha$-stable and a $\beta$-stable structure, as required.

\subsubsection*{The proof of clause (b) of Theorem \ref{main2d}} We suppose we are given a node $\boldsymbol{u}_0=(x_0,y_0)$, chosen uniformly at random. Recalling the discussion of Section \ref{overview},  our first aim is to establish a region $\mathcal{Q}$ containing $\boldsymbol{u}_0$,  of size which means that in the initial configuration (for $0\ll w \ll n$) we shall very likely find $\boldsymbol{u}\in \mathcal{Q}$ such that $\mathtt{ruh}^\alpha_{\tau_{\alpha}}(\boldsymbol{u})$ holds, while at the same time it is very unlikely that we will find any $\boldsymbol{u}$ such that $\mathtt{pn}^{\alpha}_{\gamma,\tau_{\alpha}}(\boldsymbol{u})$ or $ \mathtt{uh}^{\beta}_{\tau_{\beta}}(\boldsymbol{u})$ hold.
The following lemma is what we need in order to do this, and provides the motivation behind our definitions of $\kappa$ and $\tau_{\alpha} \lhd \tau_{\beta}$.  
The proof  appears in Section \ref{defer}, and simply consists of applying standard bounds for the tail of the binomial distribution together with multiple applications of Stirling's approximation. 
 
\begin{lem}[Properties of $\lhd$ and $\kappa$]  \label{zeta} 
Suppose that $\kappa<\tau_{\alpha}<0.5$ and $\tau_{\beta} \lhd \tau_{\alpha}$. Choose $\tau$ such that $\tau_{\beta}\lhd \tau<\tau_{\alpha}$ and $\kappa <\tau$. Then there exists $\gamma>\frac{1}{2}$ and $\zeta>1$ such that for $0\ll w\ll n$, and for a node $\boldsymbol{u}$ selected uniformly at random, $\textbf{P}(\mathtt{ruh}^\alpha_{\tau}(\boldsymbol{u}))>\zeta^w \cdot \textbf{P}(\mathtt{pn}^{\alpha}_{\gamma,\tau_{\alpha}}(\boldsymbol{u}))$ and   $\textbf{P}(\mathtt{ruh}^\alpha_{\tau}(\boldsymbol{u}))>\zeta^w \cdot  \textbf{P}(\mathtt{uh}^{\beta}_{\tau_{\beta}}(\boldsymbol{u}))$.\footnote{Here it is to be understood that how large one must take $w$, and how large $n$ must be compared to $w$, may depend on our particular choices of $\tau$ and $\gamma$ (as well as the given values $\tau_{\alpha}$ and $\tau_{\beta}$).} 
\end{lem}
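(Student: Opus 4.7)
The plan is to reduce both inequalities to estimates on binomial tail probabilities via Chernoff/Stirling bounds, and then verify that the hypotheses $\tau_{\beta}\lhd\tau$ and $\tau_{\alpha}>\kappa$ translate precisely into strict inequalities among the resulting exponential rates, with enough slack to absorb the sub-exponential $\zeta^{w}$ factor. The key preliminary observation is that $g(x,k)=2^{-kH(x)}$ where $H(x)=-x\log_{2}x-(1-x)\log_{2}(1-x)$ is binary entropy, so that $\tau_{\beta}\lhd\tau$ rewrites as $3H(\tau)-2H(\tau_{\beta})>1$, and a short manipulation (expanding $H(2\kappa)=-2\kappa-2\kappa\log_{2}\kappa-(1-2\kappa)\log_{2}(1-2\kappa)$) shows that the defining equation for $\kappa$ is equivalent to $3H(\kappa)-H(2\kappa)=2$.

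I would then apply standard Stirling bounds on $\binom{M}{pM}$: for any fixed $p\in(0,1)\setminus\{1/2\}$ and any region of $M$ nodes, they give matching upper and lower bounds $\textbf{P}[\mathrm{Bin}(M,1/2)\le pM]=\Theta\bigl(\mathrm{poly}(w)\bigr)\cdot 2^{-M(1-H(p))}$ (upper via Chernoff, lower from the peak term $\binom{M}{\lfloor pM\rfloor}2^{-M}$). This produces the lower bound $\textbf{P}(\mathtt{ruh}^{\alpha}_{\tau}(\boldsymbol{u}))\ge \mathrm{poly}(w)^{-1}\cdot 2^{-(2w+1)(3w+1)(1-H(\tau))}$, the upper bound $\textbf{P}(\mathtt{uh}^{\beta}_{\tau_{\beta}}(\boldsymbol{u}))\le \mathrm{poly}(w)\cdot 2^{-(2w+1)^{2}(1-H(\tau_{\beta}))}$, and, for each fixed $\gamma$-partial neighbourhood $A$, an upper bound $2^{-\gamma(2w+1)^{2}(1-H(\tau_{\alpha}/\gamma))}$ on the probability that $A$ contains at least $\tau_{\alpha}(2w+1)^{2}$ many $\alpha$-nodes. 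Since the distinct $\gamma$-partial neighbourhoods of $\boldsymbol{u}$ are parametrised, up to lattice equivalence, by their defining line and hence number only polynomially in $w$, a union bound yields the same upper bound (up to polynomial factors) for $\textbf{P}(\mathtt{pn}^{\alpha}_{\gamma,\tau_{\alpha}}(\boldsymbol{u}))$.

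To compare rates: for the $\mathtt{uh}^{\beta}$ inequality, the leading $w^{2}$-coefficient in $\log_{2}[\textbf{P}(\mathtt{ruh})/\textbf{P}(\mathtt{uh}^{\beta})]$ is $-6(1-H(\tau))+4(1-H(\tau_{\beta}))=2[3H(\tau)-2H(\tau_{\beta})-1]$, a strictly positive constant $2\delta_{2}>0$ by the restated hypothesis; hence this ratio is at least $2^{\delta_{2}w^{2}-O(w)}$, which beats $\zeta^{w}$ for any fixed $\zeta>1$ once $w$ is sufficiently large. For the $\mathtt{pn}$ inequality, the corresponding $w^{2}$-coefficient is $-6(1-H(\tau))+4\gamma(1-H(\tau_{\alpha}/\gamma))$; evaluated at the endpoint $\gamma=1/2$ this reduces to $2[3H(\tau)-H(2\tau_{\alpha})-2]$. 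Since $\tau\mapsto 3H(\tau)$ is strictly increasing on $(0,1/2)$ and (using that $H$ is strictly decreasing on $(1/2,1)$ while $2\tau_{\alpha}>2\kappa>1/2$) $\tau_{\alpha}\mapsto -H(2\tau_{\alpha})$ is strictly increasing on $(1/4,1/2)$, the hypotheses $\tau,\tau_{\alpha}>\kappa$ yield the strict inequality $3H(\tau)-H(2\tau_{\alpha})>3H(\kappa)-H(2\kappa)=2$. By continuity of the exponent in $\gamma$ this persists for some $\gamma>1/2$ sufficiently close to $1/2$, furnishing the required $\gamma$ together with a positive $w^{2}$-coefficient, and the first inequality then follows exactly as in the $\mathtt{uh}^{\beta}$ case. (A calculation of $dc_{2}/d\gamma=4(1+\log_{2}(1-\tau_{\alpha}/\gamma))$ confirms that $\gamma\to 1/2^{+}$ is the right extremal choice, since $c_{2}$ is decreasing in $\gamma$ on $(1/2,2\tau_{\alpha})$, and the alternative extreme $\gamma=1$ would force the too-strong condition $3H(\tau)-2H(\tau_{\alpha})>1$ which already fails at $\tau=\tau_{\alpha}$.)

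The main obstacle is essentially bookkeeping: keeping the implicit polynomial constants in the Stirling approximations independent of $w$; using continuity of $1-H(\cdot)$ to absorb the $O(1/M)$ error from rounding $\tau M$ to an integer; and verifying that the set of $\gamma$-partial neighbourhoods is only polynomially large (an $O(w^{4})$ bound easily suffices, since a half-plane cut of the lattice inside $\mathcal{N}^{\dagger}(\boldsymbol{u})$ is determined by the pair of lattice points on the boundary of the cut). The conceptual core---matching the unusual defining equation of $\kappa$ to the emergent critical condition $3H(\tau)-H(2\tau_{\alpha})=2$ at the extremal choice $\gamma\to 1/2^{+}$, and the analogous matching of $\lhd$ with $3H(\tau)-2H(\tau_{\beta})>1$---is a purely algebraic identity once $g(x,k)$ is recognised as $2^{-kH(x)}$.
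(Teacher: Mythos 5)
Your proposal is correct, and it arrives at the same conclusion the paper does via the same underlying mechanism (Stirling/Chernoff bounds giving matching exponential rates for binomial tails), but it is organised rather differently, in a way that is worth noting. The paper's proof factors the first inequality through two intermediate events: the lower-half neighbourhood event $\mathtt{ln}^{\alpha}_{\tau}$ and the rotated lower neighbourhood $\mathtt{rn}^{\alpha}_{\tau'}$. It first shows $\textbf{P}(\mathtt{ruh}^{\alpha}_{\tau})>\zeta^{w}\textbf{P}(\mathtt{ln}^{\alpha}_{\tau})$ using $\tau>\kappa$, then separately shows $\textbf{P}(\mathtt{ln}^{\alpha}_{\tau})>\textbf{P}(\mathtt{pn}^{\alpha}_{\gamma,\tau_{\alpha}})$ using only $\tau<\tau_{\alpha}$, with $\gamma=(\tau_{\alpha}+\tau')/(4\tau')$ and a law-of-large-numbers step relating a dense $\gamma$-partial neighbourhood to a dense rotated lower neighbourhood inside it. You instead compare $\mathtt{ruh}$ to $\mathtt{pn}$ directly, taking $\gamma\to\tfrac12^{+}$, and you establish the required exponent inequality from $\tau,\tau_{\alpha}>\kappa$. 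What your route buys is clarity in the algebra: recognising $g(x,k)=2^{-kH(x)}$ turns the definition of $\lhd$ into $3H(\tau)-2H(\tau_{\beta})>1$ and the defining equation of $\kappa$ into $3H(\kappa)-H(2\kappa)=2$, which makes it transparent why those specific constants appear, whereas the paper's constants emerge somewhat opaquely from a chain of Stirling manipulations (equations $(\fgeU_{4})$ and $(\fgeU_{5})$). What the paper's intermediate-object route buys is that each inequality in the chain needs only one hypothesis ($\tau>\kappa$ for step (1), $\tau<\tau_{\alpha}$ for step (3)), and these intermediate quantities ($\mathtt{ln}$, $\mathtt{rn}$) are reused conceptually elsewhere in the argument. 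Your derivative computation $dc_{2}/d\gamma=4(1+\log_{2}(1-\tau_{\alpha}/\gamma))$, confirming $\gamma\to\tfrac12^{+}$ is the extremal choice, is a nice addition not present in the paper. The one point where you should be careful to spell things out in a full write-up is the union bound over $\gamma$-partial neighbourhoods: the paper carries out an analogous count (at most $2(2w+1)^{2}$ distinct rotated lower neighbourhoods, via a rotation-sweep argument), while you gesture at an $O(w^{4})$ bound; both are fine since only a polynomial bound is needed, but the sweep argument is the cleaner way to make it rigorous.
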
 

For the remainder of this section we assume that  $\tau_{\alpha},\tau_{\beta}$ satisfying the conditions of Lemma \ref{zeta} are fixed. Once and for all we choose (any) $\tau$ such that $\tau_{\beta}\lhd \tau<\tau_{\alpha}$ and $\kappa <\tau$, and then we let $\gamma$ and $\zeta$ be as guaranteed by Lemma \ref{zeta}.  Given this choice of $\gamma$, we then let $r_{\ast}$ be sufficiently large that conditions $(\dagger_a)$ and $(\dagger_b)$ of Section \ref{overview} are satisfied. We also assume we are given $\epsilon_0>0$. Our  aim is to show that for $0 \ll w \ll n$, the probability $\boldsymbol{u}_0$ will be of type $\beta$ in the final configuration is $>1-\epsilon_0$.  The basic approach is to establish that in the limit $\boldsymbol{u}_0$ will be consumed by a $\beta$-firewall which originates with $\mathcal{Q}$. 

\subsubsection*{Defining $\mathcal{Q}$} To specify $\mathcal{Q}$  it will be useful to consider a grid of nodes with disjoint neighbourhoods, meaning that these neighbourhoods are independently distributed in the initial configuration: 
 
 \begin{defin}[Grid] 
 Recall that $\boldsymbol{u}_0=(x_0,y_0)$. We say that $\boldsymbol{v}=(x,y)$ is \emph{on the grid} if there exist $a,b\in \mathbb{Z}$ with $|a|,|b|<0.25n/(2w+1)$ and $x=x_0 +a(2w+1)$, $y=y_0 +b(2w+1)$. Suppose $m=2k+1$.  Then by the $m$-\emph{square on the grid centred at} $\boldsymbol{u}_0$ we mean the set of nodes $ \{ (x_0+a(2w+1), y_0+b(2w+1)):\ a,b\in [-k, +k] \}$.  
 \end{defin}

\noindent Now consider the initial configuration and  let $\pi(w)\in \mathbb{R}$ be such that for $\boldsymbol{u}$ chosen uniformly at random, $\textbf{P}(\mathtt{ruh}^\alpha_{\tau}(\boldsymbol{u}))=1/\pi(w)$.  Note that  $1/\pi(w)\rightarrow 0$ as $w\rightarrow \infty$, and recall that  $\left( 1- \frac{1}{x} \right)^{x} \rightarrow \frac{1}{e}$ as $x\rightarrow \infty$. Take $k_0$ such that  $e^{-k_0}\ll \epsilon_0$ (with $\epsilon_0$ as fixed previously), noting that $k_0$ does not depend on $w$. Take the least  odd number $m_0(w)\in \mathbb{N}^+$ such that $m_0(w)^2 \geq \pi(w)k_0$.    Let $\mathcal{Q}_0$ be the  $m_0(w)$-square on the grid centred at $\boldsymbol{u}_0$. Then the probability that not a single node $\boldsymbol{u}\in \mathcal{Q}_0$ satisfies $\mathtt{ruh}^\alpha_{\tau}(\boldsymbol{u})$ is at most: $$\left( 1- \frac{1}{\pi(w)} \right)^{\pi(w)k_0}$$ which is $ \ll \epsilon_0 $  for sufficiently large $w$. So far then, we have identified a set of nodes $\mathcal{Q}_0$, which has the property that the following will fail to be true with probability $\ll \epsilon_0$: there exists $\boldsymbol{u}\in \mathcal{Q}_0$ such that  $\mathtt{ruh}^{\alpha}_{\tau}(\boldsymbol{u})$ holds.

We need a little more from the vicinity of $\boldsymbol{u}_0$  that we are going to work with. In order to ensure that firewalls created inside $\mathcal{Q}^{\dagger}_0$ will grow to include $\boldsymbol{u}_0$, we need a much larger region in which we will very likely not have any nodes $\boldsymbol{u}$ for which    
either of $\mathtt{pn}_{\gamma,\tau_{\alpha}}^{\alpha}(\boldsymbol{u})$ or $\mathtt{uh}^{\beta}_{\tau_{\beta}}(\boldsymbol{u})$ hold. 
We let $\mathcal{Q}_1$ be the $(8w+1)m_0(w)$-square on the grid centred at $\boldsymbol{u}_0$. Then we define $\mathcal{Q} = \mathcal{Q}_1^{\dagger}$.  Let  $p$ be the probability  that any node $\boldsymbol{u}\in \mathcal{Q}$ satisfies $\mathtt{pn}^{\alpha}_{\gamma, \tau_{\alpha}}(\boldsymbol{u})$. Since the number of nodes in $\mathcal{Q}$ is $\leq \pi(w)k_0 (8w+1)^4$, it follows that $p$ is at most  $\pi(w)k_0 (8w+1)^4$ times the probability that  $\mathtt{pn}^{\alpha}_{\gamma, \tau_{\alpha}}(\boldsymbol{u})$ holds for $\boldsymbol{u}$ selected uniformly at random. So by Lemma \ref{zeta},  $p \ll \epsilon_0$ for sufficiently large $w$.  A similar argument holds for $\mathtt{uh}^{\beta}_{\tau_{\beta}}(\boldsymbol{u})$. Thus we conclude that for $0\ll w \ll n$, $\mathtt{Typical}\   \mathtt{vicinity}$  fails to hold with probability $\ll \epsilon_0$: 

\begin{defin}[Typical vicinity] \label{typ} 
We say that $\mathtt{Typical}\   \mathtt{vicinity}$  holds if all of (1)--(3) below are true: 
\begin{enumerate} 
\item $\mathtt{ruh}^{\alpha}_{\tau}(\boldsymbol{u}_0)$ does not hold; 
\item There exists $\boldsymbol{u}\in \mathcal{Q}_0$ such that $\mathtt{ruh}^{\alpha}_{\tau}(\boldsymbol{u})$ holds; 
\item There does not exist any $\boldsymbol{u}$ in $\mathcal{Q}$ such that either of  $\mathtt{pn}_{\gamma,\tau_{\alpha}}^{\alpha}(\boldsymbol{u})$ or $\mathtt{uh}^{\beta}_{\tau_{\beta}}(\boldsymbol{u})$ hold. 
\end{enumerate} 
\end{defin} 

Note that (1)--(3) from Definition \ref{typ} all refer to the initial configuration. 

\begin{defin} Consider the initial configuration. 
For $\tau'\in [0,1]$ we say that $\mathtt{ju}^{\alpha}_{\tau'}(\boldsymbol{u})$ holds if there are less than $\tau'(2w+1)^2$ many $\alpha$ nodes in $\mathcal{N}(\boldsymbol{u})$, but changing the type of 2w+1 $\beta$ nodes in $\mathcal{N}(\boldsymbol{u})$ would cause this not to be the case. We say that $\mathtt{rju}^{\alpha}_{\tau'}(\boldsymbol{u})$ holds if there are less than $\tau'(2w+1)(3w+1)$ many $\alpha$ nodes in $\mathcal{N}^{\diamond}(\boldsymbol{u})$, but  changing the type of 3w+1 $\beta$ nodes in $\mathcal{N}^{\diamond}(\boldsymbol{u})$ would cause this not to be the case.\footnote{Think of $\mathtt{ju}$ as j-ust u-nhappy and think of $\mathtt{rju}$ as r-ight extended neighbourhood j-ust u-nhappy.}    
\end{defin} 

Observe that if $\mathtt{Typical}\   \mathtt{vicinity}$  holds, then we are guaranteed the existence of $\boldsymbol{u}\in \mathcal{Q}_0^{\dagger}$ for which $\mathtt{rju}^{\alpha}_{\tau}(\boldsymbol{u})$ holds. 

\subsubsection*{Smoothness conditions} Our aim will be to show that for $0\ll w \ll n$, a large $\beta$-firewall can be expected to form in the early stages of the process in the vicinity of  $\boldsymbol{u}\in \mathcal{Q}_0^{\dagger}$ for which $\mathtt{rju}^{\alpha}_{\tau}(\boldsymbol{u})$ holds. To this end, however, we must first examine what can be expected in the initial configuration, from the vicinity of $\boldsymbol{u}$ which is chosen uniformly at random from amongst the nodes such that $\mathtt{rju}^{\alpha}_{\tau}(\boldsymbol{u})$ holds. In particular we are interested in the three regions:  $C^0(\boldsymbol{u}):=C_{\boldsymbol{u},r_\ast w}^{\dagger}$, $C^1(\boldsymbol{u}):= C_{\boldsymbol{u},2r_\ast w}^{\dagger}$ and $C^2(\boldsymbol{u}):=C_{\boldsymbol{u},3r_\ast w}^{\dagger}$. In this subsection, we look to establish certain smoothness conditions -- that the types of nodes in these regions will be nicely distributed.

 For some large $k_1$, which we shall specify later and which will not depend on $w$, the basic idea now is that we want to cover the nodes in $C^2(\boldsymbol{u})$ with disjoint $\frac{w}{k_1}$-squares. This occasions the minor inconvenience that $k_1$ may not divide $w$. We therefore let $\mathcal{I}_{k_1}(\boldsymbol{u})$ be a pairwise disjoint set of rectangles, whose union contains  all nodes in $C^2(\boldsymbol{u})$, and such that: 
\begin{itemize} 
\item  Each element of $\mathcal{I}_{k_1}(\boldsymbol{u})$ has nonempty intersection with $C^2(\boldsymbol{u})$ and is of the form $[x, x+a) \times [y, y+b)$ for some 
$a,b\in \{ \lfloor w/k_1 \rfloor, \lceil w/k_1 \rceil, \lceil w/k_1 \rceil +1\}$ and  $x,y\in \mathbb{N}$;
\item Each element of $\mathcal{I}_{k_1}(\boldsymbol{u})$ is either entirely contained in $\mathcal{N}^{\diamond}(\boldsymbol{u})$, or else is disjoint from $\mathcal{N}^{\diamond}(\boldsymbol{u})$. 
\end{itemize}

\begin{defin}[Smoothness for $\boldsymbol{u}$] \label{smoothdef} 
Suppose given $k_1\in \mathbb{N}^+$ and $\epsilon_1>0$ and $\boldsymbol{u}$ such that $\mathtt{rju}^{\alpha}_{\tau}(\boldsymbol{u})$ holds.   We say that $\mathtt{Smooth}_{k_1,\epsilon_1}(\boldsymbol{u})$ holds if: 
\begin{itemize} 
\item For each $A\in \mathcal{I}_{k_1}(\boldsymbol{u})$ which is contained in $\mathcal{N}^{\diamond}(\boldsymbol{u})$, the proportion of the nodes in $A$ which are of type $\alpha$ is in the interval $[\tau -\epsilon_1, \tau+\epsilon_1]$. 
\item For each $A\in \mathcal{I}_{k_1}(\boldsymbol{u})$ which is disjoint from  $\mathcal{N}^{\diamond}(\boldsymbol{u})$, the proportion of the nodes in $A$ which are of type $\alpha$ is in the interval $[0.5 -\epsilon_1, 0.5+\epsilon_1]$. 
\end{itemize} 
\end{defin}

The following lemma is then almost immediate and is proved in Section \ref{defer}: 

\begin{lem}[Likely smoothness]  \label{smoothlem} 
Suppose given $k_1\in \mathcal{N}^+$ and $\epsilon_1, \epsilon>0$. Suppose that $\boldsymbol{u}$ is selected uniformly at random from amongst the nodes such that $\mathtt{rju}^{\alpha}_{\tau}(\boldsymbol{u})$ holds. For $0\ll w\ll n$, the probability that  $\mathtt{Smooth}_{k_1,\epsilon_1}(\boldsymbol{u})$ holds is greater than $1-\epsilon$. 
\end{lem}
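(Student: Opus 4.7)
The plan is to exploit the fact that the event $\mathtt{rju}^{\alpha}_{\tau}(\boldsymbol{u})$ depends only on the types of nodes in $\mathcal{N}^{\diamond}(\boldsymbol{u})$, while pinning down the count $N := |\{\boldsymbol{v}\in\mathcal{N}^{\diamond}(\boldsymbol{u}):\boldsymbol{v}\text{ is }\alpha\}|$ to a window of width $3w+1$ just below $\tau(2w+1)(3w+1)$. By translation invariance and the i.i.d.\ nature of the initial configuration, uniform selection of $\boldsymbol{u}$ from the random set of $\mathtt{rju}^{\alpha}_{\tau}$-satisfying nodes yields the same joint distribution for the surrounding configuration as fixing any specific $\boldsymbol{u}_0$ and conditioning on $\mathtt{rju}^{\alpha}_{\tau}(\boldsymbol{u}_0)$. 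It therefore suffices to bound the conditional failure probability of $\mathtt{Smooth}_{k_1,\epsilon_1}(\boldsymbol{u}_0)$ under this conditioning.

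I then split the rectangles $A\in\mathcal{I}_{k_1}(\boldsymbol{u}_0)$ into two cases. For $A$ \emph{disjoint} from $\mathcal{N}^{\diamond}(\boldsymbol{u}_0)$, the $|A|=\Theta(w^2/k_1^2)$ node types in $A$ are independent of the conditioning event and remain i.i.d.\ Bernoulli$(\tfrac12)$; Hoeffding's inequality gives that the $\alpha$-proportion lies in $[0.5-\epsilon_1,0.5+\epsilon_1]$ with failure probability at most $2\exp(-2\epsilon_1^2|A|)$, which vanishes as $w\to\infty$. For $A$ \emph{contained in} $\mathcal{N}^{\diamond}(\boldsymbol{u}_0)$, I condition additionally on the exact value of $N$; by exchangeability, the positions of the $\alpha$-labels within $\mathcal{N}^{\diamond}(\boldsymbol{u}_0)$ are then uniformly distributed over subsets of size $N$, so the $\alpha$-count in $A$ is hypergeometric with parameters $((2w+1)(3w+1),N,|A|)$. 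Since $N/((2w+1)(3w+1))\to\tau$ uniformly across the allowed window (the slack is only $3w+1$ out of $\Theta(w^2)$), the Hoeffding--Serfling bound for the hypergeometric yields concentration of the $\alpha$-proportion in $A$ around $\tau$ within $\epsilon_1$, with failure probability $\exp(-\Omega(|A|))$.

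A union bound over the $O(k_1^2)$ rectangles in $\mathcal{I}_{k_1}(\boldsymbol{u}_0)$ then concludes the argument, since $k_1$ is fixed independently of $w$ while each $|A|=\Theta(w^2/k_1^2)\to\infty$. The only genuine bookkeeping concerns the $O(w)$ slack in $N$ relative to its nominal value $\tau(2w+1)(3w+1)$: this produces a deviation of only $O(|A|/w)=o(\epsilon_1|A|)$ in the conditional mean of the hypergeometric, and is harmlessly absorbed into the tolerance $\epsilon_1$ once $w$ is sufficiently large. I do not anticipate a serious obstacle here; the lemma is essentially a double application of the weak law of large numbers (once for Bernoulli sampling outside $\mathcal{N}^{\diamond}(\boldsymbol{u}_0)$, once for hypergeometric sampling inside it), made effective by fixing $k_1$ independently of $w$.
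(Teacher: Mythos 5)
Your proof is correct and follows essentially the same route as the paper: condition on $\mathtt{rju}^{\alpha}_{\tau}(\boldsymbol{u})$ to pin the $\alpha$-count in $\mathcal{N}^{\diamond}(\boldsymbol{u})$ into an $O(w)$-width window, treat rectangles disjoint from $\mathcal{N}^{\diamond}(\boldsymbol{u})$ via i.i.d.\ concentration and those contained in it via hypergeometric concentration given the count, then union bound over the $O(1)$-many rectangles in $\mathcal{I}_{k_1}(\boldsymbol{u})$. The only difference is cosmetic: the paper applies the weak law of large numbers and Chebyshev where you invoke Hoeffding and Hoeffding--Serfling, which yields exponential rather than polynomial tails but is not needed here since $|A|\to\infty$ and the number of rectangles is bounded independently of $w$.
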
 

\begin{defin}[The smoothness event] 
We let $\boldsymbol{u}_1$ be chosen uniformly at random from amongst the nodes $\boldsymbol{u}\in \mathcal{Q}_0^{\dagger}$ such that 
$\mathtt{rju}^{\alpha}_{\tau}(\boldsymbol{u})$ holds (so that if there exists no such node then $\boldsymbol{u}_1$ is undefined). For any $k_1, \epsilon_1>0$, we let $\mathtt{Smooth}(k_1,\epsilon_1)$ be the event that $\boldsymbol{u}_1$ is defined and  
$\mathtt{Smooth}_{k_1,\epsilon_1}(\boldsymbol{u_1})$ holds. 
\end{defin} 

\noindent Our arguments so far suffice to show that, for any $k_1, \epsilon_1>0$, if $0\ll w \ll n$ then the probability that $\mathtt{Smooth}(k_1,\epsilon_1)$ fails to hold is $\ll \epsilon_0$.  We now want to examine what satisfaction of $\mathtt{Smooth}(k_1,\epsilon_1)$ can tell us about the proportion of $\alpha$ nodes in regions contained in $C^2(\boldsymbol{u}_1)$ but which are not one of the rectangles in $\mathcal{I}_{k_1}(\boldsymbol{u}_1)$.  In order to do so we define a couple of  functions, which describe the proportion of $\alpha$ nodes in a given set $A$, either in the initial configuration or else after all $\alpha$ nodes in $B\subseteq A$ have changed type. We also consider idealised versions of these functions which will be easier to work with most of the time. Here and elsewhere $\bar B$ denotes the complement of $B$.

\begin{defin} \label{counters} 
Consider sets of nodes $A,B\subset C^2(\boldsymbol{u})$. 
\begin{enumerate} 
\item  Let $\Xi(A)$ be the proportion of the elements of $A$ which are of type $\alpha$ in the initial configuration.
\item  Let $A_0=A \cap \mathcal{N}^{\diamond}(\boldsymbol{u})$ and $A_1= A- A_0$. Define $\Xi^{\ast}(A,\boldsymbol{u})=(\tau |A_0| +0.5|A_1|)/|A|$. 
\item Take the initial configuration and then change all nodes in $B$ to type $\beta$. Let  $\Xi(A,B)$ denote  the proportion of the elements of $A$ which are now of type $\alpha$. 
\item Let $A_0=A \cap B, A_1= A\cap \bar B \cap \mathcal{N}^{\diamond}(\boldsymbol{u})$ and let $A_2= A-(A_0 \cup A_1)$. We define $\Xi^{\ast}(A,\boldsymbol{u},B)= (\tau |A_1| +0.5 |A_2|)/|A|$.  
 \end{enumerate} 
\end{defin} 

\noindent So $\Xi^{\ast}(A,\boldsymbol{u})$ gives the proportion of nodes in $A$ which would be of type $\alpha$, if \emph{exactly} proportion $\tau$ of those nodes in $A\cap \mathcal{N}^{\diamond}(\boldsymbol{u})$ were of type $\alpha$, and \emph{exactly} half of the nodes in the remainder of $A$ were of type $\alpha$.  On the other hand $\Xi^{\ast}(A,\boldsymbol{u}, B)$ gives the corresponding proportion if the same conditions hold, but then we change all nodes in $B$ to type $\beta$.  One may think of $\Xi^{\ast}$ as an idealised version of $\Xi$. Satisfaction of $\mathtt{rju}^{\alpha}_{\tau}(\boldsymbol{u})$ and $\mathtt{Smooth}_{k_1,\epsilon_1}(\boldsymbol{u})$  for large $k_1$ and small $\epsilon_1$ will ensure that $\Xi^{\ast}$ is a reasonable approximation to $\Xi$, in a sense that we will make precise. 

Now we want to work with a greater variety of sets of nodes than just those in $\mathcal{I}_{k_1}(\boldsymbol{u})$, but we still only need to consider sets which are reasonably large and of a reasonably simple form:

\begin{defin} Given a node $\boldsymbol{u}$ and $k_2\in \mathcal{N}^+$, we let $\mathcal{I}^{\ast}_{k_2}(\boldsymbol{u})$ be the set of all sets of nodes $A$ such that $A\subseteq C^2(\boldsymbol{u})$ and either (1) $A$ is an $a\times b$ rectangle for $a,b\geq w/k_2$, or (2) the set of nodes inside a regular polygon with sides of length $\geq w/k_2$, or (3) the union of two sets of the form (1) or (2).  
\end{defin}

So long as we restrict attention to sets of nodes in $\mathcal{I}^{\ast}_{k_2}(\boldsymbol{u})$ (for some $k_2$ to be specified), the following observation allows us to work with the idealised functions $\Xi^{\ast}$: 

\begin{obs}[Smoothness for elements of $\mathcal{I}^{\ast}_{k_2}(\boldsymbol{u})$]  \label{ideal okay} Suppose given $k_2\in \mathbb{N}^+$ and $\delta>0$. If $k_1$ is sufficiently large and $\epsilon_1>0$ is sufficiently small then, for $0 \ll w \ll n$, satisfaction of $\mathtt{rju}^{\alpha}_{\tau}(\boldsymbol{u})$ and $\mathtt{Smooth}_{k_1,\epsilon_1}(\boldsymbol{u})$ suffices to ensure that: 
\begin{enumerate}[] 
\item  \hspace{0.3cm}$(\fgeeszett)$ \hspace{0.3cm}For all $A,B\in \mathcal{I}^{\ast}_{k_2}(\boldsymbol{u})$, $|\Xi(A)-\Xi^{\ast}(A,\boldsymbol{u})|<\delta$ and $|\Xi(A,B)-\Xi^{\ast}(A,\boldsymbol{u},B)|<\delta$. 
\end{enumerate}
\end{obs}

\subsubsection*{Establishing the creation of firewalls} Now we choose values of $k_2$ and $\delta$ which will allow us to argue that a large firewall is very probably created around $\boldsymbol{u}_1$ in the early stages of the process. With these values specified, $k_1$ and $\epsilon_1$ are simply chosen to be those values guaranteed by Lemma \ref{ideal okay}, meaning that we can make use of our idealised functions $\Xi^{\ast}$. Numerical values in the following definition are somewhat arbitrary, but suffice for our purposes. 

\begin{defin}[Choosing $k_2,\delta,k_1$ and $\epsilon_1$] \label{k1} Choose $k_2 > (0.5-\tau)/(\tau_{\alpha}-\tau)$, choose  $\delta>0$ such that $\delta \ll  \mbox{min}\{ (\tau_{\alpha}-\tau)(2\tau-0.5)/(2k_2), 10^{-5} \}$ and choose $k_1$ sufficiently large and $\epsilon_1>0$ sufficiently small that for $0 \ll w \ll n$ satisfaction of $\mathtt{rju}^{\alpha}_{\tau}(\boldsymbol{u})$ and $\mathtt{Smooth}_{k_1,\epsilon_1}(\boldsymbol{u})$ suffices to ensure satisfaction of $(\fgeeszett)$ (as specified in Observation \ref{ideal okay}). 
\end{defin}

The next lemma finally establishes that a firewall of radius $r_{\ast}w$ very probably forms around $\boldsymbol{u}_1$. 

\begin{lem}[Firewalls] \label{firewall}
Suppose that $\mathtt{rju}^{\alpha}_{\tau}(\boldsymbol{u})$ and $\mathtt{Smooth}_{k_1,\epsilon_1}(\boldsymbol{u})$ both hold. Let $t^{\ast}=2r_\ast w + 3w$ and suppose that there are no hopeful $\beta$ nodes in $C^2(\boldsymbol{u})$ at any stage $\leq t^{\ast}$. Then  all nodes in $C^0(\boldsymbol{u})$ are of type $\beta$ at stage $t^{\ast}$. 
\end{lem}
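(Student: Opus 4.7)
The plan is to proceed by induction on the stage $s$, maintaining as inductive hypothesis $(\mathrm{IH}_s)$ that some convex polygonal region $D_s \subseteq C^2(\boldsymbol{u})$ belonging to $\mathcal{I}^\ast_{k_2}(\boldsymbol{u})$ satisfies: every node originally of type $\alpha$ that lies in $D_s$ has flipped to $\beta$ by the end of stage $s$. I would arrange $D_s$ to be roughly a regular polygon inscribed in a disc centred at $\boldsymbol{u}$ whose radius grows with $s$, targeting $D_{t^\ast} \supseteq C^0(\boldsymbol{u})$, which directly gives the conclusion. Since the hypothesis forbids hopeful $\beta$ nodes in $C^2(\boldsymbol{u})$ throughout, only $\alpha$ nodes can flip in this region, so the cascade is monotone.

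For the base case, I would use $\mathtt{rju}^\alpha_\tau(\boldsymbol{u})$ together with $\mathtt{Smooth}_{k_1, \epsilon_1}(\boldsymbol{u})$: by Observation $(\fgeeszett)$, any $\alpha$ node $\boldsymbol{v}$ with $\mathcal{N}(\boldsymbol{v}) \subseteq \mathcal{N}^\diamond(\boldsymbol{u})$ has actual $\alpha$-density at most $\tau + \delta < \tau_\alpha$ (by the choice of $\delta$ in Definition \ref{k1}), so $\boldsymbol{v}$ is unhappy; since $\tau_\alpha + \tau_\beta < 1$, it is also hopeful and flips at stage 1. More generally, any $\alpha$ node whose neighbourhood has large enough overlap with $\mathcal{N}^\diamond(\boldsymbol{u})$ flips at stage 1, seeding $D_1$. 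For the inductive step, given $D_s$, for each $\alpha$ node $\boldsymbol{v}$ in a candidate shell $D_{s+1} \setminus D_s$ I would compute
\[
\Xi^\ast(\mathcal{N}(\boldsymbol{v}), \boldsymbol{u}, D_s) = \frac{\tau|A_1| + 0.5|A_2|}{|\mathcal{N}(\boldsymbol{v})|},
\]
where $A_1 = (\mathcal{N}(\boldsymbol{v}) \setminus D_s) \cap \mathcal{N}^\diamond(\boldsymbol{u})$ and $A_2 = (\mathcal{N}(\boldsymbol{v}) \setminus D_s) \setminus \mathcal{N}^\diamond(\boldsymbol{u})$, and check that once the overlap fraction $|\mathcal{N}(\boldsymbol{v}) \cap D_s|/|\mathcal{N}(\boldsymbol{v})|$ exceeds $1 - 2\tau_\alpha$ by a definite margin (which holds whenever $\boldsymbol{v}$ lies within distance roughly $w(4\tau_\alpha - 1)$ of the boundary of $D_s$), this idealised density sits below $\tau_\alpha - \delta$. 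Then Observation $(\fgeeszett)$ delivers actual density below $\tau_\alpha$, so $\boldsymbol{v}$ is unhappy, hence (by the same $\tau_\alpha + \tau_\beta < 1$ argument) hopeful, and flips at stage $s+1$. I take $D_{s+1}$ to be the enlarged convex polygon in $\mathcal{I}^\ast_{k_2}(\boldsymbol{u})$ containing $D_s$ together with all such $\boldsymbol{v}$.

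The main obstacle is geometric bookkeeping. First, one must design the sequence $D_s$ so that each $D_s \in \mathcal{I}^\ast_{k_2}(\boldsymbol{u})$ (with side lengths at least $w/k_2$), which is particularly delicate during the earliest stages when the seed is much smaller than $w/k_2$ -- here one may need to run several initial stages, arguing directly from $\mathtt{Smooth}_{k_1,\epsilon_1}$ and $\mathtt{rju}^\alpha_\tau$, before $D_s$ first becomes polygonal of the required size. Second, one must verify that the $\Theta(w)$-per-stage advance holds uniformly around the entire outer boundary of $D_s$, with special care along the sides of $\mathcal{N}^\diamond(\boldsymbol{u})$ where the three regions $D_s$, $\mathcal{N}^\diamond(\boldsymbol{u}) \setminus D_s$, and the outer annulus meet, and along the parts of the boundary where the local curvature is not yet negligible relative to $w$. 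Since $\tau_\alpha > \kappa > 0.25$ gives $4\tau_\alpha - 1 > 0$, the per-stage advance is a definite positive multiple of $w$, so the budget of $t^\ast = 2r_\ast w + 3w$ stages is amply sufficient for $D_s$ to grow to radius $r_\ast w$ and thereby cover $C^0(\boldsymbol{u})$.
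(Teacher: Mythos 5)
Your proposal follows the paper's high-level strategy: induct on stage with a growing $\beta$-region, estimate the $\alpha$-density of boundary nodes via $\Xi^\ast$ using $(\fgeeszett)$, and show that boundary nodes remain unhappy (hence hopeful, hence flip). However, you explicitly flag ``the main obstacle is geometric bookkeeping'' and then do not resolve it — and that bookkeeping \emph{is} the content of the lemma. There are three genuine gaps.

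First, your inductive criterion — that the overlap fraction $|\mathcal{N}(\boldsymbol{v})\cap D_s|/|\mathcal{N}(\boldsymbol{v})|$ exceed $1-2\tau_\alpha$ — is only a worst-case sufficient condition valid when $\mathcal{N}(\boldsymbol{v})\setminus D_s$ lies entirely outside $\mathcal{N}^\diamond(\boldsymbol{u})$ (so has density $0.5$). In the intermediate regime, where $D_s$ has radius between roughly $w/k_2$ and $w(1+\sqrt 2)$, a boundary node has overlap fraction well \emph{below} $1-2\tau_\alpha$, and unhappiness relies crucially on the fact that much of $\mathcal{N}(\boldsymbol{v})\setminus D_s$ lies inside $\mathcal{N}^\diamond(\boldsymbol{u})$ with density only $\tau$. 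Your criterion would stall the induction in precisely this regime. The paper handles it by tailoring the growing shapes to the rectangular geometry of $\mathcal{N}^\diamond(\boldsymbol{u})$: the sequence $R^0_a \to R^1_a \to R^2_a$ is designed so that Proposition~\ref{tech} can quantify exactly how the $\tau$-density of $\mathcal{N}^\diamond(\boldsymbol{u})$ compensates, e.g.\ $\Xi^\ast(\mathcal{N}(\boldsymbol{v}),\boldsymbol{u},R^0_{a-1})\approx\tau+(0.5-2\tau)d_a$, which is $<\tau_\alpha$ only because $\tau>0.25$, not because of a large overlap with $R^0_{a-1}$.

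Second, corners. A ``regular polygon'' is not specific enough. At a $90^\circ$ corner of a square the overlap fraction is $\approx 1/4$, so the idealised density of a boundary node is $\approx 3/8=0.375$; since $\kappa\approx 0.365<3/8$, a square firewall \emph{fails} for $\tau_\alpha\in(\kappa,3/8)$. The paper is forced to pass to an octagon (Definition~\ref{definition:octa}), and its proof computes that octagon-boundary nodes have density $<11/32+2\delta$, with $11/32<\kappa$; even this is only available once the firewall has grown large enough (sides of length $\geq w$) to accommodate a regular octagon in $\mathcal{I}^\ast_{k_2}$. Your inscribed-polygon scheme cannot deploy such a shape when $D_s$ is small, yet that is exactly when the corner problem would bite if you stepped outside $\mathcal{N}^\diamond(\boldsymbol{u})$.

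Third, the claimed ``$\Theta(w)$-per-stage advance'' is not established. The paper only proves a $\Theta(1)$-per-stage lower bound (each $R^i_a$ gains radius $1$ or $2$ per stage) and sizes $t^\ast=2r_\ast w+3w$ precisely to accommodate that modest rate. Your budget analysis leans on the stronger, unjustified claim. The quantitative distance estimate $\approx w(4\tau_\alpha-1)$ also presupposes a flat boundary and negligible $\mathcal{N}^\diamond$-effects, neither of which holds during the intermediate stages where the hard work lies.
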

\begin{proof} 
In what follows it will be convenient to assume that $w$ is even. Only small modifications are required to deal with the case that $w$ is odd. It is also convenient to assume that $r_\ast >2$. The proof will basically consist of repeated applications of condition $(\fgeeszett)$. We shall apply this condition in order to inductively establish a  sequence of increasingly large rectangles for which all nodes become of type $\beta$.  Let $\boldsymbol{u}=(x,y)$, let $x^\ast:=x+\frac{w}{2}$  and  $\boldsymbol{u}^\ast:=(x^\ast,y)$. The following argument is illustrated in the first picture of Figure \ref{oct}. 
The rectangles we consider are as follows (where $a\in \mathbb{N}^+$): 

\begin{itemize} 
\item We let $R^0_a$ be the set of nodes $\boldsymbol{v}=(x',y')$ such that $0\leq x'-x\leq w$ and $|y-y'|\leq a$;
\item We let $R^1_a$ be the set of nodes $\boldsymbol{v}=(x',y')$ such that $x-a\leq x' \leq x+w+a$ and $|y-y'|\leq w$; 
\item We let $R^2_a$ be the set of nodes $\boldsymbol{v}=(x',y')$ such that $|x'-x^{\ast}|\leq \frac{w}{2}+ \lceil \frac{\sqrt{2}-1}{2}w \rceil$ and $|y'-y|\leq w+a$. 
\end{itemize} 

\begin{figure}
 \centering
 \scalebox{0.5}{
 \begin{tikzpicture}
\filldraw[fill=gray!20] (-1.5,0) -- (9.5,0) -- (9.5,8) --  (-1.5,8) -- (-1.5,0);
\filldraw[fill=purple!20] (0,0) -- (8,0) -- (8,8) --  (0,8) -- (0,0);
\filldraw[fill=red!20]  (1.5,-0.8) -- (6.5,-0.8) -- (6.5,8.8) --  (1.5,8.8) -- (1.5,-0.8);
\filldraw[fill=green!20] (2.2,3) -- (5.8,3) -- (5.8,5) --  (2.2,5) -- (2.2,3);
\draw[color=black] (-1.5,0) -- (9.5,0) -- (9.5,8) --  (-1.5,8) -- (-1.5,0);
\draw[very thick,color=purple] (0,0) -- (8,0) -- (8,8) --  (0,8) -- (0,0);
\draw[very thick,color=red] (1.5,-0.8) -- (6.5,-0.8) -- (6.5,8.8) --  (1.5,8.8) -- (1.5,-0.8);
\draw[thick,color=black] (2.2,0) -- (5.8,0) -- (5.8,8) --  (2.2,8) -- (2.2,0);
\draw[very thick,color=black] (2.2,3) -- (5.8,3) -- (5.8,5) --  (2.2,5) -- (2.2,3);
\node[fill=gray,circle] (b) at (4,4) {};
\node[fill=red,circle] (e) at (2.2,4) {};
\node[] (b) at (4.6,4) {{\Large $\mathbf{u}^{\ast}$}};
\node[] (k) at  (2.6,4) {{\Large $\mathbf{u}$}};
\node[] (d) at (-0.5,3) {{\Large $R^1_b$}};
\node[] (a) at (5,5.5) {{\Large $R^0_\alpha$}};
\node[] (c) at (9,-0.5) {{\Large $\mathcal{N}^{\diamond}(\mathbf{u})$}};
\draw[<->] (0,7)--(2.2,7);
\node[] (s) at (0.5,7.3) {{\Large $b$}};
\draw[<->] (2.6,8)--(2.6,8.8);
\draw[<->] (3,4)--(3,3);
\node[] (sss) at (3.2,3.5) {{\Large $a$}};
\node[] (stt) at (2.8,8.4) {{\Large $c$}};
\node[] (a) at (1.1,9) {{\Large $R^2_c$}};
\end{tikzpicture}\hspace{0.9cm}
\begin{tikzpicture}
 \filldraw[fill=green!20,draw=green!50!black] (0,5) -- (3,8) -- (5,8) -- (8,5) -- (8,3) -- (5,0) -- (3,0) -- (0,3) -- (0,5);
 \draw[color=black] (0,0) -- (8,0) -- (8,8) --  (0,8) -- (0,0);
\node[fill=gray,circle] (b) at (4,4) {};
\node[] (b) at (4.6,4) {{\Large $\mathbf{u}^{\ast}$}};
\node[] (a) at (6,2) {{\Large $O_r(\mathbf{u}^{\ast})$}};
\node[] (c) at (0,-0.5) {{\Large $S_r(\mathbf{u}^{\ast})$}};
\end{tikzpicture}
} 
\caption{Rectangles in the proof of Lemma \ref{firewall} and the octagon of Definition \ref{definition:octa}}
\label{oct}
\end{figure}
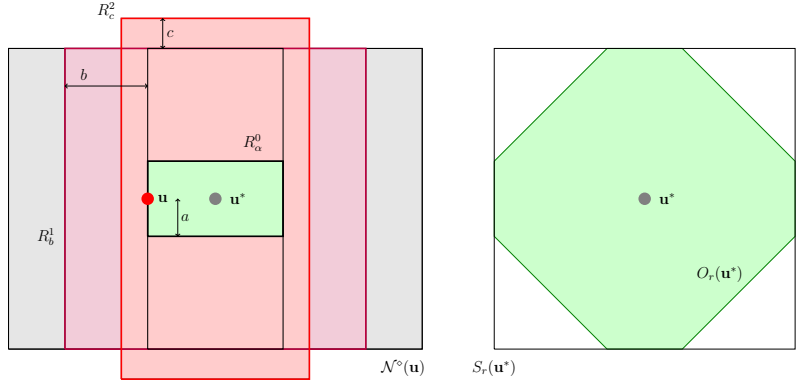

Working always for sufficiently large $w$, we show first that all nodes in $R^0_{\lfloor w/k_2 \rfloor}$ are of type $\beta$ by the end of stage 1.  Then we show inductively that: 
\begin{enumerate}[] 
\item\hspace{0.3cm}$(\dagger_1)$\hspace{0.3cm}All nodes in  $R^0_a$ are of type $\beta$ by the end of stage $a$, for $a\leq w/2$;
\item\hspace{0.3cm}$(\dagger_2)$\hspace{0.3cm}The same result holds for $R^0_a$ when $a\leq w$; 
\item\hspace{0.3cm}$(\dagger_3)$\hspace{0.3cm}All nodes in $R^1_a$ 
are of type $\beta$ by the end of stage $w+a$, for $a\leq \lceil \frac{w}{\sqrt{2}} \rceil$; 
\item\hspace{0.3cm}$(\dagger_4)$\hspace{0.3cm}All nodes in $R^2_a$ 
are of type $\beta$ by the end of stage $2w+a$, for all $a\leq \lceil \frac{\sqrt{2}-1}{2}w \rceil$. 
\end{enumerate}
From there we shall be able to argue that all nodes in $C^0(\boldsymbol{u})$ will eventually be of type $\beta$. 

So our first task is to establish that all nodes in $R^0_{\lfloor w/k_2 \rfloor}$ will be of type $\beta$ by the end of stage 1.  We have that  $R^0_{\lfloor w/k_2 \rfloor}\in I^{\ast}_{k_2}(\boldsymbol{u})$ and  by our choice of $k_2$ the following holds for  all nodes $\boldsymbol{v}$ in this rectangle: 
\[ \Xi^{\ast}(\mathcal{N}(\boldsymbol{v})) \leq   \frac{\tau + \tau_{\alpha}}{2} . \] 
Since $\delta< (\tau_{\alpha}-\tau)/2$, it follows from $(\fgeeszett)$ that for $0\ll w \ll n$, $\Xi(\mathcal{N}(\boldsymbol{v})) < \tau_{\alpha}$. So for $0\ll w \ll n$,   all $\alpha$ nodes in $R^0_{\lfloor w/k_2 \rfloor}$ are hopeful in the initial configuration and will be of type $\beta$ by the end of stage 1, as required.

For any $a\in \mathbb{N}$, let $d_a=a/(2w+1)$. In order to inductively establish $(\dagger_1)$ -- $ (\dagger_4)$ we then need the various technical facts contained in the following proposition, which is easily verified by direct calculation. 

\begin{prop} \label{tech} Given $\epsilon >0$, the following all hold for $0 \ll w \ll n$: 
\begin{enumerate} 
\item For all $a\in (\lfloor w/k_2 \rfloor,  w/2]$ and all $\boldsymbol{v}\in R^0_a-R^0_{a-1}$, 
 $ |\Xi^{\ast}(\mathcal{N}(\boldsymbol{v}),\boldsymbol{u},R^0_{a-1}) -( \tau + (0.5-2\tau)d_a)| <\epsilon. $
 \item For all $a\in  (w/2, w]$ and $\boldsymbol{v}\in R^0_a - R^0_{a-1}$, $\Xi^{\ast}(\mathcal{N}(\boldsymbol{v}), R^0_{a-1}) < 0.25 +0.25\tau +\epsilon$. 
 \item For all $a\in  (0, \lceil \frac{w}{\sqrt{2}} \rceil]$ and $\boldsymbol{v}\in R^1_a - R^1_{a-1}$, $\Xi^{\ast}(\mathcal{N}(\boldsymbol{v}), R^1_{a-1}) < 3/8- (1/2- \tau)(1-1/\sqrt{2})/4 +\epsilon$.
 \item For all $a\in (0,\lceil w(\sqrt{2}-1)/2 \rceil]$ and $\boldsymbol{v}\in R^2_a - R^2_{a-1}$, $\Xi^{\ast}(\mathcal{N}(\boldsymbol{v}), R^2_{a-1} \cup R^1_{\lceil \frac{w}{\sqrt{2}} \rceil}) <3/8- (1/16)(3/2-1/\sqrt{2}) +\epsilon$. 
\end{enumerate} 

\end{prop}

Given (1) of Proposition \ref{tech}, the induction to establish $(\dagger_1)$ now goes through easily. Applying $(\fgeeszett)$ we have that if $0\ll w \ll n$ then for $a\in (\lfloor w/k_2 \rfloor,  w/2]$ and $\boldsymbol{v}\in R^0_a-R^0_{a-1}$, $\Xi(\mathcal{N}(\boldsymbol{v}),R_{a-1})\leq \tau+ (0.5-2\tau)d_a +2\delta$. Since $d_a >1/3k_2$ and $\delta<(2\tau-0.5)/(6k_2)$ the induction step follows. 
  
 Similarly (2) of Proposition \ref{tech} gives us the induction step in establishing $(\dagger_2)$. Applying $(\fgeeszett)$ we have that if $0\ll w \ll n$ then for $a\in  (w/2, w]$ and $\boldsymbol{v}\in R^0_a - R^0_{a-1}$, $\Xi(\mathcal{N}(\boldsymbol{v}), R^0_{a-1}) < 0.25 +0.25\tau +2\delta$. Now  for $\tau\in (\kappa,0.5)$, $0.25 +0.25\tau <\tau -0.02$ so, since  $\delta <10^{-5}$, the induction step follows.

 For $(\dagger_3)$ we have that  if $0\ll w \ll n$ then for $a\in  (0, \lceil \frac{w}{\sqrt{2}} \rceil]$ and $\boldsymbol{v}\in R^1_a - R^1_{a-1}$, $\Xi(\mathcal{N}(\boldsymbol{v}), R^0_{a-1}) < 3/8- (1/2- \tau)(1-1/\sqrt{2})/4 +2\delta$. 
 Now for $\tau\in (\kappa, 0.5)$ we have $\tau- (3/8- (1/2- \tau)(1-1/\sqrt{2})/4)> 2\cdot 10^{-5}>2\delta$, so the induction step follows.

  Finally, for $(\dagger_4)$ we have that  if $0\ll w \ll n$ then for $a\in (0,\lceil w(\sqrt{2}-1)/2 \rceil]$ and $\boldsymbol{v}\in R^2_a - R^2_{a-1}$, $\Xi(\mathcal{N}(\boldsymbol{v}), R^2_{a-1}\cup R^1_{\lceil \frac{w}{\sqrt{2}} \rceil}) <3/8- 1/16(3/2-1/\sqrt{2}) +2\delta$. Then  for $\tau\in (\kappa, 0.5)$ we have 
  \[
  \tau - (3/8- (1/16)(3/2-1/\sqrt{2})) >0.02 >2\delta.
  \]
   Once again the induction step goes through. 
  
  So far we have established that all nodes in $R:= R^1_{ \lceil w/\sqrt{2} \rceil} \cup R^2_{\lceil w(\sqrt{2}-1)/2 \rceil}$ will be of type $\beta$ by the end of stage $3w$. Now we wish to extend this and argue that all nodes in $C^0(\boldsymbol{u})$ will eventually be of type $\beta$. Previously we observed that, for $0\ll w \ll n$, any $\alpha$ node $\boldsymbol{v}$ on the outer boundary of $C^0(\boldsymbol{u})$ will be hopeful, so long as  $\mathtt{pn}^{\alpha}_{\gamma,\tau_{\alpha}}(\boldsymbol{v})$ does not hold. In order to get to the point where we can conclude that all nodes in $ C^0(\boldsymbol{u})$ will become of type $\beta$, we use a similar idea, but we must work with smaller regions than $C^0(\boldsymbol{u})$ -- this will not be a problem, since satisfaction of $\mathtt{Smooth}_{k_1,\epsilon_1}(\boldsymbol{u})$ guarantees the failure of much weaker conditions than $\mathtt{pn}^{\alpha}_{\gamma,\tau_{\alpha}}(\boldsymbol{v})$ for $\boldsymbol{v}\in C^1(\boldsymbol{u})$. In fact the calculations will be simpler if we work with regular octagons rather than circles: 
  
  \begin{defin}[Octagon] \label{definition:octa}Given $r\in \mathbb{R}$, we let $S_{r}(\boldsymbol{u}^{\ast})$ be the square in $\mathbb{R}^2$ centred at $\boldsymbol{u}^{\ast}$ with sides of length $r$ parallel to the axes. Then we let $O_{r}(\boldsymbol{u}^{\ast})$ be the largest regular octagon contained in $S^{\dagger}_{r}(\boldsymbol{u}^{\ast})$, as illustrated in the second image of Figure \ref{oct}. 
  \end{defin} 
Note that for $r_0= w(1+\sqrt{2})$, $O_{r_0}(\boldsymbol{u}^{\ast})$ has sides of length $w$ and  that $O^{\dagger}_{r_0}(\boldsymbol{u}^{\ast})$ is entirely contained in $R^{\dagger}$ (where $R$ is as specified above). Now if $r\geq r_0$ then for $0 \ll w \ll n$, if $\boldsymbol{v}\in C^{1}(\boldsymbol{u})$ is an $\alpha$ node on the outer boundary of $O_{r}(\boldsymbol{u}^{\ast})$ then $\Xi(\mathcal{N}(\boldsymbol{v})) <11/32 +2 \delta<\tau$, and so $\boldsymbol{v}$ is hopeful.  Inductively we  conclude that all nodes in $C^1(\boldsymbol{u}) \cap O_{r_0+2a}^{\dagger}(\boldsymbol{u}^{\ast})$ are of type $\beta$ by the end of stage $3w+a$. Now there exists $a$ with $C^0(\boldsymbol{u}) \subset O^{\dagger}_{r_0+2a}(\boldsymbol{u}^{\ast}) \subset C^1(\boldsymbol{u})$. Since $a<2r_\ast w$, the result follows as required.  
\end{proof} 
 
The following lemma completes our proof of clause (b) of Theorem \ref{main2d}.  

\begin{lem}[Eventual conversion] \label{final} 
Suppose that $\mathtt{Typical}\   \mathtt{vicinity}$  and $\mathtt{Smooth}(k_1,\epsilon_1)$ both hold. Then there exists a stage $s$ after which $\boldsymbol{u}_0$ is always of type $\beta$. 
\end{lem}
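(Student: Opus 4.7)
The plan is to apply Lemma~\ref{firewall} at $\boldsymbol{u}_1$ to seed a $\beta$-firewall of radius $r_\ast w$ by stage $t^\ast$, then propagate it outwards one lattice unit per stage until it engulfs $\boldsymbol{u}_0$, trapping $\boldsymbol{u}_0$ inside a $\beta$-stable structure from which it can never escape.

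\textbf{Seeding.} Since $\mathtt{Smooth}(k_1,\epsilon_1)$ holds, $\boldsymbol{u}_1\in\mathcal{Q}_0^\dagger$ is defined with both $\mathtt{rju}^\alpha_\tau(\boldsymbol{u}_1)$ and $\mathtt{Smooth}_{k_1,\epsilon_1}(\boldsymbol{u}_1)$. Because $\mathcal{Q}_1$ is an $(8w+1)m_0(w)$-square while $C^2(\boldsymbol{u}_1)$ has radius only $3r_\ast w$, we have $C^2(\boldsymbol{u}_1)\subseteq\mathcal{Q}$ with a comfortable buffer. To invoke Lemma~\ref{firewall} we must verify that no hopeful $\beta$ node appears in $C^2(\boldsymbol{u}_1)$ during stages $\le t^\ast$. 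Since $\tau_\beta<0.5<1-\tau_\alpha$, a $\beta$ node is hopeful iff it is $\beta$-unhappy, so we argue jointly by induction on $s$ that (a) no $\beta$ node in the inner buffer of $\mathcal{Q}$ is unhappy at stage $s$, and (b) no $\beta\!\to\!\alpha$ conversion has occurred in $\mathcal{Q}$ by stage $s$. The base case is clause (3) of $\mathtt{Typical}\ \mathtt{vicinity}$. For the inductive step, (b) forces the $\beta$ count in $\mathcal{N}(\boldsymbol{v})$ (for $\boldsymbol{v}$ in the inner buffer) to be monotone non-decreasing, so initially $\beta$-happy nodes remain so, giving (a); and (a) in turn precludes any $\beta\!\to\!\alpha$ conversion at the next stage, giving (b). Lemma~\ref{firewall} then delivers the firewall $F_0:=C^0(\boldsymbol{u}_1)$ at stage $t^\ast$.

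\textbf{Propagation and engulfment.} Condition $(\dagger_a)$ makes $F_0$ a $\beta$-stable structure, and $(\dagger_b)$ implies that every $\alpha$ node $\boldsymbol{v}$ on its outer boundary has $\mathcal{N}(\boldsymbol{v})-F_0$ contained in some $\gamma$-partial neighbourhood of $\boldsymbol{v}$; if $\mathtt{pn}^\alpha_{\gamma,\tau_\alpha}(\boldsymbol{v})$ fails at the current stage, that partial contains fewer than $\tau_\alpha(2w+1)^2$ many $\alpha$ nodes, so $\boldsymbol{v}$ is $\alpha$-unhappy and (since $\tau_\alpha<0.5$) hopeful, converting to $\beta$ at the next stage. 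Extending the induction past $t^\ast$, clause (b) keeps the $\alpha$ count in every $\gamma$-partial within $\mathcal{Q}$ monotone non-increasing, so the initial absence of $\mathtt{pn}^\alpha_{\gamma,\tau_\alpha}$ throughout $\mathcal{Q}$ is preserved for all time. Thus the firewall's radius grows by at least one lattice unit each stage. The distance from $\boldsymbol{u}_1$ to $\boldsymbol{u}_0$ is bounded by the diameter of $\mathcal{Q}_0^\dagger$, which is $O(m_0(w)\cdot w)$, so after that many additional stages the firewall swallows $\boldsymbol{u}_0$ while remaining well inside $\mathcal{Q}$ (whose diameter is of order $w^2 m_0(w)$). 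From that stage onward, $\boldsymbol{u}_0$ sits inside a $\beta$-firewall of radius $\ge r_\ast w$, hence inside a $\beta$-stable structure, and is permanently of type $\beta$.

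\textbf{Main obstacle.} The delicate point is the joint induction of (a) and (b), where we must rule out any $\beta\!\to\!\alpha$ conversion anywhere in $\mathcal{Q}$ throughout the process. This interacts with $\beta$-happiness via a feedback loop: no conversions imply that $\beta$ counts grow, which makes $\beta$-happiness persist, which precludes new conversions. Making this loop rigorous requires careful treatment of the boundary of $\mathcal{Q}$ and the inner buffer, and this is precisely where the generous sizing of $\mathcal{Q}_1$ as an $(8w+1)m_0(w)$-square pays off. Once this bootstrapping is in place, the propagation step and the engulfment bound are essentially automatic consequences of $(\dagger_a)$, $(\dagger_b)$ and elementary geometry.
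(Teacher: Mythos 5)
Your overall plan matches the paper's: seed a $\beta$-firewall at $\boldsymbol{u}_1$ via Lemma~\ref{firewall}, propagate it one lattice unit per stage, and show it engulfs $\boldsymbol{u}_0$. But the central technical step --- verifying the hypothesis of Lemma~\ref{firewall} (no hopeful $\beta$ nodes) and then keeping $\mathtt{pn}^\alpha_{\gamma,\tau_\alpha}$-free on the outer boundary during propagation --- is handled by a joint induction that does not actually close, and you yourself flag this as ``the delicate point.''

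The gap is in the direction (a)~$\Rightarrow$~(b). Your (a) controls only the inner buffer of $\mathcal{Q}$, but your (b) asserts no $\beta\!\to\!\alpha$ conversion anywhere in $\mathcal{Q}$. A $\beta$ node within distance $w$ of the boundary of $\mathcal{Q}$ has part of its neighbourhood outside $\mathcal{Q}$, where the process is uncontrolled: $\beta\!\to\!\alpha$ conversions just outside $\mathcal{Q}$ can render that node $\beta$-unhappy, hence hopeful, hence converted. So (b) cannot be maintained for all of $\mathcal{Q}$ at all stages, and once the outer shell erodes, the argument that the inner buffer stays $\beta$-happy erodes with it, one layer of width $w$ per stage. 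The same problem undermines your claim that the absence of $\mathtt{pn}^\alpha_{\gamma,\tau_\alpha}$ in $\mathcal{Q}$ ``is preserved for all time'': conversions near the boundary can increase $\alpha$-counts in partial neighbourhoods that straddle the outer shell. In short, you cannot freeze a fixed buffer and claim a property for all time; the control region must shrink with time.

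The paper turns this shrinking into a quantitative speed-of-light bound and introduces an intermediate square $\mathcal{Q}_2$ (the $3m_0(w)$-square on the grid) to exploit it. Its key observation is exactly the one at the heart of your feedback loop, but stated locally: a $\beta$ node that is happy initially cannot become unhappy until \emph{after} some other $\beta$ node in its neighbourhood becomes unhappy, so $\beta$-unhappiness propagates inward at most $w$ per stage. Since clause~(3) of $\mathtt{Typical}\ \mathtt{vicinity}$ puts every initially unhappy $\beta$ node at $\ell_\infty$-distance greater than $3wm_0(w)(2w+1)$ from $\mathcal{Q}_2$, no $\beta$ node in $\mathcal{Q}_2$ becomes unhappy before stage $3m_0(w)(2w+1)$. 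This time window comfortably covers the $2r_\ast w + 3w$ stages Lemma~\ref{firewall} needs for seeding plus the at most $m_0(w)(2w+1)$ stages of propagation needed for the firewall to reach $\boldsymbol{u}_0$, and the whole construction stays inside $\mathcal{Q}_2$ so the $\alpha$-count monotonicity used against $\mathtt{pn}^\alpha_{\gamma,\tau_\alpha}[s]$ is valid where it is invoked. You should replace your joint induction with this finite-horizon distance argument: it concedes that $\mathcal{Q}$ will eventually be breached, but shows the breach arrives too late to matter.
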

\begin{proof} 
Recall that we defined $\mathcal{Q}_0$ to be the $m_0(w)$-square on the grid centred at $\boldsymbol{u}_0$, and that we defined  $\mathcal{Q}_1$ to be the $(8w+1)m_0(w)$-square on the grid centred at $\boldsymbol{u}_0$. Now we define  $\mathcal{Q}_2$ to be the $3m_0(w)$-square on the grid centred at $\boldsymbol{u}_0$. First of all we observe that we are guaranteed a large number of stages before the existence of any unhappy $\beta$ nodes in $\mathcal{Q}_2$. This follows because a $\beta$ node which is  happy in the initial configuration, cannot become unhappy until strictly after the first stage (if such a stage exists) at which another  $\beta$ node in its neighbourhood becomes unhappy. For $0\ll w \ll n$ we have that in the initial configuration, if $m= \mbox{min} \{ || \boldsymbol{u}- \boldsymbol{v} ||_{\infty}: \ \boldsymbol{v}\in \mathcal{Q}_2,\  \boldsymbol{u} \ \mbox{ is an unhappy } \beta \mbox{ node} \}$, then $m>3wm_0(w)(2w+1)$. Thus we are guaranteed not to find any unhappy $\beta$ nodes in $\mathcal{Q}_2$ at stages prior to $t^{\ast}= 3m_0(w)(2w+1)$. 

Applying Lemma \ref{firewall} we conclude that for $0\ll w \ll n$ we shall have that by stage $2r_\ast w + 3w$ all nodes in $C^0(\boldsymbol{u}_1)$ will be of type $\beta$. For $0\ll w \ll n$ it then follows inductively, by the choice of $r_\ast$,  that at each stage $<2m_0(w) (2w+1)$ and after the creation of this firewall, its radius will expand by at least 1. Since $|\boldsymbol{u}_0-\boldsymbol{u}_1|<m_0(w)(2w+1)$ we conclude that there exists some stage after which $\boldsymbol{u}_0$ always belongs to a $\beta$-firewall of radius $>r_\ast w$ centred at $\boldsymbol{u}_1$. 
\end{proof}

\subsubsection*{The proof of clause (c) of Theorem \ref{main2d}} This case is much simpler. Recall that in the proof of clause (b) of Theorem \ref{main2d} we chose $r_{\ast}$ so that certain conditions were satisfied.  Now that we have $\tau_{\beta}<0.5 <\tau_{\alpha}$, however, we can simply choose $r_{\ast}$ large enough such that for $0\ll w \ll n$ if $r\geq r_{\ast}$ then for any $\beta$ node $\boldsymbol{v}$ inside a $\beta$ firewall  of radius $rw$, close enough to a half of $\mathcal{N}(\boldsymbol{v})$ will lie inside the firewall to ensure that $\boldsymbol{v}$ is happy, and similarly any $\alpha$ node $\boldsymbol{v'}$ on the outer boundary of such a firewall  will be hopeful.  Such a firewall must then spread until every node is eventually contained inside it. For any $\epsilon>0$, if $n$ is sufficiently large then there will exist such a firewall in the initial configuration with probability $>1-\epsilon$. Whenever there exists such a firewall all nodes must eventually be of type $\beta$.

\section{The proofs of Theorems \ref{main2d} (d) and \ref{main2d} (e)} \label{upper half} 

In this section we work with $\tau_{\alpha},\tau_{\beta}>0.5$. The proofs are simple modifications of the proofs of Theorems \ref{main2d}(b) and \ref{main2d}(a). Rather than describing those proofs again in their entirety, with only small changes, we describe the necessary modifications. Roughly speaking, the idea is that we now replace considerations as to whether a $\beta$ node $\boldsymbol{u}$ has less than proportion $\tau_{\beta}$ many $\beta$ nodes in its neighbourhood, with the question as to whether it has proportion $\leq 1-\tau{_\alpha}$ many $\beta$ nodes. If this holds then $\boldsymbol{u}$ is unhappy as type $\beta$, but would be happy if it changed type (in fact we should take into account the effect of the type change of $\boldsymbol{u}$ on the proportion, meaning that $1-\tau_{\alpha}$ is not exactly the right proportion to consider but approaches it for large $w$). The nodes of this type, then, will be the hopeful nodes -- and these are the nodes which may initiate firewalls. Similarly, we replace the notion of an $\alpha$-stable structure, with that of an $\alpha$-intractable structure: 

\begin{defin}[Intractable structures] 
We say that a set of nodes $A$ is an $\alpha$-intractable structure if it contains $\beta$ nodes, and for every $\beta$ node $\boldsymbol{u}\in A$, $\boldsymbol{u}$ would be unhappy (as an $\alpha$ node) if all nodes in $\bar A \cup \{ \boldsymbol{u} \}$ were changed to type $\alpha$.   
\end{defin}   

\noindent So the point is that if a $\beta$ node belongs to an $\alpha$-intractable structure in the initial configuration then it can never change type. 

\subsubsection*{The proof of Theorem \ref{main2d} (e)} The proof goes through word for word the same as that of Theorem \ref{main2d} (a), if one replaces $\tau_{\alpha},\tau_{\beta}$ everywhere with $1-\tau_{\alpha}$ and $1-\tau_{\beta}$ and if one replaces `$\alpha$-stable' or `$\beta$-stable' with  `$\beta$-intractable' or `$\alpha$-intractable'. 

\subsubsection*{The proof of Theorem \ref{main2d} (d)} We define various events, which take the place of
 $\mathtt{pn}^{\alpha}_{\gamma,\tau_{\alpha}}(\boldsymbol{u})$, $\mathtt{uh}^\beta_{\tau_{\beta}}(\boldsymbol{u})$, $\mathtt{ruh}^{\alpha}_{\tau}(\boldsymbol{u})$ and $\mathtt{rju}^{\alpha}_{\tau}(\boldsymbol{u})$ in the proof of Theorem \ref{main2d} (b). 

\begin{itemize} 
\item   
Consider the initial configuration. If $\boldsymbol{u}$ is a $\beta$ node,  we let $\underline{\mathtt{pn}}^{\alpha}_{\gamma,\tau_{\alpha}}(\boldsymbol{u})$ be the event that there exists some $\gamma$-partial  neighbourhood of $\boldsymbol{u}$, $A$ say, such that $\boldsymbol{u}$ would be unhappy (as a node of type $\alpha$) if all nodes in $\bar A \cup \{ \boldsymbol{u} \}$ were changed to type $\alpha$. Note that, roughly, this corresponds to $A$ containing at least $(1-\tau_{\alpha})(2w+1)^2$ many nodes of type $\beta$.  We also let $\underline{\mathtt{pn}}^{\alpha}_{\gamma,\tau_{\alpha}}(\boldsymbol{u})[s]$ be the corresponding event for the end of stage $s$, rather than the initial configuration.

\item  We say that $\mathtt{h}^{\beta}_{\tau_{\beta}}(\boldsymbol{u})$ holds if $\boldsymbol{u}$ would be happy in the initial configuration if its type was changed to $\beta$ (or remains $\beta$ if already of this type). Note that roughly this corresponds to $\mathcal{N}(\boldsymbol{u})$ containing at most $(1-\tau_{\beta})$ many nodes of type $\alpha$. We say that 
$\underline{\mathtt{ruh}}^{\alpha}_{\tau}(\boldsymbol{u})$ holds if $\mathcal{N}^{\diamond}(\boldsymbol{u})$ contains strictly less than $(1-\tau)(2w+1)(3w+1)$ many $\beta$ nodes.  We say that $\underline{\mathtt{rju}}^{\alpha}_{\tau}(\boldsymbol{u})$ holds, if $\underline{\mathtt{ruh}}^{\alpha}_{\tau}(\boldsymbol{u})$ holds, but this would no longer be true if any node of type $\alpha$ in $\mathcal{N}^{\diamond}(\boldsymbol{u})$ changed type.  
\end{itemize}

In what follows we use the notation $f(w)\simeq g(w)$ introduced in the proof of Lemma \ref{zeta}. The crucial observation is that $ \textbf{P} (\underline{\mathtt{pn}}^{\alpha}_{\gamma,\tau_{\alpha}}(\boldsymbol{u})) \simeq  \textbf{P} (\mathtt{pn}^{\alpha}_{\gamma,1-\tau_{\alpha}}(\boldsymbol{u}))$, 
and $\textbf{P}( \mathtt{h}^{\beta}_{\tau_{\beta}}(\boldsymbol{u})) \simeq \textbf{P}( \mathtt{uh}^{\beta}_{1-\tau_{\beta}}(\boldsymbol{u})) $, while $\textbf{P} (\underline{\mathtt{ruh}}^{\alpha}_{\tau}(\boldsymbol{u})) = \textbf{P} ( \mathtt{ruh}^{\alpha}_{1-\tau}(\boldsymbol{u}))$. 
We therefore have the following analogue of Lemma \ref{zeta}: 

\begin{lem}  \label{zeta2} 
Suppose that $1-\kappa>\tau_{\alpha}>0.5$ and $\tau_{\beta} \vartriangleright \tau_{\alpha}$. Choose $\tau$ such that $\tau_{\beta}\vartriangleright \tau>\tau_{\alpha}$ and $1-\kappa >\tau$. Then there exists $\gamma>\frac{1}{2}$ and $\zeta>1$ such that for $0\ll w\ll n$, and for a node $\boldsymbol{u}$ selected uniformly at random, $\textbf{P}(\underline{\mathtt{ruh}}^\alpha_{\tau}(\boldsymbol{u}))>\zeta^w \cdot \textbf{P}(\underline{\mathtt{pn}}^{\alpha}_{\gamma,\tau_{\alpha}}(\boldsymbol{u}))$ and   $\textbf{P}(\underline{\mathtt{ruh}}^\alpha_{\tau}(\boldsymbol{u}))>\zeta^w \cdot  \textbf{P}(\mathtt{h}^{\beta}_{\tau_{\beta}}(\boldsymbol{u}))$.
\end{lem}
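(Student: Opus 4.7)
The plan is to reduce Lemma \ref{zeta2} directly to Lemma \ref{zeta} by the involution $\tau\mapsto 1-\tau$, exploiting the built-in $\alpha\leftrightarrow\beta$ symmetry of the i.i.d.\ initial configuration (each node being type $\alpha$ or $\beta$ with probability $\tfrac12$). Given the hypotheses $1-\kappa>\tau_\alpha>0.5$, $\tau_\beta\vartriangleright\tau_\alpha$, $\tau_\beta\vartriangleright\tau>\tau_\alpha$ and $1-\kappa>\tau$, set $\tau'_\alpha:=1-\tau_\alpha$, $\tau'_\beta:=1-\tau_\beta$ and $\tau':=1-\tau$. Unpacking Definition \ref{definition:suf2Dine}, these satisfy $\kappa<\tau'_\alpha<0.5$, $\tau'_\beta\lhd\tau'_\alpha$, $\tau'_\beta\lhd\tau'<\tau'_\alpha$ and $\kappa<\tau'$, i.e.\ exactly the hypotheses of Lemma \ref{zeta}. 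That lemma then yields $\gamma>\tfrac12$ and $\zeta_0>1$ such that, for $0\ll w\ll n$ and a uniformly random node $\boldsymbol{u}$,
\[
\textbf{P}(\mathtt{ruh}^{\alpha}_{\tau'}(\boldsymbol{u}))>\zeta_0^w\cdot\textbf{P}(\mathtt{pn}^{\alpha}_{\gamma,\tau'_\alpha}(\boldsymbol{u}))\quad\text{and}\quad \textbf{P}(\mathtt{ruh}^{\alpha}_{\tau'}(\boldsymbol{u}))>\zeta_0^w\cdot\textbf{P}(\mathtt{uh}^{\beta}_{\tau'_\beta}(\boldsymbol{u})).
\]

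Next I would establish the three correspondences asserted in the paragraph preceding the lemma. First, $\textbf{P}(\underline{\mathtt{ruh}}^{\alpha}_{\tau}(\boldsymbol{u}))=\textbf{P}(\mathtt{ruh}^{\alpha}_{\tau'}(\boldsymbol{u}))$ is an exact equality, because both events ask that the number of nodes of a fixed type in $\mathcal{N}^{\diamond}(\boldsymbol{u})$ be strictly less than $(1-\tau)(2w+1)(3w+1)$, and this count has the same Binomial$((2w+1)(3w+1),\tfrac12)$ distribution for either type. Second, $\textbf{P}(\underline{\mathtt{pn}}^{\alpha}_{\gamma,\tau_\alpha}(\boldsymbol{u}))\simeq\textbf{P}(\mathtt{pn}^{\alpha}_{\gamma,\tau'_\alpha}(\boldsymbol{u}))$: unpacking the definition, $\underline{\mathtt{pn}}^{\alpha}_{\gamma,\tau_\alpha}(\boldsymbol{u})$ holds iff $\boldsymbol{u}$ is of type $\beta$ and some $\gamma$-partial neighbourhood $A$ satisfies that, after setting $\bar A\cup\{\boldsymbol{u}\}$ to $\alpha$, the $\alpha$-count in $\mathcal{N}(\boldsymbol{u})$ drops below $\tau_\alpha(2w+1)^2$; equivalently $A\cap\mathcal{N}(\boldsymbol{u})\setminus\{\boldsymbol{u}\}$ originally contains strictly more than $(1-\tau_\alpha)(2w+1)^2-|\mathcal{N}(\boldsymbol{u})\setminus A|$ many $\beta$ nodes. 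Swapping $\alpha\leftrightarrow\beta$ in the initial distribution, this has the same probability as the analogous event for $\alpha$-counts, which differs from $\mathtt{pn}^{\alpha}_{\gamma,\tau'_\alpha}(\boldsymbol{u})$ only by the $\pm 1$ correction arising from whether $\boldsymbol{u}$ itself is included, producing only an $O(1)$ multiplicative discrepancy. Third, $\textbf{P}(\mathtt{h}^{\beta}_{\tau_\beta}(\boldsymbol{u}))\simeq\textbf{P}(\mathtt{uh}^{\beta}_{\tau'_\beta}(\boldsymbol{u}))$ by binomial symmetry: both are Binomial$((2w+1)^2,\tfrac12)$ tail probabilities at thresholds that are reflections of one another about the mean $\tfrac12(2w+1)^2$ and differ by at most one, and the $\alpha\leftrightarrow\beta$ swap interchanges the two tails.

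Combining, both inequalities delivered by Lemma \ref{zeta} pick up at most a constant multiplicative factor when translated to the underlined/flat versions, so there is $C>0$ with
\[
\textbf{P}(\underline{\mathtt{ruh}}^{\alpha}_{\tau}(\boldsymbol{u}))>C^{-1}\zeta_0^w\cdot\textbf{P}(\underline{\mathtt{pn}}^{\alpha}_{\gamma,\tau_\alpha}(\boldsymbol{u})),
\]
and similarly for the second pair. Picking any $\zeta$ with $1<\zeta<\zeta_0$, for $w$ large enough one has $C^{-1}(\zeta_0/\zeta)^w>1$, so both conclusions hold with this $\zeta$. The main technical obstacle is the second bullet: one has to unpack $\underline{\mathtt{pn}}^{\alpha}_{\gamma,\tau_\alpha}$ carefully and check that the integer-rounding and the role of the central node $\boldsymbol{u}$ contribute only an $O(1)$ correction, rather than something that corrupts the exponential comparison; once this is in hand, the rest is a straightforward transcription of Lemma \ref{zeta} through the $\alpha\leftrightarrow\beta$ symmetry.
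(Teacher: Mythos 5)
Your proposal is correct and takes essentially the same route as the paper: the paper's proof of Lemma~\ref{zeta2} is precisely the preceding paragraph, which records the three correspondences $\textbf{P}(\underline{\mathtt{ruh}}^{\alpha}_{\tau}(\boldsymbol{u}))=\textbf{P}(\mathtt{ruh}^{\alpha}_{1-\tau}(\boldsymbol{u}))$, $\textbf{P}(\underline{\mathtt{pn}}^{\alpha}_{\gamma,\tau_\alpha}(\boldsymbol{u}))\simeq\textbf{P}(\mathtt{pn}^{\alpha}_{\gamma,1-\tau_\alpha}(\boldsymbol{u}))$, $\textbf{P}(\mathtt{h}^{\beta}_{\tau_\beta}(\boldsymbol{u}))\simeq\textbf{P}(\mathtt{uh}^{\beta}_{1-\tau_\beta}(\boldsymbol{u}))$ and transfers Lemma~\ref{zeta} across the involution $\tau\mapsto 1-\tau$. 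The only small caveat is that the paper's $\simeq$ allows polynomial (not just constant) multiplicative corrections, but since you then shrink $\zeta_0$ to $\zeta<\zeta_0$ to absorb whatever factor arises, this does not affect the argument.
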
 

With Lemma \ref{zeta2} in place, the remainder of the proof then goes through almost identically to that for Theorem \ref{main2d} (b), with (for the new values $\tau_{\alpha},\tau_{\beta}$ satisfying the conditions in the statement of Theorem \ref{main2d} (d), and for $\tau$ chosen as in Lemma \ref{zeta2}),  $\underline{\mathtt{pn}}^{\alpha}_{\gamma,\tau_{\alpha}}(\boldsymbol{u})$, $\mathtt{h}^\beta_{\tau_{\beta}}(\boldsymbol{u})$, $\underline{\mathtt{ruh}}^{\alpha}_{\tau}(\boldsymbol{u})$ and $\underline{\mathtt{rju}}^{\alpha}_{\tau}(\boldsymbol{u})$ taking the place of  $\mathtt{pn}^{\alpha}_{\gamma,1-\tau_{\alpha}}(\boldsymbol{u})$, $\mathtt{uh}^\beta_{1-\tau_{\beta}}(\boldsymbol{u})$, $\mathtt{ruh}^{\alpha}_{1-\tau}(\boldsymbol{u})$ and $\mathtt{rju}^{\alpha}_{1-\tau}(\boldsymbol{u})$ respectively. Now, however, we look to show that $\boldsymbol{u}_0$ which is chosen uniformly at random, will very probably eventually be of type $\alpha$. 

So we choose $r_{\ast}$ satisfying $(\dagger_a)$ and $(\dagger_b)$ of Section \ref{overview}  and observe, in a manner entirely analogous to what took place before, that if $r\geq r_{\ast}$ then for $0\ll w \ll n$, any $\beta$ node $\boldsymbol{v}$ on the outer boundary of a $\alpha$-firewall of radius $rw$ at stage $s$ will be hopeful,  so long as  $\underline{\mathtt{pn}}^{\alpha}_{\gamma,\tau_{\alpha}}(\boldsymbol{v})[s]$ does not hold. Then we define $\mathcal{Q}_0$ and $\mathcal{Q}$ as before, but using $\pi(w)$ which is now defined in terms of 
$\textbf{P}(\underline{\mathtt{ruh}}^{\alpha}_{\tau}(\boldsymbol{u}))$.  Our new version of 
$\mathtt{Typical \ vicinity}$ holds iff: 
\begin{enumerate}[\hspace{0.5cm}1.] 
\item $\underline{\mathtt{ruh}}^{\alpha}_{\tau}(\boldsymbol{u}_0)$ does not hold; 
\item There exists $\boldsymbol{u}\in \mathcal{Q}_0$ such that $\underline{\mathtt{ruh}}^{\alpha}_{\tau}(\boldsymbol{u})$ holds; 
\item There does not exist any $\boldsymbol{u}$ in $\mathcal{Q}$ such that either of  $\underline{\mathtt{pn}}_{\gamma,\tau_{\alpha}}^{\alpha}(\boldsymbol{u})$ or $\mathtt{h}^{\beta}_{\tau_{\beta}}(\boldsymbol{u})$ hold. 
\end{enumerate} 

As before, we may conclude that  if $0\ll w \ll n$ then (our new version of)  $\mathtt{Typical\  vicinity}$ fails to hold with probability $\ll \epsilon_0$.  In our new version of Definition \ref{smoothdef}, we replace ``type $\alpha$'' with ``type $\beta$''. In Definition \ref{counters} we again replace ``type $\alpha$'' with ``type $\beta$'' and vice versa, and we replace $\tau$ with $1-\tau$, so that these functions now detail the proportion of nodes of type $\beta$ (rather than $\alpha$ as previously was the case). In Definition \ref{k1} and Lemma \ref{firewall} we replace $\tau$ and $\tau_{\alpha}$ with $1-\tau$ and $1-\tau_{\alpha}$ respectively and we exchange ``type $\alpha$'' everywhere for ``type $\beta$'', and vice versa. Then the proof of our new version of Lemma \ref{firewall} goes through as before, except that now we are inductively able to argue that all nodes in the relevant sets $R^i_j$ are eventually of type $\alpha$, and that $C^0(\boldsymbol{u}) $ will eventually be a firewall of type $\alpha$.  Finally, using the same argument as in the proof of Lemma \ref{final} but using our new values of $\mathcal{Q}_0$, $m_0$ and so on, and replacing $\beta$ with $\alpha$, we conclude that if $\mathtt{Typical  \ vicinity}$ and $\mathtt{Smooth}(k_1,\epsilon_1)$ both hold then there exists a stage after which $\boldsymbol{u}_0$ is always of type $\alpha$. 

\section{Modifying the proofs for the three dimensional model} \label{sec five}

First note that Theorems  \ref{main3d} (a)  and \ref{main3d} (c) can be proved exactly as in the two dimensional case, if we simply modify all of the two dimensional notions to their three dimensional counterparts in the obvious way. Circles are replaced by spheres and rectangles by three dimensional blocks. We also extend the notation $A^{\dagger}$ to three dimensional space in the obvious way. Then Theorems   \ref{main3d} (d)  and \ref{main3d} (e) will once again follow by the same symmetry considerations  that we applied previously in Section \ref{upper half}, once Theorem \ref{main3d} (b) is established. Once again then, the bulk of the work is in establishing Theorem \ref{main3d} (b).

To establish Theorem \ref{main3d} (b), we can apply \emph{almost} exactly the same proof as in the two dimensional case. In that previous case, however, we worked quite hard in order to give a low value for $\kappa$, resulting in a slightly fiddly argument for the proof of Lemma \ref{firewall}, which does not obviously extend to three dimensional space. For the three dimensional model,  we shall be a little lazy and shall work with an analogue of the event   $\mathtt{ruh}^{\alpha}_{\tau}(\boldsymbol{u})$ which gives a value for $\kappa_{\ast}$ which could easily be improved upon, but which allows for a very simple proof in our three dimensional counterpart to Lemma \ref{firewall}.  

We redefine $\mathtt{pn}^{\alpha}_{\gamma,\tau}(\boldsymbol{u})$ and 
$\mathtt{uh}^{\alpha}_{\tau}(\boldsymbol{u})$ in the obvious way, and also define our analogue of  $\mathtt{ruh}^{\alpha}_{\tau}(\boldsymbol{u})$ as follows: 

\begin{itemize} 
\item Suppose that $P$ is a plane passing through $\mathcal{N}^{\dagger}(\boldsymbol{u})$. Let $A_1$ be all those points in $\mathcal{N}^{\dagger}(\boldsymbol{u})$ on or above $P$, and let $A_2$ be all those points in $\mathcal{N}^{\dagger}(\boldsymbol{u})$ on or below $P$. If $\mu(A_i^{\dagger})=\gamma(2w+1)^3$, then we call $A_i$ a $\gamma$-partial neighbourhood of $\boldsymbol{u}$, with defining plane $P$.   
If $\boldsymbol{u}$ is an $\alpha$ node,  we let $\mathtt{pn}^{\alpha}_{\gamma,\tau}(\boldsymbol{u})$ be the event that there exists some $\gamma$-partial  neighbourhood of $\boldsymbol{u}$ which  contains at least $\tau(2w+1)^3$ many $\alpha$ nodes in the initial configuration (and similarly for $\beta$).

\item  We say that $\mathtt{uh}^{\alpha}_{\tau}(\boldsymbol{u})$ holds if there are strictly less than $\tau(2w+1)^3$ many $\alpha$ nodes in $\mathcal{N}(\boldsymbol{u})$ in the initial configuration.   If  $\boldsymbol{u}=(x,y,z)$ then the \emph{extended neighbourhood} of $\boldsymbol{u}$  is the set of nodes $(x',y',z')$ such that $|x-x'|, |y-y'||z-z'|\leq \lceil \frac{3}{2}w \rceil$.  We say that  $\mathtt{euh}^{\alpha}_{\tau}(\boldsymbol{u})$  holds if there are strictly less than $\tau(3w+1)^3$ many $\alpha$ nodes in the extended neighbourhood of $\boldsymbol{u}$ in the initial configuration.
\end{itemize}

Arguing almost exactly as in the proof of Lemma \ref{zeta}, we then obtain the following three dimensional analogue: 

\begin{lem}  \label{zetathree} 
Suppose that $\kappa_{\ast}<\tau_{\alpha}<0.5$ and $\tau_{\beta} \leftslice \tau_{\alpha}$. Choose $\tau$ such that $\tau_{\beta}\leftslice \tau<\tau_{\alpha}$ and $\kappa_{\ast} <\tau$. Then there exists $\gamma>\frac{1}{2}$ and $\zeta>1$ such that for $0\ll w\ll n$, and for a node $\boldsymbol{u}$ selected uniformly at random, $\textbf{P}(\mathtt{euh}^\alpha_{\tau}(\boldsymbol{u}))>\zeta^w \cdot \textbf{P}(\mathtt{pn}^{\alpha}_{\gamma,\tau_{\alpha}}(\boldsymbol{u}))$ and   $\textbf{P}(\mathtt{euh}^\alpha_{\tau}(\boldsymbol{u}))>\zeta^w \cdot  \textbf{P}(\mathtt{uh}^{\beta}_{\tau_{\beta}}(\boldsymbol{u}))$.
\end{lem}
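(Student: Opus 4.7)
The plan mirrors that of Lemma \ref{zeta}: apply Stirling's approximation to the relevant binomial tails, reduce each of the two required inequalities to a comparison of leading exponents of order $w^3$, and observe that the defining conditions $\tau_\beta\leftslice\tau$ and $\tau_\alpha>\kappa_{\ast}$ are tailored to make these comparisons go through. Since node types are i.i.d.\ uniform in the initial configuration, for any fixed region of $N$ sites the number of $\alpha$ nodes is $\mathrm{Bin}(N,1/2)$, and the standard Stirling estimate gives, for $\tau\in(0,1/2)$ and $N\to\infty$,
$$\textbf{P}\bigl(\mathrm{Bin}(N,1/2)<\tau N\bigr)=\mathrm{poly}(N)^{\pm 1}\cdot\bigl(2g(\tau,1)\bigr)^{-N},$$
with a symmetric statement for upper tails. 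Throughout I use the identity $g(x,k)=g(x,1)^k$, which is immediate from Definition \ref{de:suff3D}.

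For the first inequality I apply this estimate with $N=(3w+1)^3=27w^3+O(w^2)$ for $\mathtt{euh}^\alpha_\tau(\boldsymbol{u})$ and $N=(2w+1)^3=8w^3+O(w^2)$ for $\mathtt{uh}^\beta_{\tau_\beta}(\boldsymbol{u})$. Collecting leading-order terms, the ratio $\textbf{P}(\mathtt{euh}^\alpha_\tau(\boldsymbol{u}))/\textbf{P}(\mathtt{uh}^\beta_{\tau_\beta}(\boldsymbol{u}))$ is $\exp(O(w^2))\cdot\bigl(g(\tau_\beta,8)/(2^{19}g(\tau,27))\bigr)^{w^3}$. Since $\tau_\beta\leftslice\tau$ means exactly $g(\tau_\beta,8)>2^{19}g(\tau,27)$, the base exceeds $1$; the $w^3$-exponential growth dominates both the $O(w^2)$ correction in the exponent and the required $\zeta^w$ factor, for any $\zeta>1$.

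For the second inequality, the event $\mathtt{pn}^\alpha_{\gamma,\tau_\alpha}(\boldsymbol{u})$ is a union over $\gamma$-partial neighbourhoods of an upper-tail event; two such neighbourhoods define the same event whenever their planes induce the same bipartition of the $O(w^3)$ sites of $\mathcal{N}(\boldsymbol{u})$. Since each bipartition is realised by a plane through at most three sites, there are only polynomially many combinatorially distinct partial neighbourhoods, and a union bound gives, for any $\gamma\in(1/2,2\tau_\alpha)$ (such $\gamma$ exists as $\tau_\alpha>\kappa_{\ast}>1/4$),
$$\textbf{P}(\mathtt{pn}^\alpha_{\gamma,\tau_\alpha}(\boldsymbol{u}))\leq\mathrm{poly}(w)\cdot\bigl(2g(\tau_\alpha/\gamma,1)\bigr)^{-\gamma(2w+1)^3}.$$
Comparing leading $w^3$-exponents in $\textbf{P}(\mathtt{euh}^\alpha_\tau)/\textbf{P}(\mathtt{pn}^\alpha_{\gamma,\tau_\alpha})$ and taking $\gamma\to 1/2^+$, $\tau\to\tau_\alpha^-$, the ratio is exponentially large in $w^3$ exactly when
$$(1-2\tau_\alpha)^{4(1-2\tau_\alpha)}>2^{23-8\tau_\alpha}\,\tau_\alpha^{19\tau_\alpha}\,(1-\tau_\alpha)^{27(1-\tau_\alpha)},$$
which is the strict form of the equation defining $\kappa_{\ast}$; this holds for $\tau_\alpha>\kappa_{\ast}$, as one checks by noting that at $\tau_\alpha=1/2$ the left side equals $1$ while the right side equals $2^{-4}$, so a simple sign analysis across the unique root $\kappa_{\ast}\in(0,1/2)$ gives the required strict inequality throughout $(\kappa_{\ast},1/2)$. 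By continuity, one can fix $\gamma$ slightly above $1/2$ and $\tau$ slightly below $\tau_\alpha$ for which the strict inequality survives, and then pick a common $\zeta>1$ satisfying both asserted inequalities. The only real obstacle is the bookkeeping: checking that the $O(w^2)$ corrections from the expansion of $(kw+1)^3$, the Stirling polynomial prefactors, and the polynomial union-bound cost over partial neighbourhoods are all comfortably absorbed by the $w^3$-exponential gap — which they are.
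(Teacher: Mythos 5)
Your overall approach is sound and is a genuine variant of the paper's: you compare $\textbf{P}(\mathtt{euh}^\alpha_\tau)$ to $\textbf{P}(\mathtt{pn}^\alpha_{\gamma,\tau_\alpha})$ directly via Stirling's bound and a polynomial union bound over planes, whereas the paper routes the second comparison through intermediate events (analogues of $\mathtt{ln}^\alpha_\tau$ and $\mathtt{rn}^\alpha_{\tau'}$) and a conditional-expectation argument, with an explicit choice $\gamma=(\tau_\alpha+\tau')/(4\tau')$. Your first inequality is fine: the computation $g(\tau_\beta,8)/(2^{19}g(\tau,27))>1$ from $\tau_\beta\leftslice\tau$, with $w^3$ in the exponent dominating the polynomial prefactors, the $O(w^2)$ corrections from $(kw+1)^3$, and the desired $\zeta^w$, is exactly right.

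There is however a real gap in your justification of the second inequality. You compute the leading $w^3$-exponent and then ``take $\gamma\to 1/2^+$, $\tau\to\tau_\alpha^-$,'' concluding the condition is the strict $\kappa_\ast$-inequality in $\tau_\alpha$, and then invoke continuity to ``fix $\gamma$ slightly above $1/2$ and $\tau$ slightly below $\tau_\alpha$.'' But $\tau$ is not at your disposal: it is given in the lemma statement and may be well below $\tau_\alpha$ (subject only to $\kappa_\ast<\tau<\tau_\alpha$ and $\tau_\beta\leftslice\tau$). As written, your argument only establishes the conclusion for $\tau$ in some unspecified neighbourhood of $\tau_\alpha$. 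The continuity you may legitimately invoke is only in $\gamma$. The fix is short and worth making explicit: at $\gamma=1/2$ you need $(2g(2\tau_\alpha,1))^4>(2g(\tau,1))^{27}$, and this holds for every $\tau$ with $\kappa_\ast<\tau<\tau_\alpha$ by the two-step chain
\[
(2g(2\tau_\alpha,1))^4 \;>\; (2g(2\tau,1))^4 \;>\; (2g(\tau,1))^{27},
\]
where the first inequality uses $\tau_\alpha>\tau$ together with the fact that $x\mapsto g(2x,1)$ is increasing on $(1/4,1/2)$, and the second inequality is literally the statement $\tau>\kappa_\ast$ (rewrite the $\kappa_\ast$-defining relation as $(2g(2x,1))^4=(2g(x,1))^{27}$ and observe the ratio is increasing past $\kappa_\ast$). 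Once this strict inequality is established at $\gamma=1/2$ for the given $\tau$, continuity in $\gamma$ alone produces a $\gamma>1/2$ making the ratio exponentially large in $w^3$, and the rest of your bookkeeping (polynomial union bound over combinatorially distinct planes, $O(w^2)$ corrections, choice of $\zeta$) goes through as you say.
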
 

The only significant task in modifying the proof to work in three dimensions is then in forming the analogue of Lemma \ref{firewall}, which is now much easier. We define $\mathtt{eju}^{\alpha}_{\tau}(\boldsymbol{u})$ in terms of $\mathtt{euh}^{\alpha}_{\tau}(\boldsymbol{u})$, just as $\mathtt{rju}^{\alpha}_{\tau}(\boldsymbol{u})$ was defined in terms of $\mathtt{ruh}^{\alpha}_{\tau}(\boldsymbol{u})$, i.e.\  $\mathtt{eju}^{\alpha}_{\tau'}(\boldsymbol{u})$ holds if there are less than $\tau'(3w+1)^3$ many $\alpha$ nodes in the extended neighbourhood of $\boldsymbol{u}$, but  changing the type of $(3w+1)^2$ $\beta$ nodes in this extended neighbourhood would cause this not to be the case. Rather than inductively building a sequence of larger and larger rectangles for which all nodes will become of type $\beta$, we simply observe that satisfaction of $\mathtt{eju}^{\alpha}_{\tau}(\boldsymbol{u})$ and (the three dimensional analogue of) $\mathtt{Smooth}_{k_1,\epsilon_1}(\boldsymbol{u})$ for appropriately large $k_1$ and small $\epsilon_1$, suffices to ensure that all $\alpha$ nodes in a cube centred at $\boldsymbol{u}$ with sides of length $w$, are unhappy in the initial configuration. In particular this certainly implies that all nodes in the interior of a sphere of diameter $w$ centred at $\boldsymbol{u}$ will be of type $\beta$ by the end of stage 1. Now satisfaction of $\mathtt{Smooth}_{k_1,\epsilon_1}(\boldsymbol{u})$ suffices to ensure that all $\alpha$ nodes on the outer boundary of this sphere are hopeful at the end of stage 1, and it then follows that this sphere of $\beta$ nodes will grow at subsequent stages as required.

\section{Deferred proofs} \label{defer} 

\subsection{{\bf The proof of Lemma \ref{zeta}}} We previously defined $\gamma$-partial neighbourhoods and right-extended neighbourhoods. It will be useful to consider also some related notions:

\begin{itemize} 
\item By the \emph{lower neighbourhood} of a node $\boldsymbol{u}=(x,y)$, denoted $\mathtt{ln}(\boldsymbol{u})$, we mean  the set of nodes $\{ (x',y')\in \mathcal{N}(\boldsymbol{u}):\ y'<y\ \mbox{or}\ (y'=y\ \& \ x'\leq x)\}$.
If $\boldsymbol{u}$ is an $\alpha$ node and $\tau\in [0,1]$, we let $\mathtt{ln}^\alpha_{\tau}(\boldsymbol{u})$ be the event that $\mathtt{ln}(\boldsymbol{u})$  contains at least $\tau(2w+1)^2$ many $\alpha$ nodes in the initial configuration. So $\mathtt{ln}^\alpha_{\tau_{\alpha}}(\boldsymbol{u})$ is the event that the lower neighbourhood already has enough $\alpha$ nodes to ensure that $\boldsymbol{u}$ is happy.  We let 
$\mathtt{ln}^{\beta}_{\tau}(\boldsymbol{u})$ denote the corresponding event when  $\alpha$ is replaced everywhere by $\beta$.\footnote{One should think of $\mathtt{ln}$ as (standing for) l-ower n-eighbourhood, and $\mathtt{rn}$ as r-otated n-eighbourhood.}   

\item Suppose that $\ell$ is any straight line passing through $\boldsymbol{u}$. Let $A_1$ be all those nodes in $\mathcal{N}(\boldsymbol{u})$ strictly below $\ell$, and let  $A_2$ be all those nodes in $\mathcal{N}(\boldsymbol{u})$ strictly above $\ell$. Suppose that $\ell$ intersects the boundary of $\mathcal{N}^{\dagger}(\boldsymbol{u})$ at $\boldsymbol{x}_1$ and $\boldsymbol{x}_2$ (ordered arbitrarily). Let $\ell_i$ be the (closed) straight line segment between $\boldsymbol{x}_i$ and $\boldsymbol{u}$. If $L \subset \mathcal{N}(\boldsymbol{u})$ is such that there exist $i,j \in \{ 0,1 \}$, $L= A_i \cup (\ell_j \cap \mathcal{N}(\boldsymbol{u}))$ then we call $L$ a \emph{rotated lower neighbourhood} of $\boldsymbol{u}$, with defining line $\ell$. If $\boldsymbol{u}$ is an $\alpha$ node and $\tau\in [0,1]$,  we let $\mathtt{rn}^{\alpha}_{\tau}(\boldsymbol{u})$ be the event that there exists some rotated lower neighbourhood of $\boldsymbol{u}$ which  contains at least $\tau(2w+1)^2$ many $\alpha$ nodes in the initial configuration (and similarly for $\beta$). 
\end{itemize}

\begin{figure}
 \centering
 \scalebox{0.4}{
\begin{tikzpicture}
  \filldraw[fill=green!20,draw=green!50!black] (0,0) -- (8,0) -- (8,4) -- (0,4) -- (0,0) ;
 \draw[very thin,color=gray] (0,0) grid (8,8);
\node[fill=red,circle] (b) at (4,4) {};
 \draw[black,very thick]    (0,4) -- (4,4);
 \draw[black,very thick, dashed]    (4,4) -- (8,4);
\end{tikzpicture}\hspace{1.5cm}
\begin{tikzpicture}
  \filldraw[fill=green!20,draw=green!50!black] (1.7,8) -- (6.3,0) -- (0,0) -- (0,8) -- (1.7,8);
 \draw[very thin,color=gray] (0,0) grid (8,8);
\node[fill=red,circle] (b) at (4,4) {};
 \draw[black,very thick]    (1.7,8) -- (4,4);
 \draw[black, dashed, very thick]    (4,4) -- (6.3,0);
\end{tikzpicture}\hspace{1.5cm}
\begin{tikzpicture}
  \filldraw[fill=green!20,draw=green!50!black] (0,0) -- (0,5.3) --  (8,4.7) -- (8,0) -- (0,0) ;
 \draw[very thin,color=gray] (0,0) grid (8,8);
\node[fill=red,circle] (b) at (4,4) {};
 \draw[black,very thick]    (0,5.3) -- (8,4.7);
 \end{tikzpicture}}

\caption{Lower, rotated lower and $\gamma$-partial neighbourhoods respectively. }
\label{fig:neighbourhoods}
\end{figure}
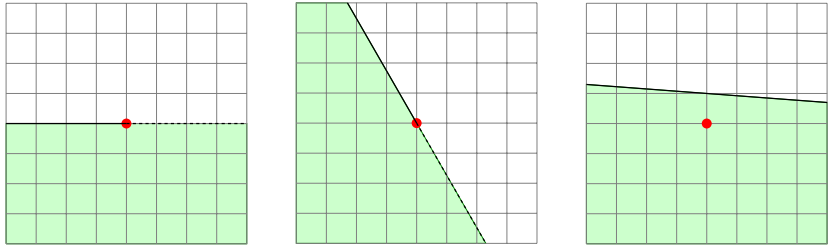

So the various neighbourhoods with which we are concerned are really very simple, and are depicted in Figure \ref{fig:neighbourhoods}. The lower neighbourhood is (roughly) just the bottom half of the neighbourhood, while the rotated lower neighbourhood is just a rotation of this. Recall that a $\gamma$-partial neighbourhood of $\boldsymbol{u}$ is just a subset of $\mathcal{N}(\boldsymbol{u})$ defined by a straight line, and which has measure $\gamma(2w+1)^2$. The right extended neighbourhood results roughly from tacking on an extra half of $\mathcal{N}(\boldsymbol{u})$ to the right.

We prove the following for $\boldsymbol{u}$ chosen uniformly at random: 

\begin{enumerate} 
\item $\tau >\kappa$ implies that there exists $\zeta>1$ such that for $0\ll w \ll n$,  $\textbf{P}(\mathtt{ruh}^\alpha_{\tau}(\boldsymbol{u}))>\zeta^w \cdot \textbf{P}(\mathtt{ln}^{\alpha}_{\tau}(\boldsymbol{u}))$. 
\item $\tau_{\beta}\lhd \tau$ implies that  there exists $\zeta>1$ such that for $0\ll w \ll n$, $\textbf{P}(\mathtt{ruh}^\alpha_{\tau}(\boldsymbol{u}))>\zeta^w \cdot  \textbf{P}(\mathtt{uh}^{\beta}_{\tau_{\beta}}(\boldsymbol{u}))$.
\item $\tau_\alpha >\tau$ implies that there exists $\gamma>\frac{1}{2}$ such that for $0\ll w \ll n$, $\textbf{P}(\mathtt{ln}^{\alpha}_{\tau}(\boldsymbol{u}))>\textbf{P}(\mathtt{pn}^{\alpha}_{\gamma, \tau_\alpha}(\boldsymbol{u}))$. 
\end{enumerate} 
\noindent
With these facts established,  (1) and (3) combined then suffice to show that, for the $\zeta$ guaranteed by (1), there exists $\gamma>\frac{1}{2}$ such that for $0\ll w \ll n$, $\textbf{P}(\mathtt{ruh}^\alpha_{\tau}(\boldsymbol{u}))>\zeta^w \cdot \textbf{P}(\mathtt{pn}^{\alpha}_{\gamma,\tau_{\alpha}}(\boldsymbol{u}))$.

First of all we note some basic facts about the binomial distribution. Define $b(N,k)= 2^{-N}{N \choose k}$, $b(N,\leq k)= \sum_{k'\leq k} b(N,k')$ and  $b(N,\geq k)= \sum_{k'\geq k} b(N,k')$. For the duration of this proof, we shall write $f(N)\simeq g(N)$ in order to indicate that there exist polynomials $P$ and $Q$ such that  $f(N)\cdot P(N)>g(N)$ and $g(N)\cdot Q(N)>f(N)$  for all $N$. We also write $f(N)\succsim g(N)$ to indicate that there exists $f'$ such that $f(N)\geq f'(N)$ for all $N$ and $f'(N)\simeq g(N)$. 
Now it follows, by direct inspection and from standard bounds on the tail of the binomial distribution (see for example Theorem 1.1. of \cite{BB}) that the following hold for fixed $0<\tau_1<0.5 <\tau_2 <1$ and for all integers  $c,d\in [-2,2]$: 

\begin{enumerate}[] 
\item $(\fgeU_1)$\hspace{0.5cm} $b(N, \leq \lfloor \tau_1 N \rfloor) \simeq b(N+c, \leq  \lfloor \tau_1 N \rfloor +d) 
\simeq b(N, \lfloor \tau_1 N \rfloor) \simeq b(N +c, \lfloor \tau_1 N \rfloor +d)$; 
\item $(\fgeU_2)$\hspace{0.5cm}  $b(N, \geq \lfloor \tau_2 N \rfloor) \simeq b(N+c, \geq  \lfloor \tau_2 N \rfloor +d) 
\simeq b(N, \lfloor \tau_2 N \rfloor) \simeq b(N +c, \lfloor \tau_2 N \rfloor +d)$. 
\end{enumerate} 

Now put $N=2w(w+1)$, so that $(2w+1)^2=2N+1$. For $\tau_{\alpha},\tau$ and $\tau_{\beta}$ in the range given by the conditions of the lemma, it follows from $(\fgeU_1)$ and $(\fgeU_2)$ that: 

\begin{enumerate}[] 
\item $(\fgeU_3)$\hspace{0.5cm} $\textbf{P}(\mathtt{ruh}^{\alpha}_{\tau}(\boldsymbol{u}))\succsim b(3N, \lceil 3\tau N \rceil)$, $\textbf{P}(\mathtt{ln}^{\alpha}_{\tau}(\boldsymbol{u})) \simeq b(N, \lceil 2\tau N \rceil)$ and 
$\textbf{P}(\mathtt{uh}^{\beta}_{\tau_\beta}(\boldsymbol{u})) \simeq b(2N, \lfloor \tau_{\beta} 2N \rfloor)$. 
 \end{enumerate} 

\subsection*{Proving (1)} From $(\fgeU_3)$ we get that $\textbf{P}(\mathtt{ruh}^{\alpha}_{\tau}(\boldsymbol{u}))/ \textbf{P}(\mathtt{ln}^{\alpha}_{\tau}(\boldsymbol{u})) )\succsim   b(3N, \lceil 3\tau N \rceil)/ b(N, \lceil 2\tau N \rceil)$. Applying Stirling's approximation, we see that the powers of $e$ cancel and: 

\[ \textbf{P}(\mathtt{ruh}^{\alpha}_{\tau}(\boldsymbol{u}))/ \textbf{P}(\mathtt{ln}^{\alpha}_{\tau}(\boldsymbol{u})) )\succsim  \frac{2^N (3N)^{3N+0.5} (2\tau N)^{2\tau N +0.5} (N(1-2\tau))^{N(1-2\tau)+0.5}}{2^{3N}(3N\tau)^{3N\tau +0.5}(3N(1-\tau))^{3N(1-\tau)+0.5}N^{N+0.5}}. \] 

\noindent Simplifying this we get: 

\begin{enumerate}[] 
\item $(\fgeU_4)$\hspace{0.5cm} $ \textbf{P}(\mathtt{ruh}^{\alpha}_{\tau}(\boldsymbol{u}))/ \textbf{P}(\mathtt{ln}^{\alpha}_{\tau}(\boldsymbol{u})) \succsim  \left( \frac{(1-2\tau)^{1-2\tau +1/(2N)}}{2^{2-2\tau-1/(2N)}\tau^{\tau}(1-\tau)^{3(1-\tau)+1/(2N)}} \right)^N. $
\end{enumerate} 

\noindent Now since $\tau>\kappa$ we have $(1-2\tau)^{1-2\tau} > 2^{2(1-\tau)}\tau^\tau (1-\tau)^{3(1-\tau)}$. Thus there exists $\zeta >1$ so that for all sufficiently large $N$ the term inside the brackets in $(\fgeU_4)$ is $>\zeta $, giving the result. 

\subsection*{Proving (2)}  From $(\fgeU_3)$ we get that $\textbf{P}(\mathtt{ruh}^{\alpha}_{\tau}(\boldsymbol{u}))/ \textbf{P}(\mathtt{uh}^{\beta}_{\tau_\beta}(\boldsymbol{u})) \succsim  b(3N, \lceil 3\tau N \rceil)/ b(2N, \lfloor \tau_\beta 2N \rfloor)$. Applying Stirling's approximation, and simplifying as before we get:

\begin{enumerate}[] 
\item $(\fgeU_5)$\hspace{0.5cm} $ \textbf{P}(\mathtt{ruh}^{\alpha}_{\tau}(\boldsymbol{u}))/ \textbf{P}(\mathtt{uh}^{\beta}_{\tau_\beta}(\boldsymbol{u})) \succsim \left( \frac{ \tau_{\beta}^{2\tau_\beta +1/(2N)} (1-\tau_{\beta})^{2(1-\tau_{\beta})+1/(2N)}}{2^{1-1/(2N)}\tau^{3\tau +1/(2N)}(1-\tau)^{3(1-\tau)+1/(2N)}} \right)^N. $
\end{enumerate} 

\noindent Now since $\tau_{\beta} \lhd \tau$ we have that $ \tau_{\beta}^{2\tau_\beta} (1-\tau_{\beta})^{2(1-\tau_{\beta})} > 2 \tau^{3\tau}(1-\tau)^{3(1-\tau)}$. Thus there exists $\zeta >1$ so that for all sufficiently large $N$ the term inside the brackets in $(\fgeU_5)$ is $>\zeta $, giving the result.

\subsection*{Proving (3)} Choose $\tau'$ with $\tau<\tau'<\tau_{\alpha}$. Note first that there exists $\zeta>1$ such that for $0\ll w\ll n$, $\textbf{P}(\mathtt{ln}^{\alpha}_{\tau}(\boldsymbol{u}))>\zeta^w \cdot \textbf{P}(\mathtt{ln}^{\alpha}_{\tau'}(\boldsymbol{u}))$. We then look to show that $\textbf{P}(\mathtt{rn}^{\alpha}_{\tau'}(\boldsymbol{u}))$ is at most $2(2w+1)^2 \cdot \textbf{P}(\mathtt{ln}^{\alpha}_{\tau'}(\boldsymbol{u}))$ -- meaning that once again there exists  $\zeta>1$ such that for $0\ll w\ll n$, $\textbf{P}(\mathtt{ln}^{\alpha}_{\tau}(\boldsymbol{u}))>\zeta^w \cdot \textbf{P}(\mathtt{rn}^{\alpha}_{\tau'}(\boldsymbol{u}))$. In order to see this, observe first that each rotated lower neighbourhood of $\boldsymbol{u}$ contains precisely the same number of nodes, and so is precisely as likely to contain $\tau'(2w+1)^2$ many nodes of type $\alpha$. As the defining line $\ell$  rotates through an angle of $2\pi$, each node leaves the corresponding rotated lower neighbourhood precisely once and also joins precisely once. Thus there are at most $2(2w+1)^2$ distinct rotated lower neighbourhoods of $\boldsymbol{u}$ and the probability that at least one of them contains at least $\tau'(2w+1)^2$ many nodes of type $\alpha$ is at most $2(2w+1)^2$ times the probability that a given one does. Finally we show that, for an appropriate choice of $\gamma>\frac{1}{2}$, there exists a constant $c$ such that $\textbf{P}(\mathtt{pn}_{\gamma,\tau_\alpha}^{\alpha}(\boldsymbol{u}))<c\cdot \textbf{P}(\mathtt{rn}_{\tau'}^{\alpha}(\boldsymbol{u}))$. So define: 
\[ \gamma = \frac{\tau_{\alpha}+\tau'}{4\tau'}. \] 
Suppose we are given that $\mathtt{pn}^{\alpha}_{\gamma,\tau_{\alpha}}(\boldsymbol{u})$ holds. Given $A$ which is a $\gamma$-partial neighbourhood of $\boldsymbol{u}$ containing at least $\tau_{\alpha}(2w+1)^2$ many nodes of type $\alpha$, with defining line $\ell$ say, consider a rotated lower neighbourhood of $\boldsymbol{u}$  which is a subset of $A$, with defining line parallel to $\ell$. Call this rotated lower neighbourhood $A_0$.  Let $z_0$ be the expected number of $\alpha$ nodes in $A_0$ (given the stated conditions on $A$), and let $z_1$ be the actual number of $\alpha$ nodes in $A_0$. As $w\rightarrow \infty$: 

 \[ z_0/(2w+1)^2\rightarrow z^{\ast}:=\frac{2\tau'\tau_{\alpha}}{(\tau_{\alpha}+\tau')}, \] 
 and for any $\epsilon>0$, the probability that $|z_1/(2w+1)^2-z^{\ast}|>\epsilon$ tends to 0. The result follows since $(2\tau_{\alpha})/((\tau_{\alpha}+\tau'))>1$.

\subsection{{\bf The proof of Lemma \ref{smoothlem}}} If $\theta(w)$ is the proportion of the nodes in $\mathcal{N}^{\diamond}(\boldsymbol{u})$ which are of type $\alpha$ when $ \mathtt{rju}^{\alpha}_{\tau}(\boldsymbol{u})$ holds, then $\theta(w)\rightarrow \tau$ as $w\rightarrow \infty$. Since we consider $k_1$ and $\epsilon_1$ fixed it suffices to show that for any $\epsilon>0$ and for a given $A\in \mathcal{I}_{k_1}(\boldsymbol{u})$ the following occurs with probability $>1-\epsilon$ for $0\ll w \ll n$: 
 
 \begin{enumerate} 
\item If $A\subset \mathcal{N}^{\diamond}(\boldsymbol{u})$ then the proportion of the nodes in $A$ which are of type $\alpha$ is in the interval $[\tau -\epsilon_1, \tau+\epsilon_1]$. 
\item If $A$ is disjoint from  $\mathcal{N}^{\diamond}(\boldsymbol{u})$ then the  proportion of the nodes in $A$ which is of type $\alpha$ is in the interval $[0.5 -\epsilon_1, 0.5+\epsilon_1]$. 
\end{enumerate} 

Now (2) follows simply from the weak law of large numbers. Also,  (1) follows  from Chebyshev's inequality and standard results for the variance of a hypergeometric distribution. If $\phi$ is the proportion of the nodes in $A$ which are of type $\alpha$ then: 
\[ \textbf{P}(|\phi- \theta(w)|>\epsilon_1/2)<|A|^{-2}(\epsilon_1/2)^{-2} \mbox{Var}(\theta)= O(1)|A|^{-1}. \]


 \bibliographystyle{alpha}

\end{document}